\pgfplotsset{compat=1.17}
    \pgfplotsset{
    name nodes near coords/.style={
        every node near coord/.append style={
            name=#1-\coordindex,
            alias=#1-last,
        },
    },
    name nodes near coords/.default=coordnode
    }
\newtheorem{theorem}{Theorem}[section]
\newtheorem{corollary}[theorem]{Corollary}
\newtheorem{lemma}[theorem]{Lemma}
\newtheorem{observation}[theorem]{Observation}
\newtheorem{example}[theorem]{Example}
\newtheorem{definition}[theorem]{Definition}
\newtheorem{proposition}[theorem]{Proposition}
\newenvironment{proofof}[1]{{\vspace*{5pt} \noindent\bf Proof of #1:  }}{\hfill\rule{2mm}{2mm}\vspace*{5pt}}
\newtheorem*{rep@theorem}{\rep@title}
\newcommand{\newreptheorem}[2]{%
\newenvironment{rep#1}[1]{%
 \def\rep@title{#2 \ref{##1}}%
 \begin{rep@theorem}}%
 {\end{rep@theorem}}}
\newtheorem*{thm:three-agents}{Theorem~\ref{thm:three-agents}}
\newtheorem*{lemma:expanded-atom-path-general}{Lemma~\ref{lemma:expanded-atom-path-general}}
\newtheorem*{lemma:expanded-atom-path-k+1}{Lemma~\ref{lemma:expanded-atom-path-k+1}}
\newtheorem*{lemma:expanded-atom-path-k}{Lemma~\ref{lemma:expanded-atom-path-k}}
\newtheorem*{lemma:expanded-atom-path-k-1}{Lemma~\ref{lemma:expanded-atom-path-k-1}}
\newtheorem*{lemma:expanded-atom-path-k+1-goods}{Lemma~\ref{lemma:expanded-atom-path-k+1-goods}}
\newcommand{\bR}{{\mathbb{R}}}
\newcommand{\WPROP}{\mathsf{WPROP}}
\newcommand{\LL}{\mathsf{LL}}
\newcommand{\RR}{\mathsf{RR}}
\newcommand{\LR}{\mathsf{LR}}
\newcommand{\RL}{\mathsf{RL}}
\newcommand{\bX}{\mathbf{X}}
\newcommand{\bx}{\mathbf{x}}
\newcommand{\bs}{\mathbf{s}}
\newcommand{\bc}{\mathbf{c}}
\newcommand{\bv}{\mathbf{v}}
\newcommand{\bw}{\mathbf{w}}
\newcommand{\cI}{\mathcal{I}}
\newcommand{\splitting}{\textsf{Simple Splitting}\xspace}
\DeclareMathOperator*{\argmax}{argmax}
\DeclareMathOperator*{\argmin}{argmin}
\title{Tree Splitting Based Rounding Scheme for Weighted Proportional Allocations with Subsidy}
\author{
Xiaowei Wu\thanks{IOTSC, University of Macau. \{xiaoweiwu,yc17423\}@um.edu.mo. The authors are ordered alphabetically.}
\and Shengwei Zhou$^*$
}
\date{}
\begin{document}
\pagestyle{plain}
\maketitle

\begin{abstract}
We consider the problem of allocating $m$ indivisible items to a set of $n$ heterogeneous agents, aiming at computing a proportional allocation by introducing subsidy (money).
It has been shown by Wu et al. (WINE 2023) that when agents are unweighted a total subsidy of $n/4$ suffices (assuming that each item has value/cost at most $1$ to every agent) to ensure proportionality.
When agents have general weights, they proposed an algorithm that guarantees a weighted proportional allocation requiring a total subsidy of $(n-1)/2$, by rounding the fractional bid-and-take algorithm.
In this work, we revisit the problem and the fractional bid-and-take algorithm.
We show that by formulating the fractional allocation returned by the algorithm as a directed tree connecting the agents and splitting the tree into canonical components, there is a rounding scheme that requires a total subsidy of at most $n/3 - 1/6$. 
\end{abstract}
	
\section{Introduction}

We study the fair allocation problem of allocating a set of $m$ indivisible items $M$ to a group of $n$ heterogeneous agents $N$.
When all items $M$ give positive values to all agents, the problem refers to the fair allocation of \emph{goods}, which represents the situation of distributing resources or public goods.
Analogous but contrary, when all items $M$ give negative values (positive costs) to all agents, the problem refers to the fair allocation of \emph{chores}, which covers the task allocation among employees. 
In this paper, we mainly focus on the allocation of chores, where each agent $i$ has a cost function $c_i: 2^M \to \bR^+ \cup \{0\}$, but our main result also applies to the allocation of goods.
We consider the general weighted setting in which each agent $i$ has a $w_i > 0$ that represents her obligation to undertake the chores.
We normalize the weights of agents such that $\sum_{i\in N} w_i = 1$.
Traditionally, the fair allocation problem considers how to fairly allocate the items into $n$ bundles $(X_1, \ldots, X_n)$, while each agent receives exactly one bundle that guarantees some fairness criteria for her.
We say that agent $i$ has cost $c_i(S)$ for bundle $S \subseteq M$.
One natural and well-studied fairness notion is \emph{proportionality} (PROP)~\cite{steihaus1948problem}: an allocation is called weighted proportional (WPROP) if for all agent $i$, $c_i(X_i) \leq w_i \cdot c_i(M)$, where we refer to $w_i \cdot c_i(M)$ as the proportional share of agent $i$.

Unfortunately, when items are indivisible, PROP allocations are not guaranteed to exist, e.g., considering allocating a single item to two agents.
One possible way to circumvent this non-existence result is by introducing money to eliminate the inevitable unfairness.
Maskin~\cite{maskin1987fair} first proposed the setting called fair allocation with money that allows each agent to receive a subsidy $s_i \geq 0$ to eliminate unfairness, under which the objective is to minimize the total subsidized money.
Halpern and Shah~\cite{conf/sagt/HalpernS19} considered the fair allocation of goods with money, for another well-known fairness notion, called \emph{envy-freeness}~\cite{foley1967resource}.
Assuming that each item has a value of at most $1$ to each agent, they conjecture that a total subsidy of $n-1$ suffices to guarantee envy-freeness for the allocation of goods.
The conjecture was later verified by Brustle et al.~\cite{conf/sigecom/BrustleDNSV20}.

\paragraph{Allocation of Chores with Subsidy.}
The setting of fair allocation with subsidy can be naturally applied to the allocation of chore. 
Consider a situation where a factory, such as \emph{Foxconn}, wants to distribute a bunch of production tasks to $n$ different working units.
Different working units have different capacities, e.g., their proportional shares of the total tasks, and need to be compensated with money when assigned tasks exceeding their capacity, i.e., for overtime payment.
To ensure fairness (for a stable partnership), the mass of tasks each working unit receives, after removing the part that the subsidy offsets, should not exceed its proportional share of the total tasks.
On the other hand, to maximize benefits, the factory objects to minimize the total (additional) overtime payments, under the situation that all tasks are completed.
In other words, we aim to compute an allocation $(X_1,\ldots,X_n)$ and subsidies $(s_1,\ldots,s_n)$ such that $c_i(X_i) - s_i \leq w_i\cdot c_i(M)$ for all agent $i\in N$, with a small amount of total subsidy $\|\bs\|_1 = \sum_{i\in N} s_i$.
When all agents have equal weights, Wu et al.~\cite{conf/wine/WuZZ23} proposed an algorithm that computes a proportional allocation with a total subsidy of at most $n/4$.
Their algorithm is based on rounding a fractional allocation returned by the Moving Knife Algorithm.
However, when agents have general weights, their algorithm can only guarantee an upper bound of $(n-1)/2$ on the total subsidy.

\subsection{Our Results}
In this paper, we revisit the problem of fair allocation of chores with subsidy.
A commonly used framework for this problem is to first compute a fractional proportional allocation and then design a rounding scheme to turn the allocation integral with bounded total subsidy.
For the unweighted setting, the algorithm of Wu et al.~\cite{conf/wine/WuZZ23} that guarantees a total subsidy of at most $n/4$ is based on rounding a well-structured fractional allocation returned by the Moving Knife Algorithm.
Specifically, the algorithm computes a fractional allocation such that each agent receives a contiguous bundle of items in which at most two of them (the first and the last) are fractional.
However, the algorithm fails to generalize to the weighted setting where agents have arbitrary weights.
As they pointed out, any algorithm that requires each agent to receive a contiguous bundle of items, fails to compute a (fractional) weighted proportional allocation.

On the other hand, if we do not require any structural property on the fractional allocation, it is easy to compute a weighted proportional allocation.
For example, the Eating Algorithm that lets all agents eat their most preferred available item continuously at the same time, with speeds proportional to their weights, guarantees weighted envy-freeness, and thus weighted proportionality.
However, it is easy to find instances for which almost all items are shared by multiple agents in the fractional allocation returned by the Eating Algorithm. 
This complex item-sharing structure introduces major difficulties for devising the rounding scheme and bounding the incurred subsidy.
As a comparison, in the fractional allocation returned by the Moving Knife Algorithm, there are at most $n-1$ fractional items and each agent receives at most two such items.
Therefore, we consider the Fractional Bid-and-Take algorithm (FBAT) that computes fractional weighted proportional allocations with a certain level of canonical structure.
The same fractional algorithm is considered in~\cite{conf/wine/WuZZ23}, where it is shown that there are at most $n-1$ fractional items in the returned allocation.
However, in this fractional allocation, an agent may receive many fractional items.
Consequently, their rounding scheme can only guarantee an upper bound of $(n-1)/2$ on the total subsidy.
Whether strictly less subsidy is sufficient to guarantee (weighted) proportionality remains unknown and is proposed as an open problem.

In this paper, we revisit the fractional bid-and-take algorithm and answer this question affirmatively.
We characterize the structure of the fractional allocations returned by FBAT, by introducing the \emph{item-sharing} graph where agents are nodes and fractional items are edges.
We show that the item-sharing graph of the allocation returned by FBAT is a directed tree\footnote{As a comparison, the item-sharing graph defined by the allocation returned by the Moving Knife Algorithm is a path; while that for the Eating Algorithm can be any complex dense multi-graph.}.
We introduce a general rounding framework based on tree splitting, which allows us to split the directed tree graph into canonical components and round each component independently.
We show that there exists a rounding scheme that guarantees weighted proportionality with a total subsidy of at most $n/3-1/6$.

\medskip
\noindent
\textbf{Main Result.}
For the allocation of chores to a group of $n$ agents with general weights, we can compute in polynomial time a weighted proportional allocation with subsidy, where the total subsidy is at most $n/3-1/6$.

\smallskip

We remark that rounding the item-sharing graph via tree splitting requires striking a balance between the fineness of the splitting and tightness of the subsidy upper bound.
The finer we split the tree, the easier we can upper bound the total subsidy: but this upper bound is usually very loose.
For instance, if we take every edge as a component, we can easily recover the simple rounding scheme that guarantees a total subsidy of $(n-1)/2$~\cite{conf/wine/WuZZ23}.
On the other hand, regarding the tree as a whole when rounding items could help give tighter upper bounds but it would be very difficult to characterize the required subsidy.
One notable difficulty in designing the rounding scheme is to handle the fractional items that are shared by multiple agents, which we call the \emph{shattered items}.
A shatter item corresponds to multiple edges (that form a path, which we call an \emph{atom-path}) in the item-sharing graph.
Since every item should be rounded to exactly one agent, we need to ensure that all edges that correspond to the same shattered item belong to one component after splitting, which puts constraints on the feasible tree splittings.
To get around this difficulty, we design an atom-path-centric splitting that recursively splits the tree into an (near) atom-path and a collection of subtrees meeting certain constraints, and use mathematical induction to bound the total subsidy.

\smallskip

We further extend our tree splitting based rounding framework to the allocation of goods.
With some minor modifications to the analysis, we show that a total subsidy of at most $n/3$ suffices to guarantee a weighted proportional allocation of goods (the proof can be found in the appendix).


\subsection{Other Related Works}
Since the seminal work of Steinhaus~\cite{steihaus1948problem}, the fair allocation problem has received significant attention in the past decades.
Other than proportionality, another well-studied fairness notion is \emph{envy-freeness} (EF)~\cite{foley1967resource}, that is, no agent wants to exchange her bundle of items with another agent to improve her utility.
Since both EF and PROP are not guaranteed to exist when items are indivisible, a line of literature focused on the relaxations of these two fairness notions.
\emph{Envy-freeness up to one item} (EF1)~\cite{conf/sigecom/LiptonMMS04} and \emph{envy-freeness up to any item} (EFX)~\cite{journals/teco/CaragiannisKMPS19} are two widely studied relaxations of envy-freeness, which require that the envy between any two agents can be eliminated by removing some item; any item respectively.
Similarly, we have \emph{proportionality up to one item} (PROP1)~\cite{conf/sigecom/ConitzerF017,conf/aaai/BarmanK19} and \emph{proportionality up to any item} (PROPX)~\cite{journals/orl/AzizMS20}, which require the gap between the bundle's utility and the proportionality is bounded by some/any item, to be two relaxations of PROP.
In addition, \emph{maximin share} (MMS)~\cite{conf/bqgt/Budish10} is another popular relaxation of PROP.
For a more comprehensive overview of the fair allocation literature, we refer to the recent surveys by Amanatidis et al.~\cite{journals/ai/AmanatidisABFLMVW23} and Aziz et al.~\cite{journals/sigecom/AzizLMW22}.

\paragraph{Chore Allocation.}
Most of the fairness notions, such as EF1, EFX, PROP1, PROPX, and MMS, naturally extend to the analogous problem of chore allocation.
While the chore allocations are as common as the goods allocations in real-world scenarios, the problem receives less attention and usually tends to have worse results.
For the allocation of chores, both EF1~\cite{conf/approx/BhaskarSV21} and PROP1~\cite{conf/aaai/BarmanK19,conf/sigecom/ConitzerF017} allocations are guaranteed to exist and can be found in polynomial time.
In contrast to the goods allocations where PROPX allocation may not exist, PROPX allocations always exist for chores allocation~\cite{moulin2018fair,conf/www/0037L022}.
Regarding EFX allocations, they have only been shown to exist for some restricted settings~\cite{journals/aamas/AzizCIW22,gafni2023unified, conf/www/0037L022,journals/corr/abs-2308-12177,journals/ai/ZhouW24}.
While MMS allocations are not guaranteed to exist~\cite{conf/aaai/AzizRSW17,conf/wine/FeigeST21}, much attention focuses on approximate MMS allocations~\cite{aziz2022approximate,journals/teco/BarmanK20,conf/sigecom/HuangL21}, which led to the state-of-the-art ratio of $13/11$~\cite{conf/sigecom/HuangS23}.

\paragraph{Weighted Setting.}
Motivated by real-world applications where agents are usually not equally obliged, Chakraborty et al.~\cite{journals/teco/ChakrabortyISZ21} proposed the \emph{weighted} (or \emph{asymmetric}) setting.
Chakraborty et al.~\cite{journals/teco/ChakrabortyISZ21} introduced the \emph{weighted envy-freeness up to one item} (WEF1) for the allocation of goods and show that WEF1 allocations always exist and can be computed in polynomial time.
Lately, WPROP1 allocations have been proved to exist for chores~\cite{branzei2023algorithms}, and the mixture of goods and chores~\cite{journals/orl/AzizMS20}.
Li et al.~\cite{conf/www/0037L022} showed the existence and computation of WPROPX allocations of chores.
Recently, Wu et al.~\cite{conf/sigecom/0001Z023} and Springer et al.~\cite{conf/aaai/SpringerHY24} proposed algorithms for the computation of WEF1 allocations for chores.
The weighted version of MMS (WMMS) has also been studied for both goods~\cite{journals/jair/FarhadiGHLPSSY19} and chores~\cite{conf/ijcai/0001C019}.
In addition, generalizing MMS to the weighted setting also inspires the fairness notion of \emph{Anyprice Share} (APS).
Babioff et al.~\cite{conf/sigecom/BabaioffEF21} showed that there always exist $(3/5)$-approximate APS allocations for goods and $2$-approximate APS allocations for chores.
The approximate ratio for chores was recently improved to $1.733$ by Feige and Huang~\cite{conf/sigecom/FeigeH23}.

\paragraph{Fair allocation with Money.}
Beyond additive valuation functions, Brustle et al.~\cite{conf/sigecom/BrustleDNSV20} showed that a subsidy of $2(n-1)$ dollars per agent suffices to guarantee envy-freeness for general monotonic valuation functions.
Very recently, the upper bound was improved to $n-1$ dollars per agent by Kawase et al.~\cite{conf/aaai/KawaseMSTY24} while the total subsidy is at most $n(n-1)/2$.
Barman et al.~\cite{conf/ijcai/BarmanKNS22} considered the dichotomous valuations (each good has binary marginal value) and showed that envy-freeness can be guaranteed with a per-agent subsidy of at most $1$.
Regarding truthfulness, Goko et al.~\cite{goko2024fair} showed that for general monotone submodular valuations, there exists a truthful mechanism that guarantees envy-freeness with a subsidy of at most one dollar per agent.
A similar setting is called fair allocation with monetary transfers, introduced by Aziz~\cite{conf/aaai/000121}, that allows agents to transfer money to each other.
Instead of minimizing the total subsidy, their result focused on the characterization of allocations that are equitable and envy-free with monetary transfers.
When considering money as a divisible good, the setting of fair allocation with money is similar to the fair allocation of mixed divisible and indivisible items, which also receives much attention~\cite{journals/ai/BeiLLLL21,conf/approx/BhaskarSV21,conf/ijcai/LiLLT23}.
For a more detailed review of the existing works on mixed fair allocation, please refer to the recent survey~\cite{journals/corr/abs-2306-09564}.

\section{Preliminary} \label{sec:prelim}

In the following, we introduce the notation and the fairness notions.
We consider the problem of allocating $m$ indivisible chores $M$ to $n$ agents $N$ where each agent $i\in N$ has an additive cost function $c_i:2^M \to \bR^+ \cup \{0\}$.
We assume that the agents are weighted, i.e., each agent $i\in N$ has a weight $w_i > 0$ (that represents her obligation for undertaking the chores) and $\sum_{i\in N} w_i = 1$. 
We denote by $\bw = (w_1, \ldots, w_n)$ the weights of agents.
A cost function $c_i$ is said to be {\em additive} if for any bundle $S \subseteq M$ we have $c_i(S) = \sum_{e\in S} c_i(\{e\})$.
For convenience, we use $c_i(e)$ to denote $c_i(\{e\})$.
We use $\bc = (c_1, . . . , c_n)$ to denote the cost functions of agents.
We assume w.l.o.g. that each item has cost at most one to each agent, i.e. $c_i(e) \leq 1$ for any $i\in N$, $e\in M$.
Given an instance $\cI = (M,N,\bw,\bc)$, an allocation of $\cI$ is represented by an $n$-partition $\bX = (X_1,\ldots,X_n)$ of the items $M$, where $X_i \cap X_j = \emptyset$ for all $i \neq j$ and $\cup_{i\in N} X_i = M$.
In allocation $\bX$, agent $i\in N$ receives bundle $X_i$.
For convenience of notation, given any set $X\subseteq M$ and $e\in M$, we use $X+e$ and $X-e$ to denote $X\cup\{e\}$ and $X\setminus\{e\}$, respectively.

\begin{definition}[WPROP]
    An allocation $\bX$ is called weighted proportional (WPROP) if $c_i(X_i) \leq w_i \cdot c_i(M)$ for all $i\in N$.
\end{definition}

We use $\WPROP_i$ to denote agent $i$'s proportional share, i.e., $\WPROP_i = w_i \cdot c_i(M)$.

We use $s_i \geq 0$ to denote the subsidy we give to agent $i\in N$, $\bs = (s_1, \ldots, s_n)$ to denote the set of subsidies, and $\|\bs\|_1 = \sum_{i\in N} s_i$ to denote the total subsidy.

\begin{definition}[WPROPS] \label{def:PROPS}
    An allocation $\bX$ with subsidies $\bs = (s_1, \ldots, s_n)$ is called weighted proportional with subsidies (WPROPS) if for any $i\in N$ we have $c_i(X_i) - s_i \leq \WPROP_i$.
\end{definition}

Given any instance $\cI = (M, N, \bw, \bc)$, we aim to find WPROPS allocation $\bX$ with a small amount of total subsidy.
Unlike envy-freeness with subsidy\footnote{An allocation with subsidies is weighted envy-free with subsidy if $c_i(X_i)/w_i - s_i \leq c_i(X_j)/w_j - s_j$ for all $i,j\in N$. We remark that even in the unweighted setting, not all allocations are envy-freeable. In other words, there exist allocations that cannot guarantee envy-freeness even if we introduce unbounded subsidies to the agents.}, we can turn any allocation $\bX$ into weighted proportional by introducing subsidies to the agents as follows:
\begin{equation*}
    s_i = \max \{c_i(X_i) - \WPROP_i, 0\}, \qquad \forall i\in N.
\end{equation*}

Therefore, in the rest of this paper, we mainly focus on computing the allocation $\bX$. The subsidy to each agent will be automatically decided by the above equation.
Before we dive into our algorithm and analysis, we first introduce a reduction from general instances to identical ordering instances.
An instance is called \emph{identical ordering} (IDO) if all agents have the same ordinal preference over all items.

\begin{definition}[Identical Ordering (IDO) Instances]
    An instance is called identical ordering (IDO) if all agents have the same ordinal preference on the items, i.e., $c_i(e_1) \geq c_i(e_2) \geq \cdots \geq c_i(e_m)$ for all $i\in N$.
\end{definition}

In the unweighted setting, Wu et al.~\cite{conf/wine/WuZZ23} showed that if there exists an algorithm for computing PROPS allocations for IDO instances, then it can be converted to an algorithm that works for general instances while preserving the subsidy requirement.
The IDO reduction is a general technique and is widely used in the computation of approximate MMS allocations~\cite{journals/teco/BarmanK20,journals/aamas/BouveretL16,conf/sigecom/HuangL21,conf/sigecom/HuangS23}, PROP1/PROPX allocations~\cite{journals/orl/AzizMS20,conf/www/0037L022}.
We show that the reduction in~\cite{conf/wine/WuZZ23} can be naturally generalized to the weighted setting.
With this reduction, in the rest of this paper, we only consider IDO instances.
For completeness, we repeat the proof in Appendix~\ref{sec:missing-proofs}.

\begin{lemma} \label{lemma:reduction-to-IDO}
    If there exists a polynomial time algorithm that given any IDO instance computes a WPROPS allocation with total subsidy at most $\alpha$, then there exists a polynomial time algorithm that given any instance computes a WPROPS allocation with total subsidy at most $\alpha$.
\end{lemma}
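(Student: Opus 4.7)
The plan is to reduce an arbitrary instance $\mathcal{I} = (M, N, \bw, \bc)$ to an IDO instance $\mathcal{I}' = (M, N, \bw, \bc')$ by sorting each agent's costs, solving $\mathcal{I}'$ with the assumed algorithm, and then rounding the resulting allocation back to one for $\mathcal{I}$ in a way that can only decrease each agent's cost. Formally, for every agent $i$, define $c_i'(e_j)$ to be the $j$-th largest value in the multiset $\{c_i(e) : e \in M\}$, so that $c_i'(e_1) \geq c_i'(e_2) \geq \cdots \geq c_i'(e_m)$ and all agents share the common order $e_1, \ldots, e_m$. Crucially, $c_i'(M) = c_i(M)$, so $\WPROP_i$ is unchanged between the two instances.

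Running the assumed polynomial-time IDO algorithm on $\mathcal{I}'$ yields an allocation $\bX' = (X_1', \ldots, X_n')$ whose induced subsidies $s_i' = \max\{c_i'(X_i') - \WPROP_i, 0\}$ satisfy $\sum_i s_i' \leq \alpha$. To build an allocation $\bX$ of the original instance, I would process the items in reverse IDO order $e_m, e_{m-1}, \ldots, e_1$. When processing $e_j$, identify the unique agent $i$ with $e_j \in X_i'$ and assign to her the cheapest (minimum-$c_i$) item among the items of $M$ not yet assigned. This procedure is clearly polynomial-time and produces a valid $n$-partition of $M$.

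The heart of the argument is the pointwise inequality $c_i(X_i) \leq c_i'(X_i')$, which I would establish by a pigeonhole bound on each of agent $i$'s picks. Let $J_i = \{j : e_j \in X_i'\}$. When step $j$ is reached, exactly $m - j$ items have already been handed out, so $j$ items remain in $M$. Sorting $M$ by $c_i$ in increasing order as $\check{e}_1, \check{e}_2, \ldots, \check{e}_m$ (so that $c_i(\check{e}_k) = c_i'(e_{m-k+1})$), among the first $m-j+1$ of these at most $m-j$ can have been removed, so at least one survives. Hence the minimum cost of a remaining item is at most $c_i(\check{e}_{m-j+1}) = c_i'(e_j)$, and the item agent $i$ picks at this step has cost at most $c_i'(e_j)$. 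Summing over $j \in J_i$,
\[
c_i(X_i) \;\leq\; \sum_{j \in J_i} c_i'(e_j) \;=\; c_i'(X_i').
\]

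Finally, since $\WPROP_i$ coincides for $\mathcal{I}$ and $\mathcal{I}'$, the induced subsidy $s_i = \max\{c_i(X_i) - \WPROP_i, 0\}$ is at most $s_i'$ for every $i$, and therefore $\|\bs\|_1 \leq \|\bs'\|_1 \leq \alpha$, completing the reduction. The main conceptual hurdle is getting the processing order and the tie-breaking direction right: one must process items from the cheap end to the expensive end of the IDO order and let each agent take her \emph{cheapest} available item, so that the pigeonhole step produces an upper bound on $c_i$ rather than a lower bound. Once that orientation is fixed, the rest is bookkeeping.
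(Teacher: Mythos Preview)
Your proposal is correct and follows essentially the same approach as the paper: construct the IDO instance by sorting each agent's costs, run the assumed algorithm, then process items from $e_m$ down to $e_1$ letting the owner in $\bX'$ pick her cheapest remaining original item, and use the pigeonhole bound to conclude $c_i(X_i)\le c_i'(X_i')$. Your write-up makes the pigeonhole step a bit more explicit than the paper's, but the argument is the same.
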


\section{Technical Overview}

Similar to previous results on the unweighted setting, our algorithm has two main steps: we first compute a fractional WPROP allocation, in which a small number of items are fractionally allocated; then we find a way to round the fractional allocation to an integral one.
Since some agents may have costs exceeding their proportional share after rounding, we offer subsidies to these agents.

For the unweighted setting, it has been shown that there exists a rounding scheme with a total subsidy of no more than $n/4$, based on the fractional allocation returned by the Moving Knife Algorithm.
However, as pointed out by Wu et al.~\cite{conf/wine/WuZZ23}, the algorithm fails to compute fractional WPROP allocations when agents have different weights.\footnote{In fact, any algorithm that tries to allocate each agent a continuous interval (as the Moving Knife Algorithm computes) fails in the weighted setting. See~\cite{conf/wine/WuZZ23} for a counter-example.}
Hence, generalizing the results for the unweighted setting to the weighted setting is non-trivial.
For the weighted setting, they consider the Fractional Bid-and-Take Algorithm (FBTA) and show that there exists a rounding scheme with a total subsidy of no more than $(n-1)/2$.
Whether the fractional WPROP allocation returned by FBTA allows total subsidy strictly smaller than $(n-1)/2$ remains unknown.
In this paper, we revisit the algorithm and give a better characterization of the fractional WPROP allocation returned by FBTA.
We show that the returned fractional allocation forms a tree structure connecting the agents and turning the allocation integral requires a rounding scheme on the tree.
This provides a connection between the fair allocation problem with the graph theory.
Before our study, the more extensive connection with graph theory mainly focused on comparison-based fair allocations (when agents are regarded as nodes and the envies between agents are regarded as edges), e.g., ~\emph{envy graph} and~\emph{champion graph}~\cite{conf/sigecom/ChaudhuryGMMM21}.
We believe that this special tree structure, naturally decided by the fractional allocation, might be of independent interest.

\subsection{Fractional Bid-and-Take Algorithm}\label{ssec:FBTA}

We first introduce the notation for representing a fractional allocation.
We use $x_i(e) \in [0,1]$ to denote the fraction of item $e$ allocated to agent $i\in N$.
We use $\bx_i = \{x_i(e)\}_{e\in M} \in [0,1]^{m}$ to denote the fractional bundle allocated to agent $i\in N$ and $\bx = (\bx_1,\bx_2,\ldots,\bx_n)$ to denote the fractional allocation.
Note that the allocation is complete if and only if $\sum_{i\in N} x_i(e) = 1 $ for every $e \in M$.
In the fractional allocation $\bx$, we have $c_i(\bx_i) = \sum_{e \in M} (x_i(e) \cdot c_i(e))$ for each agent $i\in N$.
We call $\bx$ an integral allocation if $x_i(e) \in \{0,1\}$ for every $i\in N$ and $e\in M$.
To avoid ambiguity, we consistently use $\bX$ to refer to an integral allocation.

\paragraph{The Algorithm.}
We consider the IDO instance\footnote{The IDO input assumption here is not necessary. For any general instance, the Fractional Bid-and-Take algorithm computes a fractional WPROP allocation (see~\cite{conf/wine/WuZZ23}). Here we fix the instance to be IDO for ease of analysis for the later rounding part.} with $c_i(e_1) \leq c_i(e_2) \leq \cdots \leq c_i(e_m)$ and allocate the items one by one in a continuous manner.
We initialize all agents to be active.
For each item $e_j\in M$, we continuously allocate $e_j$ to the active agent $i$ with the minimum $\frac{c_i(e_j)}{c_i(M)}$\footnote{Recall that in this paper we normalize cost functions such that $c_i(e) \leq 1$ for all $i\in N, e\in M$.
When applying the different normalization s.t. $c_i(M) = 1$ for all $i\in N$, the algorithm corresponds to greedily allocating each item to the agent with minimum cost on it, which is consistent with the Bid-and-Take algorithm in~\cite{conf/www/0037L022}.}, until either $e_j$ is fully allocated or agent $i$ reaches her proportional share, i.e., $c_i(\bx_i) = \WPROP_i$. 
Besides, once an agent reaches her proportional share, we immediately inactivate her.
The algorithm terminates when all items are fully allocated.
We summarize the detailed algorithm in Algorithm~\ref{alg:FBTA}.

\begin{algorithm}[htbp]
    \caption{Fractional Bid and Take Algorithm (FBTA)}
    \label{alg:FBTA}
    \KwIn{An IDO instance $(M,N,\bw,\bc)$ with $c_i(e_1) \leq c_i(e_2) \leq \cdots \leq c_i(e_m)$ for all $i\in N$}
    $X_i \gets \mathbf{0}^m, \forall i\in N$  \qquad \qquad \tcp{current fractional bundle}
    $A\gets N$ \qquad \qquad  \qquad \qquad \tcp{ the set of active agents}
    $\mathbf{z} \gets \mathbf{1}^m$ \qquad \qquad  \qquad \qquad\tcp{remaining fraction of the items}
    $j\gets 1$ \qquad \qquad  \qquad \qquad \> \> \tcp{item to be allocated}
    \While{$j \le m$}{
        Let $i \gets \argmin_{i'\in A} \frac{c_{i'}(e_j)}{c_{i'}(M)}$; \qquad \qquad  \qquad \qquad \tcp{break tie lexicographically}\
        \If{$c_i(\bx_i) +z_j\cdot c_i(e_j)>\WPROP_i$}{
            $x_i(e_j) \gets \frac{\WPROP_i-c_i(\bx_i)}{c_i(e_j)}$; \qquad\qquad\qquad\quad \tcp{it is guaranteed that $x_i(e_j)\leq 1$} 
            $z_j \gets z_j - x_i(e_j)$, $A \gets A\setminus \{i\}$\;
        }
        \Else{
            $x_i(e_j) \gets z_j$, $z_j \gets 0$, $j\gets j+1$\;
        }    
    }
    \KwOut{A fractional allocation $\bx  = (\bx_1,\ldots, \bx_n)$.}
\end{algorithm}

\begin{lemma}[\cite{conf/wine/WuZZ23}]\label{lemma: wprop}
    The output allocation $\bx$ is a fractional WPROP allocation. 
\end{lemma}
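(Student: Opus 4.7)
The plan is to verify two properties of the output $\bx$: that it is a valid fractional allocation (every item is fully allocated), and that it satisfies $c_i(\bx_i) \leq \WPROP_i$ for every agent $i$.

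The WPROP inequality I would establish as an invariant maintained across the while loop. At the start, $c_i(\bx_i) = 0 \le \WPROP_i$ trivially. In any iteration that enters the \emph{if}-branch, the algorithm sets $x_i(e_j) = (\WPROP_i - c_i(\bx_i))/c_i(e_j)$, so $c_i(\bx_i)$ jumps to exactly $\WPROP_i$ and then $i$ is removed from $A$; the if-guard $c_i(\bx_i)+z_j c_i(e_j) > \WPROP_i$ together with $z_j\le 1$ also certifies $x_i(e_j)\le 1$, so the fractional assignment is well-defined. In the \emph{else}-branch, the negation of the guard directly gives $c_i(\bx_i)+z_j c_i(e_j)\le \WPROP_i$, so assigning $x_i(e_j)=z_j$ preserves the invariant. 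Since inactive agents are never touched again, the inequality holds at termination.

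The harder step is showing the algorithm actually terminates with every item fully assigned, i.e., that $A$ is never emptied while some $z_j>0$ remains. I would argue by contradiction: if the while loop stalls at some $j$, then every agent $i$ has already been removed from $A$, forcing $c_i(\bx_i) = \WPROP_i = w_i c_i(M)$ for all $i$. Summing yields
\[
\sum_{i\in N} c_i(\bx_i) \;=\; \sum_{i\in N} w_i c_i(M) \;=\; \sum_{e\in M}\sum_{i\in N} w_i c_i(e),
\]
while the left-hand side equals $\sum_{k\le j}\sum_i x_i(e_k) c_i(e_k)$ and involves only a strict fraction of $e_j$ plus the items $e_1,\ldots,e_{j-1}$. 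The plan is to invoke the greedy selection rule, namely that each positive $x_i(e_k)$ is assigned to some $i\in\argmin_{i'\in A} c_{i'}(e_k)/c_{i'}(M)$, together with $\sum_i x_i(e_k)\le 1$, to upper bound $\sum_i x_i(e_k)c_i(e_k)$ by $\sum_i w_i c_i(e_k)$ up to a defect that accounts for the missing items $\{e_j\text{ (unallocated part)}, e_{j+1},\ldots,e_m\}$. Combining these bounds forces $\sum_i w_i c_i(e_k)=0$ for all $k\ge j$, contradicting $z_j>0$ (since this would make the stall vacuous anyway).

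The main obstacle is precisely this last step: the greedy rule minimizes the \emph{normalized} ratio $c_i(e)/c_i(M)$ rather than the absolute cost, so the averaging inequality linking $\sum_i x_i(e_k) c_i(e_k)$ to $\sum_i w_i c_i(e_k)$ requires carefully exploiting the fact that the minimum ratio is at most the $w$-weighted average of the ratios over $A$. A cleaner backup plan is an induction on $n$: let $i^*$ be the first agent deactivated; after this moment the algorithm is effectively FBTA on $N\setminus\{i^*\}$ with the residual item (the leftover fraction of the current $e_j$) prepended to $e_{j+1},\ldots,e_m$ and the original caps $\WPROP_i$ unchanged, so the inductive hypothesis yields completeness on the rest. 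The base case $n=1$ is immediate since the single agent with $w_1=1$ has cap $c_1(M)$, which trivially accommodates all items.
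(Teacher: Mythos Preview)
Your invariant argument for $c_i(\bx_i)\le \WPROP_i$ is fine and matches the paper. The gap is in the completeness step.

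Your primary plan compares $\sum_i c_i(\bx_i)$ with $\sum_e\sum_i w_i c_i(e)$ and hopes to bound $\sum_i x_i(e_k)c_i(e_k)$ by $\sum_i w_i c_i(e_k)$ item-by-item. This inequality is simply false in general: the greedy rule picks the agent with minimum \emph{normalized} cost $c_i(e)/c_i(M)$, and such an agent can have large absolute cost $c_i(e)$ if $c_i(M)$ is large. So the per-item averaging you need does not follow from the selection rule, and you already sense this in your ``main obstacle'' paragraph. The backup induction has a different problem: after removing the first-deactivated agent $i^*$, the residual run is \emph{not} an FBTA instance on $N\setminus\{i^*\}$. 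The caps $\WPROP_i=w_i c_i(M)$ are not proportional shares of the residual item set, and the selection rule still uses the original $c_i(M)$ in the denominator. So the inductive hypothesis (which is about genuine FBTA instances) does not apply, and strengthening it to cover ``FBTA with arbitrary caps and arbitrary fixed normalizers'' would require an entirely new argument.

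The paper's fix is a one-line change of viewpoint: instead of summing the agents' own costs, fix the \emph{last} agent $j$ to become inactive and measure everything with $c_j$. Because $j$ was active whenever any fraction of any item was handed to any agent $i$, the selection rule gives $c_i(e)/c_i(M)\le c_j(e)/c_j(M)$ for that fraction, hence $c_j(\bx_i)/c_j(M)\ge c_i(\bx_i)/c_i(M)=w_i$ for every $i$. Summing over $i$ yields $\sum_i c_j(\bx_i)/c_j(M)\ge \sum_i w_i=1$, while the left side is strictly below $1$ since some item mass is still unallocated. That is the contradiction; no induction and no cross-agent cost summation is needed.
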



Note that in the above WPROP allocation, the number of fractional items is at most $n-1$, since the number of fractional items can increase only when some agent becomes inactive, and there is at least one active agent when the algorithm terminates.
However, in the allocation, an agent might receive many fractional items, which introduces difficulties in designing the rounding scheme.

Toward a better understanding of the FBTA, we give an example instance of the chores allocations to weighted agents.
Throughout the paper, whenever we use $\cI^*$, we mean the instance described as follows.
\begin{example} \label{example:fractional-allocation}
    Consider the following instance $\mathcal{I}^*$ given by $(M, N, \bw, \bc)$.
    The agents are $N = \{1,2,3,4,5,6\}$, the items are $M=\{e_1, e_2, e_3, e_4, e_5, e_6\}$ and $w_1 = w_2 = w_3 = 1/12, w_4 = 1/6, w_5 = 1/4, w_6 = 1/3$ are the weights for agents respectively.
    The cost functions are additive with the costs of the agents for each item shown in Table~\ref{tab:exp-instance}.
    Note that the instance is IDO and we have $\WPROP_1 = \WPROP_2 = \WPROP_3 = 0.4$, $ \WPROP_4 = 0.9$, $\WPROP_5 = 1.5$, and $\WPROP_6 = 1.7$.
    
    \begin{table}[htbp]
        \centering
        \begin{tabular}{c|c|c|c|c|c|c}
            &  $e_1$ & $e_2$ & $e_3$ & $e_4$ & $e_5$ & $e_6$\\ \hline
            agent 1   & $0.7$ & $0.7$ & $0.7$ & $0.7$ & $1$ & $1$\\
            agent 2   & $0.8$ & $0.8$ & $0.8$ & $0.8$ & $0.8$ & $0.8$\\
            agent 3   & $0.7$ & $0.8$ & $0.8$ & $0.8$ & $0.8$ & $0.9$\\
            agent 4   & $0.8$ & $0.8$ & $0.8$ & $1$ & $1$ & $1$\\
            agent 5   & $1$ & $1$ & $1$ & $1$ & $1$ & $1$\\
            agent 6   & $0.8$ & $0.8$ & $0.8$ & $1$ & $1$ & $1$
        \end{tabular}
        \smallskip
        \caption{Example instance $\mathcal{I}^*$ with $6$ agents and $6$ items.}
        \label{tab:exp-instance}
    \end{table}

    After running Algorithm~\ref{alg:FBTA} on instance $\mathcal{I}^*$, we obtain the following fractional allocation (see Table~\ref{tab:fractional-allocation}), where the number in each cell corresponds to the fraction of item the agent receives.
    
    \begin{table}[htbp]
        \centering
        \begin{tabular}{c|c|c|c|c|c|c}
            &  $e_1$ & $e_2$ & $e_3$ & $e_4$ & $e_5$ & $e_6$\\ \hline
            agent 1   & $4/7$ & $0$ & $0$ & $0$ & $0$ & $0$\\
            agent 2   & $0$ & $1/2$ & $0$ & $0$ & $0$ & $0$\\
            agent 3   & $3/7$ & $1/8$ & $0$ & $0$ & $0$ & $0$\\
            agent 4   & $0$ & $3/8$ & $3/4$ & $0$ & $0$ & $0$\\
            agent 5   & $0$ & $0$ & $0$ & $1$ & $1/2$ & $0$\\
            agent 6   & $0$ & $0$ & $1/4$ & $0$ & $1/2$ & $1$
        \end{tabular}
        \smallskip
        \caption{The fractional allocation returned by the algorithm.}
        \label{tab:fractional-allocation}
    \end{table}
\end{example}

\subsection{Item-sharing Graph and Tree Structure}
In this section, we introduce a graph called \emph{item-sharing graph} that characterizes the structure of the fractional allocations.
Given a fractional allocation, we say that an item is ``shared'' by agents $i$ and $j$ if $x_i(e) > 0$ and $x_j(e) > 0$.
For any item $e$ that is shared by agents $i,j$, we can view $e$ to be an edge between $i$ and $j$, while agents are the nodes.
Naturally, such a sharing relation would induce a graph in which two agents are connected if they share an item (see Figure~\ref{fig:example-bad-graph} for an example.)
However, in such a graph, if an item is shared by many agents, a complete graph will be formed among these agents, which introduces complications to our rounding scheme.
Furthermore, we notice that even if an item $e$ is shared by multiple agents, the allocation happens in sequence.
For each agent $i\in N$, we denote by $e^i$ the last item that agent $i$ receives.
If agent $i$ reaches her (weighted) proportionality and turns inactive before she takes item $e^i$ completely, the algorithm will allocate (the remaining part of) item $e^i$ to some active agent $j \neq i$.
We call agent $j$ a \emph{successor} of agent $i$ if the algorithm tries to allocate the remaining part of item $e^i$ to $j$.
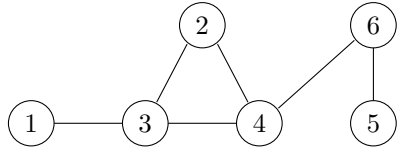
\begin{figure}[htb]
        \centering
        \begin{tikzpicture}	
            \draw (-1.5,0) circle(0.3);    \node at (-1.5,0) {$1$};
            \draw (-1.2,0)--(-0.3,0);
            
            \draw (0,0) circle(0.3);    \node at (0,0) {$3$};
            \draw (0.3,0)--(1.2,0);
            \draw (0.75,1.3) circle(0.3);    \node at (0.75,1.3) {$2$};
            \draw (0.55,1.05)--(0.15,0.28);     \draw (0.95,1.05)--(1.35,0.28);
            \draw (1.5,0) circle(0.3);    \node at (1.5,0) {$4$};
            
            \draw (1.75,0.20)--(2.75,1.1); 
            \draw (3,0) circle(0.3);    \node at (3,0) {$5$};
            \draw (3,0.3)--(3,1);
            
            \draw (3,1.3) circle(0.3);    \node at (3,1.3) {$6$};
            
        \end{tikzpicture}
        \caption{When simply viewing items as edges between agents, the graph might be complex when an item is shared by more than two agents, e.g., $e_2$ is (fractional) allocated to agents $2,3,4$.}
        \label{fig:example-bad-graph}
    \end{figure}
    
\begin{definition}[successor]
    If when agent $i$ turns inactive the item $e^{i}$ is not fully allocated and agent $j$ is the next agent who was chosen to take $e^{i}$, we call $j$ a \emph{successor} of agent $i$ with respect to $e^{i}$.
\end{definition}

\paragraph{Item-sharing Graph.}
Let the set of nodes be the agents, where each agent has a directed edge to its successor (if any).
We abuse the notation slightly and use $e^i$ to refer to the edge from agent $i$ to her successor.
We call the resulting graph $G= (N, E)$ the \emph{item-sharing graph} (see Figure~\ref{fig:example-good-graph} for an example).
Note that since each node has at most one outgoing edge, $G= (N, E)$ is a forest and $|E| \leq n-1$.

    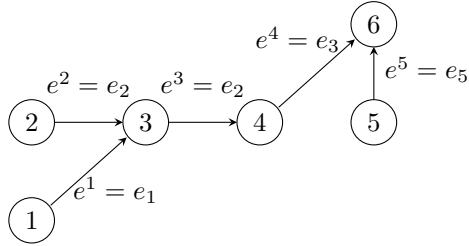
\begin{figure}[!h]
        \centering
        \begin{tikzpicture}	
            \draw (-1.5,0) circle(0.3);    \node at (-1.5,0) {$2$};
            \draw[-stealth] (-1.2,0)--(-0.3,0);
            \node at (-0.75,0.5) {$e^2 = e_2$};

            \draw (-1.5,-1.3) circle(0.3);     \node at (-1.5,-1.3) {$1$};
            \draw[-stealth] (-1.25,-1.1)--(-0.25,-0.2);
            \node at (-0.4,-0.9) {$e^1 = e_1$};
            
            \draw (0,0) circle(0.3);    \node at (0,0) {$3$};
            \draw[-stealth] (0.3,0)--(1.2,0);
            \node at (0.75,0.5) {$e^3 = e_2$};

            \draw (1.5,0) circle(0.3);    \node at (1.5,0) {$4$};
            \draw[-stealth] (1.75,0.20)--(2.75,1.1); 
            \node at (2,1.1) {$e^4 = e_3$};
            
            \draw (3,0) circle(0.3);    \node at (3,0) {$5$};
            \draw[-stealth] (3,0.3)--(3,1);
            \node at (3.7,0.7) {$e^5 = e_5$};
            
            \draw (3,1.3) circle(0.3);    \node at (3,1.3) {$6$};
            
        \end{tikzpicture}
        \caption{The item-sharing graph for the returned allocation on instance $\cI^*$.}
        \label{fig:example-good-graph}
    \end{figure}

Note that each edge in $G$ also corresponds to an item that is shared by the two endpoint agents.
Moreover, it is possible that multiple edges correspond to the same item, e.g., $e^2 = e^3 = e_2$ in the instance $\cI^*$ (see Figure~\ref{fig:example-good-graph}).
Hence the set of edges $E$ forms a (multi)set of fractional items.
By regarding the successor of each agent as its parent, we obtain a collection of rooted directed trees, where each agent with no outgoing edge is a root.
For example, the graph $G^*$ of $\cI^*$ is a tree rooted at agent $6$ (see Figure~\ref{fig:example-tree}).
We remark that an agent might have no outgoing edge for multiple reasons: 1) the agent is active when the algorithm terminates, or 2) the agent reaches her proportionality exactly when she fully consumed the item $e^i$.
%

\begin{observation}[Tree Structure]\label{obser:tree-structure}
    Given an instance $\cI = (M, N, \bw, \bc)$, let $\bx$ be the (fractional) allocation returned by the FBTA and $G = (N, E)$ be the corresponding item-sharing graph, then $G$ is a forest.
\end{observation}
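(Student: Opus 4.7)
The plan is to prove the observation via two structural properties of the directed graph $G$ that together force each weakly connected component to be a tree: \textbf{(i)} every vertex has out-degree at most one, and \textbf{(ii)} $G$ contains no directed cycle. Given these, any weakly connected component on $k$ vertices contains at most $k$ directed edges (each contributed by a unique source in the component); if it contained exactly $k$, every vertex in the component would have out-degree one, and following out-edges from any starting vertex would revisit some vertex by pigeonhole, yielding a directed cycle and contradicting (ii). Hence each component contains at most $k-1$ edges and is a tree, so $G$ is a forest.

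Property (i) is immediate from the definition of successor: the algorithm removes each agent $i$ from the active set $A$ at most once, and at that single moment it selects at most one active agent to take over the remaining fraction of $e^i$. For property (ii), I would track the moment at which each agent is removed from $A$ during the execution of Algorithm~\ref{alg:FBTA}. Let $t_i$ denote this inactivation time for agent $i$, setting $t_i := +\infty$ if $i$ remains active when the algorithm terminates. Whenever $i \to j$ is a directed edge in $G$, the agent $j$ is chosen to continue consuming the leftover of $e^i$ immediately after $i$ leaves $A$; in particular $j \in A$ at the moment $i$ is deactivated, and $j$ can only be deactivated in some strictly later iteration, giving $t_j > t_i$. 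Thus every directed edge strictly increases the inactivation timestamp, which rules out directed cycles and establishes (ii).

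I do not expect a significant technical obstacle; the proof is essentially bookkeeping once the right time-monotonicity quantity is isolated. The only subtle case is when agent $i$ reaches her proportional share exactly as she finishes $e^i$, so that $e^i$ is fully allocated the instant $i$ becomes inactive; by definition no successor is then recorded and $i$ simply contributes no outgoing edge, which is consistent with both (i) and (ii). Being careful with this boundary case, together with the fact that at least one agent remains active when FBTA terminates, also explains why each tree has a well-defined root (a vertex with out-degree zero) and confirms the $|E| \le n-1$ bound noted immediately before the observation.
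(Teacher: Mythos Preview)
Your proof is correct and follows essentially the same reasoning the paper relies on; the paper simply asserts the observation in one line (``since each node has at most one outgoing edge, $G=(N,E)$ is a forest and $|E|\le n-1$''), leaving implicit the temporal/acyclicity argument that you spell out carefully via the inactivation timestamps $t_i$. Your version is more rigorous but not a different approach.
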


    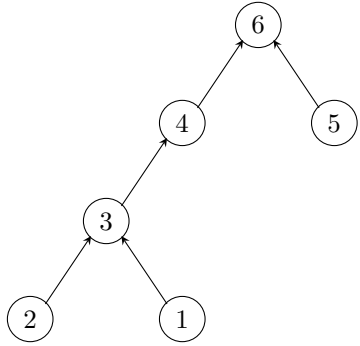
\begin{figure}[htb]
        \centering
        \begin{tikzpicture}	
            \draw (1,1.3) circle(0.3);    \node at (1,1.3) {$6$};
            \draw[-stealth] (0.2,0.2)--(0.8,1.1);     \draw[-stealth] (1.8,0.2)--(1.2,1.1);
            \draw (0,0) circle(0.3);    \node at (0,0) {$4$};
            \draw (2,0) circle(0.3);    \node at (2,0) {$5$};

            \draw[-stealth] (-0.8,-1.1)--(-0.2,-0.2);
            \draw (-1,-1.3) circle(0.3);    \node at (-1,-1.3) {$3$};
            \draw[-stealth] (-1.8,-2.4)--(-1.2,-1.5);     \draw[-stealth] (-0.2,-2.4)--(-0.8,-1.5);
            \draw (-2,-2.6) circle(0.3);    \node at (-2,-2.6) {$2$};
            \draw (0,-2.6) circle(0.3);    \node at (0,-2.6) {$1$};
        \end{tikzpicture}
        \caption{The item-sharing graph $G^*$ of instance $\cI^*$ is a tree with root of agent $6$.}
        \label{fig:example-tree}
    \end{figure}

\paragraph{Remark.}
We remark that if we construct the item-sharing graph based on a fractional allocation returned by the Moving Knife Algorithm (in the unweighted setting), then the constructed graph $G$ is a path since each agent has at most one outgoing edge and one incoming edge.
Hence, for any item-sharing graph $G$ that is a path, we can naturally generalize the results of~\cite{conf/wine/WuZZ23} to the weighted setting, which provides an (optimal) rounding scheme that bounds the total subsidy in $n/4$.

\subsection{Tree Splitting and Rounding}
In this section, we introduce the rounding scheme to turn the fractional allocation $\bx$ returned by Algorithm~\ref{alg:FBTA} to an integral allocation $\bX$.
We recall that the subsidy to each agent is automatically decided by the rounded (integral) allocation $\bX$.
We call a fractional item \emph{rounded} to agent $i$ if, in the integral allocation, the item is fully allocated to agent $i$.
We only round item $e$ to some agent $i$ with $x_i(e) > 0$ in the fractional allocation $\bx$.
Our goal is to find a rounding scheme with limited total subsidy.

Note that when the item-sharing graph $G$ is a forest, the trees in $G$ are independent to each other.
In other words, the set of fractional items of a tree is disjoint from those of any other tree.
For any two trees, we can round them independently.
Hence, in the remaining part of this paper, we assume w.l.o.g. that the item-sharing graph $G$ is a tree (instead of a forest).
We show that by splitting the tree into (edge-disjoint) subtrees and rounding each subtree independently, we can upper bound the total required subsidy to achieve proportionality.
In the splitting, we regard each subtree as a collection of edges.
We denote the size of each tree by the number of edges it contains.
By splitting based on edges, we obtain a collection of subtrees, while agents may appear in multiple trees.
Note that during the splitting, we have to ensure that edges corresponding to the same item belong to the same subtree.
As a warm-up, we consider the simple splitting that splits the graph $G$ into multiple paths, where each path contains all edges that correspond to the same fractional item (See Figure~\ref{fig:decompose-example} for an example).

    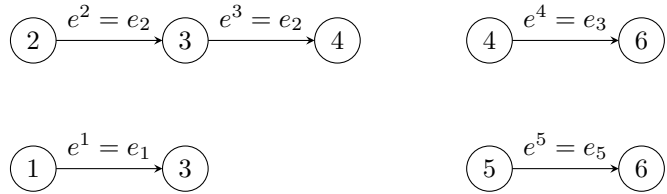
\begin{figure}[ht]
        \centering
        \begin{tikzpicture}	
            \draw (-2,0) circle(0.3);    \node at (-2,0) {$2$};
            \draw[-stealth] (-1.7,0)--(-0.3,0);
            \node at (-1,0.3) {$e^2 = e_2$};
            \draw (0,0) circle(0.3);    \node at (0,0) {$3$};
            \draw[-stealth] (0.3,0)--(1.7,0);
            \node at (1,0.3) {$e^3 = e_2$};
            \draw (2,0) circle(0.3);    \node at (2,0) {$4$};

            \draw (0,-1.7) circle(0.3);    \node at (0,-1.7) {$3$};
            \draw (-2,-1.7) circle(0.3);    \node at (-2,-1.7) {$1$};
            \draw[-stealth] (-1.7,-1.7)--(-0.3,-1.7);
            \node at (-1,-1.4) {$e^1 = e_1$};

            \draw (4,0) circle(0.3);    \node at (4,0) {$4$};
            \draw[-stealth] (4.3,0)--(5.7,0);
            \node at (5,0.3) {$e^4 = e_3$};
            \draw (6,0) circle(0.3);    \node at (6,0) {$6$};
            
            \draw (4,-1.7) circle(0.3);    \node at (4,-1.7) {$5$};
            \draw[-stealth] (4.3,-1.7)--(5.7,-1.7);
            \node at (5,-1.4) {$e^5 = e_5$};
            \draw (6,-1.7) circle(0.3);    \node at (6,-1.7) {$6$};
        \end{tikzpicture}
        \caption{A simple splitting of the item-sharing graph of instance $\cI^*$.}
        \label{fig:decompose-example}
    \end{figure}

We use the natural rounding scheme called threshold rounding~\cite{conf/wine/WuZZ23} that greedily rounds a fractional item to the agent holding the maximum fraction of the item.

\paragraph{Threshold Rounding.}
Given $\bx$ returned by Algorithm~\ref{alg:FBTA}, for every item $e\in M$, we use $N(e) = \{i\in N: x_i(e) > 0\}$ to denote the set of agents who gets (a fraction of) item $e$.
Note that if $|N(e)| = 1$ then item $e$ is integrally allocated to a single agent.
We call an item $e$ \emph{fractional} if $|N(e)| \geq 2$.
In the threshold rounding schemes, we round item $e$ to the agent holding the maximum fraction of it.
\medskip


The following theorem has been shown in~\cite{conf/wine/WuZZ23}.
As a warm-up towards our tree-splitting framework, here we give an alternative proof.

\begin{theorem}\label{theorem:n/2}
     There exists an algorithm that computes WPROP allocations with subsides at most $(n-1)/2$ for the allocation of chores to a group of weighted agents having general additive cost functions.
\end{theorem}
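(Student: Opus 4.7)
The plan is to apply the threshold rounding scheme directly to the fractional allocation $\bx$ produced by Algorithm~\ref{alg:FBTA} (equivalently, the simple splitting of the item-sharing graph into single-item components as illustrated in Figure~\ref{fig:decompose-example}), and to account for the total subsidy item by item rather than agent by agent.

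First, I would exploit the invariant of FBTA that every agent who becomes inactive ends with $c_i(\bx_i) = \WPROP_i$ exactly, and any agent still active when the algorithm terminates has $c_i(\bx_i) \le \WPROP_i$. Consequently the subsidy forced on agent $i$ by the rounded allocation $\bX$ satisfies $s_i = \max\{c_i(X_i) - \WPROP_i,\, 0\} \le \max\{c_i(X_i) - c_i(\bx_i),\, 0\}$. Writing out $c_i(X_i) - c_i(\bx_i)$, the only contributions that can \emph{raise} $i$'s cost are the fractional items that get rounded to $i$; each such item $e$ adds at most $(1-x_i(e))\cdot c_i(e)$, while the fractional items with $x_i(e)>0$ that are rounded elsewhere only decrease $i$'s cost. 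Hence
$$
s_i \ \le\ \sum_{e\,:\,i^*(e)=i} (1-x_i(e))\cdot c_i(e),
$$
where $i^*(e)$ denotes the agent to whom $e$ is rounded by threshold rounding.

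Next, I would swap the order of summation to charge subsidy to fractional items instead of agents:
$$
\|\bs\|_1 \ \le\ \sum_{e \text{ fractional}} (1-x_{i^*(e)}(e))\cdot c_{i^*(e)}(e).
$$
Since $i^*(e)$ holds the largest share of $e$ among the $|N(e)|$ agents in $N(e)$, we have $x_{i^*(e)}(e) \ge 1/|N(e)|$; combined with the normalization $c_{i^*(e)}(e) \le 1$, each term is at most $1 - 1/|N(e)| = (|N(e)|-1)/|N(e)|$. Using $|N(e)| \ge 2$, one checks that this is at most $(|N(e)|-1)/2$.

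Finally, I would use the structural fact that in the item-sharing graph each fractional item $e$ contributes exactly $|N(e)|-1$ edges (the successor path of agents sharing $e$, as in the $e_2$ path of Figure~\ref{fig:example-good-graph}), so $\sum_{e \text{ fractional}} (|N(e)|-1) = |E| \le n-1$ by Observation~\ref{obser:tree-structure}. Putting everything together yields $\|\bs\|_1 \le (n-1)/2$, as claimed. There is no real obstacle here since we are re-deriving a known bound; the only point requiring care is to ensure we do not double-count across agents sharing the same item, which is precisely what the item-indexed summation (and the fact that only the recipient can incur positive subsidy from a given item) accomplishes.
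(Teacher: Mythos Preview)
Your proposal is correct and follows essentially the same argument as the paper: apply threshold rounding, bound the subsidy contributed by each fractional item $e$ by $(1-x_{i^*}(e))\cdot c_{i^*}(e)\le (|N(e)|-1)/|N(e)|\le (|N(e)|-1)/2$, and then sum over all fractional items using that each contributes $|N(e)|-1$ edges to a forest with at most $n-1$ edges. The only difference is cosmetic---you spell out the agent-wise bound and the swap to item-wise accounting more explicitly, whereas the paper phrases the same computation as ``split into one path per fractional item and round each path independently.''
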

\begin{proof}
    We first split the tree into multiple paths where all edges of each path correspond to the same fractional item.
    For each path (each fractional item $e$), we use the threshold rounding that rounds $e$ to agent $i^* = \argmax_{i\in N} {x_i(e)}$, which incurs a subsidy
    \begin{equation*}
        \left( 1 - x_{i^*}(e) \right) \cdot c_{i^*}(e) \leq 1 - x_{i^*}(e) \leq \frac{|N(e)|-1}{|N(e)|} \leq \frac{|N(e)|-1}{2}. 
    \end{equation*}
    
    Since item $e$ corresponds to $|N(e)|-1$ edges and the total number of edges is $n-1$, by a summation of the above upper bound, we show that the total required subsidy is at most $(n-1)/2$.
\end{proof}

As we will show, if all items are shared by exactly two agents, then it is not very difficult to improve the above splitting (by ``pairing'' edges) and give a better upper bound of $n/3-1/6$ (see Section~\ref{sec:n-1}).
However, the existence of items that are shared by three or more agents complicates the splitting.
We call such items \emph{shattered items}.
Roughly speaking, every shattered item puts a constraint on how the tree can be split and may cause an increase in the required total subsidy unless we do the splitting in a very careful way.
We will discuss the case when shattered items exist in Section~\ref{sec:less-than-n-1}.

\section{Rounding Trees without Shattered Items}\label{sec:n-1}

Recall that we assume without loss of generality that the item-sharing graph $G = (N,E)$ is a tree, i.e., $|E| = n-1$.
In this section, we consider the case when there is no shattered items.
In other words, each of the edges in $E$ corresponds to a unique fractional item that is shared by its two endpoints.
We first consider the case of $n=3$, in which the tree is a path with two edges.

\subsection{Three Agents with Two Fractional Items}\label{sec:three-agents}

For the unweighted setting, Wu et al.~\cite{conf/wine/WuZZ23} showed an upper bound of $3/4$ for the total subsidy required to round paths with length $2$.
They also give an instance with three agents for which every PROPS allocation requires a total subsidy of at least $2/3$.
In this section, we close this gap by showing a rounding strategy that guarantees a WPROPS allocation with a total subsidy at most $2/3$, even for the weighted case.

\begin{theorem}\label{thm:three-agents}
    For a tree with size $2$, there exists a rounding scheme that guarantees a WPROPS allocation with a total subsidy of at most $2/3$.
\end{theorem}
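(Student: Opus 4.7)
The plan is to enumerate four natural candidate roundings of the two fractional items $e = e^i$ and $f = e^j$ and show at least one has total subsidy at most $2/3$. Without loss of generality, the tree is a directed path $i \to j \to k$: the alternative shape, where $i$ and $j$ both have successor $k$, yields the same analysis after the relabeling $x_k(f) \leftrightarrow 1 - x_k(f)$. Since both $i$ and $j$ became inactive during FBTA, we have $c_i(\bx_i) = \WPROP_i$ and $c_j(\bx_j) = \WPROP_j$, so for each candidate rounding the required subsidy is just the positive part of the cost change induced on each agent.

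Let $p = x_i(e)$, $r = x_j(f)$, $a = c_i(e)$, $b = c_j(e)$, $c = c_j(f)$, $d = c_k(f)$, all in $[0,1]$. The four roundings are
\begin{align*}
R_1 : (e,f) \to (j,j), &\quad S_1 = pb + (1-r)c; \\
R_2 : (e,f) \to (i,k), &\quad S_2 = (1-p)a + rd; \\
R_3 : (e,f) \to (i,j), &\quad S_3 = (1-p)a + \max\bigl(0,\, (1-r)c - (1-p)b\bigr); \\
R_4 : (e,f) \to (j,k), &\quad S_4 = rd + \max\bigl(0,\, pb - rc\bigr).
\end{align*}
The $\max$ terms in $S_3, S_4$ capture that agent $j$ simultaneously loses one item and gains the other, needing subsidy only when the net change is positive.

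To prove $\min_i S_i \le 2/3$ I would assume all four exceed $2/3$ and derive a contradiction. From $S_1, S_2 > 2/3$ combined with $a, b, c, d \le 1$ we immediately obtain $|p-r| < 1/3$; the crude upper bounds $S_3 \le (1-p)+(1-r)$ and $S_4 \le p + r$ give $p+r < 4/3$ and $p+r > 2/3$ respectively. The core of the argument is a case analysis on the signs of $u := (1-r)c - (1-p)b$ and $v := pb - rc$, which decide whether the $\max$-terms in $S_3, S_4$ vanish. In the three sign patterns where at least one of $u, v$ is nonpositive, $S_3$ or $S_4$ collapses to a single-agent subsidy and, combined with $S_1 > 2/3$, forces an impossible configuration (e.g.\ simultaneously $p < 1/3$ and $r > 2/3$, which makes $S_1 < 2/3$; or $b < 1/3$ together with $pb > 1/3$, forcing $p > 1$).

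The main obstacle is the remaining case $u, v > 0$, where both $\max$ terms are active simultaneously. Here the key observation is that the sum has the clean form
\[
S_3 + S_4 \;=\; (1-p)a + rd + b(2p-1) + c(1-2r),
\]
which, depending on the signs of $2p-1$ and $1-2r$, is bounded above by one of $p+r$, $2-p-r$, or $1 + p - r$; in every sub-regime this either contradicts the bound $p+r < 4/3$ from $S_3 > 2/3$, or contradicts $p+r > 2/3$ from $S_4 > 2/3$, or contradicts $|p-r| < 1/3$ from $S_1, S_2 > 2/3$. This closes the argument, giving the $2/3$ bound on total subsidy.
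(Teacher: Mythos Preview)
Your four roundings $R_1,\ldots,R_4$ are exactly the paper's $\RL,\LR,\LL,\RR$, and your case split on the signs of $u,v$ mirrors the paper's split on the signs of $(1-y_2)\alpha-y_1$ and $(1-y_1)-y_2\alpha$. The paper additionally assumes w.l.o.g.\ that $c_j(e)\ge c_j(f)$ (its $\alpha\le 1$), which eliminates one of the two mixed-sign patterns, and in the remaining cases closes the argument with straight sums (e.g.\ $s_{\LL}+s_{\RR}+s_{\RL}\le 2$) or a single weighted combination $\tfrac{2-\alpha}{\alpha}\,s_{\LL}+s_{\RR}+s_{\LR}=3-\alpha$, rather than your further sub-split on the signs of $2p-1$ and $1-2r$. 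So the two approaches are the same in substance; your main case $u>0,\,v>0$ is handled correctly and only the bookkeeping differs.

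The one place your sketch is genuinely too thin is the mixed-sign cases $u\le 0,\,v>0$ and (symmetrically) $u>0,\,v\le 0$. Your second parenthetical example, ``$b<1/3$ together with $pb>1/3$'', does not follow from the hypotheses, and the claim that $S_1>2/3$ alone forces the contradiction is not right: a second bound is needed. Concretely, when $u\le 0$ and $v>0$ you correctly get $p<1/3$ from $S_3=(1-p)a>2/3$, but to finish you should look at
\[
S_1+S_4 \;=\; 2pb + c(1-2r) + rd.
\]
If $r\ge 1/2$ this is at most $2p+r$, and $p<1/3$ together with $|p-r|<1/3$ forces $r<2/3$, giving $S_1+S_4<4/3$. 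If $r<1/2$ it is at most $2p+1-r$, which exceeds $4/3$ only when $r<2p-1/3<1/3$, whence $S_4\le p+r<2/3$. Either way you contradict $S_1,S_4>2/3$, so the case does close --- just not in the one-line fashion your sketch suggests. You should flesh this part out (or adopt the paper's w.l.o.g.\ ordering of $b,c$, which removes one mixed case and lets a simple three-term sum handle the other).
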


For an instance $\cI = (M,N,\bw,\bc)$ where $N = \{1,2,3\}$, let $e_1$ be the item shared by agents $1,2$ and $e_2$ be the item shared by agents $2,3$ respectively.
Ignoring the directions of edges, the item-sharing graph $G$ is as Figure~\ref{fig:example-three-agents} shows.

    \begin{figure}[ht]
        \centering
        \begin{tikzpicture}	
            \draw (-2,0) circle(0.3);    \node at (-2,0) {$1$};
            \draw (-1.7,0)--(-0.3,0);   \node at(-1,0.3) {$e_1$};
            \draw (0,0) circle(0.3);    \node at (0,0) {$2$};
            \draw (0.34,0)--(1.7,0);     \node at(1,0.3) {$e_2$};
            \draw (2,0) circle(0.3);    \node at (2,0) {$3$};
        \end{tikzpicture}
        \caption{The item-sharing graph for three agents.}
        \label{fig:example-three-agents}
    \end{figure}
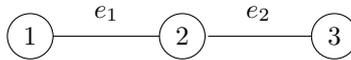

We first introduce four different rounding schemes, that we call $\LL$, $\RR$, $\LR$, and $\RL$.
We use $\LL$ to denote the scheme that rounds $e_1$ to agent $1$ and $e_2$ to agent $2$; $\RR$ to denote the scheme that rounds $e_1$ to agent $2$ and $e_2$ to agent $3$; $\LR$ to denote the scheme that rounds $e_1$ to agent $1$ and $e_2$ to agent $3$;
$\RL$ to denote the scheme that rounds both $e_1$ and $e_2$ to agent $2$.
In the following, we upper bound the required subsidies for the four rounding schemes.
We use $s_{\LL}, s_{\RR}, s_{\LR}, s_{\RL}$ to denote the subsidy required in the four rounding strategies respectively.
Let $a_1 = c_2(e_1), a_2 = c_2(e_2)$, we assume w.l.o.g. that $a_1 \geq a_2$ and let $\alpha = a_2/a_1\in (0,1]$.
For any variable $x$, we use $(x)^+$ to denote $\max \{x,0\}$.

\begin{lemma}\label{lemma:four-upperbounds}
    For the four different rounding schemes, the total required subsidy can be upper bounded as shown in the following table.
    \begin{table}[htbp]
        \centering
        \begin{tabular}{|c|c|}
        \hline
        \textbf{Rounding Scheme} & \textbf{Upper Bound for Total Subsidy}   \\ \hline
        $\LL$ & $x_2(e_1) + \left((1-x_2(e_2)) \cdot \alpha - x_2(e_1)\right)^+$ \\ \hline
        $\RR$ & $x_2(e_2) + \left((1-x_2(e_1)) - x_2(e_2) \cdot \alpha \right)^+$ \\ \hline
        $\LR$ & $x_2(e_1) + x_2(e_2)$ \\ \hline
        $\RL$ & $(1-x_2(e_1)) + (1-x_2(e_2)) \cdot \alpha$          \\ \hline
        \end{tabular}
        \caption{Upper bounds for the subsidies required for the four different rounding schemes.}\label{tab:three-upperbounds}
    \end{table}
\end{lemma}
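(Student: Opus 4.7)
The plan is to bound each scheme's subsidy by tracking the cost change of each agent from the fractional WPROP allocation $\bx$ to the rounded integral allocation, then invoking two uniform facts: (i) $c_i(\bx_i) \leq \WPROP_i$ by Lemma~\ref{lemma: wprop}, so any agent whose cost does not increase contributes zero subsidy while any agent whose cost strictly increases contributes a subsidy at most equal to the increase; and (ii) $c_i(e) \leq 1$ for every $i \in N$ and $e \in M$ by our normalization.

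Before splitting into cases, I would record the structural facts about $\bx$ coming from FBTA on this length-$2$ item-sharing path: agent~$1$ is fractional only on $e_1$, agent~$3$ is fractional only on $e_2$, and agent~$2$ is fractional on both. Hence $x_1(e_1) = 1 - x_2(e_1)$ and $x_3(e_2) = 1 - x_2(e_2)$. I would also fix the shorthand $a_1 = c_2(e_1)$, $a_2 = c_2(e_2)$ (with $a_1 \geq a_2$ by the assumption), and recall $\alpha = a_2/a_1 \in (0,1]$.

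Then I would treat the four schemes one at a time as short direct calculations. For $\LR$, agents~$1$ and $3$ each acquire only the remaining fraction of the one item they previously shared, so their subsidies are at most $x_2(e_1) \cdot c_1(e_1) \leq x_2(e_1)$ and $x_2(e_2) \cdot c_3(e_2) \leq x_2(e_2)$ respectively, while agent~$2$ loses mass and contributes $0$. For $\RL$, the mirror-symmetric scheme, agent~$2$ absorbs both items and contributes the only positive subsidy, which is at most $(1-x_2(e_1))\,a_1 + (1-x_2(e_2))\,a_2 \leq (1-x_2(e_1)) + (1-x_2(e_2))\,\alpha$ using $a_1 \leq 1$. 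For $\LL$ (and symmetrically $\RR$), agent~$1$ (respectively agent~$3$) gains at most $x_2(e_1)$ (respectively $x_2(e_2)$) of subsidy exactly as in $\LR$; the new twist is agent~$2$, who simultaneously loses $x_2(e_1)$ of $e_1$ and gains $1-x_2(e_2)$ of $e_2$, so her cost change is $(1-x_2(e_2))\,a_2 - x_2(e_1)\,a_1$, which can be of either sign. Her subsidy is therefore bounded by the positive part of this quantity, and factoring out $a_1$ and using $a_1 \leq 1$ turns it into $\bigl((1-x_2(e_2))\,\alpha - x_2(e_1)\bigr)^+$, matching the table.

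The main subtlety lies in the $\LL$ and $\RR$ rows, where the competing gain and loss at agent~$2$ force the use of $(\cdot)^+$ and the rewrite in terms of $\alpha$; the two other rows are essentially bookkeeping. Collecting the per-agent bounds yields the four rows of Table~\ref{tab:three-upperbounds}.
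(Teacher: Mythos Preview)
Your proposal is correct and follows essentially the same approach as the paper's proof: track per-agent cost changes from the fractional allocation $\bx$ to the rounded allocation, use $c_i(\bx_i)\le\WPROP_i$ so that the subsidy is bounded by the positive part of the cost increase, and apply $c_i(e)\le 1$ (and factor out $a_1$ to introduce $\alpha$) to obtain each row. The only cosmetic difference is that the paper phrases the $\LL$/$\RR$ argument as ``rounding $e_1$ away releases a space of $x_2(e_1)\,a_1$ for agent~$2$'' rather than computing the net cost change directly, but this is the same computation you carry out.
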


Given the above characterized upper bounds, we are ready to show Theorem~\ref{thm:three-agents}.

\begin{proposition}\label{proposition:xy<1/4}
    For any $0< x, y <1$, we have $\min \{xy, (1-x)(1-y)\} \leq 1/4$
\end{proposition}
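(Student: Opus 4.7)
The plan is to prove this by an elementary inequality argument: for every $t \in [0,1]$ we have $t(1-t) \leq 1/4$, with equality only at $t = 1/2$. This is the only fact I will need.

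I would proceed by a short multiplicative trick. Consider the product
\[
xy \cdot (1-x)(1-y) \;=\; \bigl(x(1-x)\bigr)\cdot\bigl(y(1-y)\bigr) \;\leq\; \frac{1}{4}\cdot\frac{1}{4} \;=\; \frac{1}{16},
\]
using the one-variable bound on each factor. Since both $xy$ and $(1-x)(1-y)$ are nonnegative, the minimum of two nonnegative reals $a,b$ satisfies $\min\{a,b\}^2 \leq ab$, so
\[
\min\{xy,(1-x)(1-y)\}^2 \;\leq\; xy\cdot (1-x)(1-y) \;\leq\; \frac{1}{16},
\]
and taking square roots yields the claim.

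Alternatively, I could run a two-line case analysis: by symmetry between $(x,y)$ and $(1-x,1-y)$, assume $x+y \leq 1$. Then by AM-GM, $xy \leq \bigl(\tfrac{x+y}{2}\bigr)^2 \leq 1/4$, so the minimum is at most $1/4$; the opposite case $x+y \geq 1$ gives $(1-x)+(1-y) \leq 1$ and the same bound applies to $(1-x)(1-y)$.

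There is no real obstacle here; this is a one-line inequality used as a tool in the subsequent subsidy analysis of the four rounding schemes from \Cref{lemma:four-upperbounds}. The only thing to be careful about is using the correct form of AM-GM or the product trick so that the bound $1/4$ (rather than a looser constant) is obtained.
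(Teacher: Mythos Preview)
Your proposal is correct. Your second approach (case split on whether $x+y\le 1$) is essentially the paper's proof: the paper also splits on $x+y$, observing that if $x+y\ge 1$ then $(1-x)(1-y)\le x(1-x)$ (since $1-y\le x$), and if $x+y<1$ then $xy\le x(1-x)$ (since $y<1-x$), then uses $x(1-x)\le 1/4$. Your first approach via the product bound $xy\cdot(1-x)(1-y)\le 1/16$ is a slightly different but equally short route that avoids the case split entirely; both are fine.
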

\begin{proof}
    When $x+y \geq 1$, we have $(1-x)(1-y)\leq x\cdot(1-x)$; when $x+y < 1$, we have $xy\leq x\cdot (1-x)$.
    In both cases we have $\min \{xy, (1-x)(1-y)\} \leq x\cdot (1-x) \leq 1/4$.
\end{proof}

\begin{proofof}{Theorem~\ref{thm:three-agents}}
    For notational simplicity, we use $y_1$ and $y_2$ to denote $x_2(e_1)$ and $x_2(e_2)$.
    Recall from Lemma~\ref{lemma:four-upperbounds} that we have the following bounds:
    \begin{table}[htbp]
        \centering
        \begin{tabular}{|c|c|}
        \hline
        \textbf{Rounding Scheme} & \textbf{Upper Bound for Total Subsidy}   \\ \hline
        $\LL$ & $y_1 + \left((1-y_2) \cdot \alpha - y_1\right)^+$ \\ \hline
        $\RR$ & $y_2 + \left((1-y_1) - y_2 \cdot \alpha \right)^+$ \\ \hline
        $\LR$ & $y_1 + y_2$ \\ \hline
        $\RL$ & $(1-y_1) + (1-y_2) \cdot \alpha$          \\ \hline
        \end{tabular}
    \end{table}
    
    To prove Theorem~\ref{thm:three-agents}, we discuss different cases depending on whether $((1-y_2)\cdot \alpha - y_1)^+ = 0$ and $((1-y_1)- y_2\cdot \alpha )^+ = 0$.
Note that $(1-y_2)\cdot \alpha - y_1 \leq (1-y_1)- y_2\cdot \alpha$.
Hence there are three different cases to discuss.

\paragraph{Case 1.}
When $((1-y_2)\cdot \alpha - y_1)^+ = 0$ and $((1-y_1) - y_2\cdot \alpha)^+ = 0$, we have $s_\LL \leq y_1$ and $s_\RR \leq y_2$.
By summing up $s_\LL, s_\RR, s_\RL$, we have
\begin{align*}
    s_\LL + s_\RR + s_\RL &\leq y_1 + y_2 + (1-y_1) + (1-y_2) \cdot \alpha \\
    &\leq y_1 + y_2 + (1-y_1) + (1-y_2) = 2,
\end{align*}
which leads to $\min \{s_\LL, s_\RR, s_\RL\} \leq 2/3.$

\paragraph{Case 2.}
When $((1-y_2)\cdot \alpha - y_1)^+ = 0$ and $((1-y_1) - y_2\cdot \alpha)^+ > 0$, we have $s_\LL \leq y_1$ and $s_\RR \leq (1-y_1) + y_2 \cdot (1-\alpha)$.
By Proposition~\ref{proposition:xy<1/4}, we have $\min \{ y_2 \cdot (1-\alpha), (1-y_2) \cdot \alpha\} \leq 1/4$.
Hence we have
\begin{equation*}
    \min \{s_\RR, s_\RL\} \leq (1-y_1) + \frac{1}{4} = \frac{5}{4} - y_1.
\end{equation*}
By summing up $s_\LL$ and $\min \{s_\RR, s_\RL\}$, we have
\begin{equation*}
    s_\LL + \min \{s_\RR, s_\RL\} \leq \frac{5}{4} \quad \Rightarrow \quad \min \{s_\LL, s_\RR, s_\RL\} \leq \frac{5}{8} < \frac{2}{3}.
\end{equation*}




\paragraph{Case 3.}
Finally, we consider the case that $((1-y_2)\cdot \alpha - y_1)^+ > 0$ and $((1-y_1) - y_2\cdot \alpha)^+ > 0$.
In this case we have $s_{\LL} \leq (1-y_2)\cdot \alpha$ and $s_{\RR}\leq (1-y_1)+y_2\cdot(1-\alpha)$.
Then we have
\begin{equation*}
    \frac{2-\alpha}{\alpha} \cdot s_\LL + s_\RR + s_\LR = (2-\alpha) \cdot (1-y_2) + (1+(2-\alpha) \cdot y_2) = 3-\alpha.
\end{equation*}
Hence we have
\begin{equation*}
    \min \{s_\LL, s_\RR, s_\LR\} \leq \frac{\alpha\cdot(3-\alpha)}{2+\alpha} \leq \frac{2}{3}.
\end{equation*}

In conclusion, for all cases, there exists a rounding scheme that guarantees WPROPS allocation with a total subsidy of at most $2/3$.
\end{proofof}

Given Theorem~\ref{thm:three-agents}, we argue that for any instance, there exists a rounding strategy that guarantees a WPROPS allocation with a total subsidy of at most $n/3-1/6$.
Our algorithm is based on a tree splitting that splits the item-sharing graph $G$ into subtrees with size at most two.
For each subtree, we do rounding independently and bound the subsidy by $2/3$.

\subsection{Simple Splitting}

Recall that during the splitting, we regard each tree as a collection of edges.
If we can regard two adjacent edges as a subtree, we can bound the subsidy for the tree by $2/3$, following Theorem~\ref{thm:three-agents}.
Since there are $n-1$ edges, by repeatedly splitting subtrees with size two, we can bound the total subsidy by $\lceil (n-1)/2 \rceil \cdot 2/3 \leq n/3$.
In the following, we introduce the \splitting, which we use to split any item-sharing graph to a collection of subtrees with size two.

\begin{algorithm}[htbp]
    \caption{\splitting}
    \label{alg:SD}
    \KwIn{An item-sharing graph $G$.}
    Let $i$ be the deepest (furthest from the root) node in $G$ \;
    Let $j$ be $i$'s parent \;
    \If{$j$ has another child $l\neq i$}{
        Take edges $e^i, e^l$ as one subtree \;
    }
    \ElseIf{$j$ is the root}{
        \Return \ ; \qquad\qquad\qquad \tcp{$G$ has only one edge}
    }
    \Else{
        Take edges $e^i, e^j$ as one subtree \;
    }
    \KwOut{A subtree $T$ with a size of two and the remaining graph $G'$.}
\end{algorithm}

\begin{lemma}\label{lemma:simple-splitting}
    After repeatedly running \splitting on graph $G$, we obtain a collection of trees with size two, and at most one of them has size one.
\end{lemma}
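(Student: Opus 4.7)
The plan is to induct on $k = |E|$, the number of edges of the input item-sharing graph (which we are assuming is a tree). I will prove by induction that repeated application of \splitting to a tree with $k$ edges outputs $\lfloor k/2 \rfloor$ subtrees of size two, plus a single subtree of size one when $k$ is odd. The lemma follows immediately.

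For the base cases, $k=0$ is vacuous, and when $k=1$ the unique edge has the deepest node $i$ as a leaf whose parent $j$ is the root; the algorithm falls into the \emph{Else if} branch and returns, leaving that edge as the lone size-one subtree. For the inductive step, fix $k \geq 2$ and run one iteration of \splitting. Since $i$ is a deepest node of the rooted tree, $i$ is a leaf. I consider the two branches that actually extract a subtree. In the first branch, $j$ has another child $l \neq i$; because $i$ is at the deepest level, $l$ lies at the same depth and is therefore also a leaf. Hence $e^i$ and $e^l$ are two distinct leaf edges sharing the endpoint $j$, which together form a connected subtree of size two. Removing two leaf edges from a tree yields a tree on the remaining non-isolated nodes with $k-2$ edges. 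In the third branch, $j$ has only the child $i$ and is not the root; then $e^i$ and $e^j$ share the endpoint $j$ and form a path of length two. Since $j$'s only child was $i$, deleting $e^i$ turns $j$ into a leaf, and then deleting $e^j$ isolates $j$ as well; the remaining graph on the other nodes is still a tree with $k-2$ edges.

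In both cases we have extracted one size-two subtree and reduced to a tree with $k-2$ edges, so the inductive hypothesis applies and finishes the argument. The only point that requires care is verifying in the first branch that the sibling $l$ is actually a leaf, since otherwise removing $e^l$ could disconnect a nontrivial portion of the tree and break the inductive invariant; this is exactly where the choice of $i$ as a \emph{deepest} node (rather than an arbitrary leaf) is used. Everything else is a straightforward structural bookkeeping.
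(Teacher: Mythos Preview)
Your proof is correct and follows essentially the same approach as the paper: both argue by induction on the number of edges, and the crux in both is that after one call of \splitting the remaining graph is still a tree with two fewer edges, using the fact that a sibling $l$ of the deepest node $i$ must itself be a leaf. Your write-up is slightly more explicit about the base cases and the bookkeeping, but the argument is the same.
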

\begin{proof}
    We prove the lemma by mathematical induction on the size of $G$.
    When $G$ has size at most two, \splitting directly outputs $G$ as the subtree.
    Next we consider $n\geq 3$.
    Assume the induction hypothesis holds that for any tree with size at most $n-1$.
    It suffices to show that in \splitting, the remaining graph (with size $n-2$) after trimming two edges is still connected.
    Recall that in \splitting, we start from the deepest leaf $i$ and trace back to its parent $j$.
    When $j$ has another child $l$, we trim two edges $e^i, e^l$.
    Since $i$ is the deepest leaf, $l$ is also a leaf node and thus the remaining graph is still connected.
    When $i$ is the only child of $j$, we trim $e^i$ and $e^j$, in which case the remaining graph is also connected.
    In either case, the remaining graph (with size $n-2$) is connected, and we can use the induction hypothesis to split it into a collection of trees with size two, and at most one of them has size one.
\end{proof}

\begin{example}
Consider an instance $\cI'$ that has the same item-sharing graph as Figure~\ref{fig:example-tree}, but the items $e^2, e^3$ are different.
We use it as an example to show the execution of \splitting.
The splitting result is shown in Figure~\ref{fig:decompose-nochild}.

    \begin{figure}[ht]
        \centering
        \begin{tikzpicture}	
            \draw (3,0.8) circle(0.3);    \node at (3,0.8) {$6$};
            \draw[-stealth] (3.8,-0.3)--(3.2,0.6);
            \draw (4,-0.5) circle(0.3);    \node at (4,-0.5) {$5$};

            \draw (1,1.3) circle(0.3);    \node at (1,1.3) {$6$};
            \draw[-stealth] (0.2,0.2)--(0.8,1.1); 
            \draw (0,0) circle(0.3);    \node at (0,0) {$4$};
            \draw[-stealth] (-0.8,-1.1)--(-0.2,-0.2);
            \draw (-1,-1.3) circle(0.3);    \node at (-1,-1.3) {$3$};

            \draw (-3,0.8) circle(0.3);    \node at (-3,0.8) {$3$};
            \draw[-stealth] (-3.8,-0.3)--(-3.2,0.6);     
            \draw[-stealth] (-2.2,-0.3)--(-2.8,0.6);
            \draw (-4,-0.5) circle(0.3);    \node at (-4,-0.5) {$2$};
            \draw (-2,-0.5) circle(0.3);    \node at (-2,-0.5) {$1$};
        \end{tikzpicture}
        \caption{Illustration for the execution of \splitting on $\cI'$.}
        \label{fig:decompose-nochild}
    \end{figure}
\end{example}

Given Theorem~\ref{thm:three-agents}, we can round each subtree independently and bound the subsidy required for each tree by $2/3$.
We show that the total subsidy can be upper bounded by $n/3$.

\begin{theorem}\label{thm:subsidy-n-1}
    Given the fractional allocation $\bx$ with $n-1$ fractional items returned by Algorithm~\ref{alg:FBTA}, there exists a rounding scheme that returns an integral WPROPS allocation with total subsidy at most $(n-1)/3$ when $n$ is odd; at most $n/3 - 1/6$ when $n$ is even.
\end{theorem}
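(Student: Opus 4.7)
The plan is to combine \splitting (Algorithm~\ref{alg:SD}) with the three-agent rounding of Theorem~\ref{thm:three-agents}. First, I would apply Lemma~\ref{lemma:simple-splitting} to decompose the item-sharing tree $G$ into a collection of edge-disjoint subtrees in which every subtree has exactly two edges, except possibly one subtree with a single edge (this happens iff $n$ is even). For each size-$2$ subtree, its two edges involve three agents forming a length-$2$ path, so Theorem~\ref{thm:three-agents} supplies a rounding of its two items whose incurred subsidy totals at most $2/3$. For the possibly remaining size-$1$ subtree, the corresponding fractional item $e$ is shared by exactly two agents (since there are no shattered items), and I would apply the threshold rounding from the proof of Theorem~\ref{theorem:n/2}: assign $e$ to $\argmax_i x_i(e)$, incurring a subsidy of at most $1 - \max_i x_i(e) \leq 1/2$.

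The step requiring the most care is aggregating the per-subtree bounds into a global subsidy bound. \splitting makes the subtrees edge-disjoint but not node-disjoint, so a single agent can appear as an endpoint in several subtrees. To handle this cleanly, for each agent $i$ and subtree $T$ I would define $\Delta_i^T = c_i(X_i^T) - c_i(x_i^T)$, the cost change for $i$ restricted to the items of $T$ under the chosen rounding. Since every fractional item belongs to exactly one subtree (items and edges coincide in the absence of shattered items) and all other items are already integral in $\bx$, the total cost change satisfies $c_i(\bX_i) - c_i(\bx_i) = \sum_T \Delta_i^T$. Combining this with $c_i(\bx_i) \leq \WPROP_i$ from Lemma~\ref{lemma: wprop} and the subadditivity $(a+b)^+ \leq a^+ + b^+$ gives $s_i \leq \sum_T (\Delta_i^T)^+$. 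Swapping the order of summation yields $\sum_i s_i \leq \sum_T \sum_i (\Delta_i^T)^+$, and inspection of the bounds in Lemma~\ref{lemma:four-upperbounds} (together with the analogous single-edge analysis) shows that $\sum_i (\Delta_i^T)^+$ is exactly the quantity being upper-bounded by $2/3$ for a size-$2$ subtree and by $1/2$ for a size-$1$ subtree.

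The final step is a parity count. When $n$ is odd, \splitting produces $(n-1)/2$ subtrees of size $2$, yielding total subsidy at most $\tfrac{n-1}{2} \cdot \tfrac{2}{3} = \tfrac{n-1}{3}$. When $n$ is even, we obtain $(n-2)/2$ size-$2$ subtrees plus one size-$1$ subtree, yielding $\tfrac{n-2}{2} \cdot \tfrac{2}{3} + \tfrac{1}{2} = \tfrac{n-2}{3} + \tfrac{1}{2} = \tfrac{n}{3} - \tfrac{1}{6}$. Conceptually, the only subtlety is the aggregation argument; once per-subtree subsidies can be summed into the global bound, Theorem~\ref{thm:three-agents} and Lemma~\ref{lemma:simple-splitting} do the heavy lifting and the rest is arithmetic.
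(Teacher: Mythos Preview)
Your proposal is correct and follows essentially the same approach as the paper: split the tree via Lemma~\ref{lemma:simple-splitting}, round each size-$2$ piece via Theorem~\ref{thm:three-agents} and the possible size-$1$ piece via threshold rounding, then sum. Your explicit aggregation argument (via $\Delta_i^T$ and the subadditivity of $(\cdot)^+$) is a welcome addition of rigor that the paper leaves implicit when it simply sums the per-subtree subsidy bounds.
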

\begin{proof}
    Considering the item-sharing graph $G=(N,E)$ with $|E| = n-1$.
    Recall that in each splitting, we get a subtree with size two.
    When $n$ is odd, after \splitting, we obtain exactly $(n-1)/2$ trees with size two.
    By Theorem~\ref{thm:three-agents}, for each tree, we can round the fractional items with a total subsidy of at most $2/3$.
    Hence the total subsidy across all subtrees is at most 
    \begin{equation*}
        \frac{2}{3} \cdot \frac{n-1}{2} = \frac{n-1}{3}.
    \end{equation*}

    When $n$ is even, after \splitting, we obtain $(n-2)/2$ trees with size two and one tree with size one.
    For the tree with size one, we use threshold rounding which incurs a subsidy of at most $1/2$.
    Hence the total subsidy across all subtrees is at most 
    \begin{equation*}
        \frac{2}{3} \cdot \frac{n-2}{2} + \frac{1}{2} = \frac{n-2}{3} + \frac{1}{2} = \frac{n}{3} - \frac{1}{6}.\qedhere
    \end{equation*}
\end{proof}

\section{Rounding Trees with Shattered Item}\label{sec:less-than-n-1}

In this section, we consider case when there are at least one shattered item.
Note that the \splitting proposed in Section~\ref{sec:n-1} might not work in this case when multiple edges correspond to the same fractional item.
Consider the allocation for instance $\cI^*$, returned by Algorithm~\ref{alg:FBTA}.
The item-sharing graph $G$ of the allocation is as Figure~\ref{fig:atom-path-example} shows, where the two red edges $e^2$ and $e^3$ correspond to the same item $e_2$.
Following the \splitting edges $e^2$ and $e^1$ would be taken as a subtree, while $e^3$ and $e^4$ are put into another subtree.
Recall that for each subtree, we apply the rounding scheme as in Theorem~\ref{thm:three-agents} independently.
However, we notice that items $e^2$ and $e^3$ can not be rounded independently.
We define the structure like $e^2, e^3$ as an \emph{atom-path}, which contains edges that cannot be separated.

\begin{definition}[Atom-path]
    A path $i \to \cdots \to j$ is called an \emph{atom-path} if all edges in the path correspond to the same item, and the path is maximal.
\end{definition}

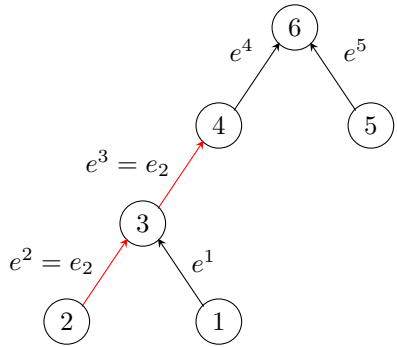
\begin{figure}[ht]
    \centering
    \begin{tikzpicture}	
        \draw (1,1.3) circle(0.3);    \node at (1,1.3) {$6$};
        \draw[-stealth] (0.2,0.2)--(0.8,1.1);     \draw[-stealth] (1.8,0.2)--(1.2,1.1);
        \draw (0,0) circle(0.3);    \node at (0,0) {$4$};
        \draw (2,0) circle(0.3);    \node at (2,0) {$5$};
        \node at (0.3,1) {$e^4$};     \node at (1.8,1) {$e^5$}; 
        
        \draw[red][-stealth] (-0.8,-1.1)--(-0.2,-0.2);
        \draw (-1,-1.3) circle(0.3);    \node at (-1,-1.3) {$3$};
        \draw[red][-stealth] (-1.8,-2.4)--(-1.2,-1.5);     \draw[-stealth] (-0.2,-2.4)--(-0.8,-1.5);
        \draw (-2,-2.6) circle(0.3);    \node at (-2,-2.6) {$2$};
        \draw (0,-2.6) circle(0.3);    \node at (0,-2.6) {$1$};
        \node at (-2.2,-1.8) {$e^2 = e_2$};     \node at (-1.2,-0.5) {$e^3=e_2$}; 
        \node at (-0.2,-1.8) {$e^1$};
    \end{tikzpicture}
    \caption{The item-sharing graph $G^*$ of instance $\cI^*$.}
    \label{fig:atom-path-example}
\end{figure}

We note that any atom-path has size at least $2$.
When there exists an atom-path in $G$, we need to ensure that the edges in the atom-path are not separated into different subtrees.
In the following, we introduce the splitting and bound the total subsidy by $n/3$ as follows.

\begin{lemma}\label{lemma:general-tree-subsidy}
    For any tree with size $z \geq 2$ such that there exists an atom-path in the tree, there exists a rounding scheme with total subsidy bounded by $z/3$.
\end{lemma}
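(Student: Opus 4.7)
The plan is to prove the lemma by strong induction on the tree size $z \geq 2$. At each step I choose a canonical atom-path $P$ of length $k$ (for example one containing a deepest leaf), split off a component built around $P$ (either $P$ itself, or an ``extended atom-path'' obtained by adjoining a few adjacent edges), and recurse on the residual subtrees; each residual subtree is then handled by the inductive hypothesis when it still contains an atom-path, and by Theorem~\ref{thm:subsidy-n-1} (Simple Splitting) otherwise.

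The base case $z = 2$ is forced: the tree must itself be a single atom-path of length two, corresponding to one shattered item $e$ shared by three agents. Threshold rounding of $e$ to the agent with the largest fraction $x_{i^*}(e) \geq 1/3$ incurs subsidy at most $(1 - 1/3) \cdot c_{i^*}(e) \leq 2/3 = z/3$. For the inductive step $z \geq 3$, when $k \geq 3$ I take $P$ alone as a component: threshold rounding on its shattered item costs at most $k/(k+1) \leq k/3$, leaving slack $k(k-2)/(3(k+1))$ to absorb the $1/6$ ``parity penalties'' that Theorem~\ref{thm:subsidy-n-1} incurs on odd-sized atom-path-free residual subtrees. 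When $k = 2$ there is no such slack in the bound $2/3$ for $P$ alone, so I instead adjoin a carefully chosen adjacent edge $e'$ to $P$ to form an extended atom-path of size three, and analyze it via a case analysis in the spirit of Theorem~\ref{thm:three-agents}, but with the additional constraint that the two edges of $P$ must be rounded together (because they correspond to a single item), to show that some feasible rounding scheme achieves total subsidy at most one. In both cases the residual subtrees are strictly smaller and fall within the inductive hypothesis.

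The main obstacle will be the fine bookkeeping of parities around the atom-path. The inequality ``atom-path slack $\geq$ total $1/6$ parity penalty of odd-sized atom-path-free residuals'' is not automatic; it constrains how many and what kind of residual subtrees may be produced by the split. The $k = 2$ case is particularly delicate because the slack is exactly zero, and the $k = 3$ case already tolerates at most one odd-sized atom-path-free residual. When the slack is insufficient, $P$ must be enlarged into an extended atom-path by adjoining edges $e'$ chosen to touch problematic subtrees, correcting their parity upon removal. The $k = 2$ case in particular requires re-deriving an analogue of Table~\ref{tab:three-upperbounds} under the tied-rounding constraint, followed by a delicate averaging argument showing that at least one of the feasible schemes meets the target subsidy while leaving the residual bounds summing to at most $(z - 3)/3$.
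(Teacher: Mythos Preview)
Your inductive skeleton and base case match the paper's, but there is a genuine gap in the parity bookkeeping. Removing an atom-path of length $k$ leaves up to $k+1$ residual components (one per atom-path node), and nothing in the tree structure prevents all $k+1$ of them from being simultaneously odd-sized and atom-path-free. Your slack $k/3 - k/(k+1) = k(k-2)/(3(k+1))$ absorbs at most $\lfloor 2k(k-2)/(k+1)\rfloor$ penalties of $1/6$ each: that is $0$ for $k=2$, $1$ for $k=3$, $3$ for $k=4$, $5$ for $k=5$, $6$ for $k=6$, and it only reaches $k+1$ once $k\geq 7$. So for every $k\leq 6$ you may be forced to attach \emph{several} edges to the atom-path, not just one. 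Your $k=2$ plan of adjoining a single edge and proving subsidy $\leq 1$ handles only $h=1$; when two or three of the residuals at the three atom-path nodes are odd, you still need the $h=2$ and $h=3$ bounds, and analogous multi-edge cases arise for $k=3,4,5,6$.

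This is precisely where the paper concentrates its effort: rather than relying on slack, it always attaches one edge per odd atom-path-free residual (so all remaining subtrees are good) and proves in Lemma~\ref{lemma:expanded-atom-path} that an expanded atom-path with $k$ atom edges and any $h\leq k+1$ attached edges admits a rounding with subsidy at most $(k+h)/3$. Your slack computation is equivalent to the easy case Lemma~\ref{lemma:expanded-atom-path-general} (threshold rounding suffices when $h\leq 2k(k-2)/(k+1)$), but the remaining cases $h\in\{k-1,k,k+1\}$ are handled by Lemmas~\ref{lemma:expanded-atom-path-k+1}--\ref{lemma:expanded-atom-path-k-1} via a biased threshold rounding (Proposition~\ref{proposition:modified-greedy}) that exploits the space freed at each node $i\neq i^*$ when $e_0$ is rounded away from it. The size-$3$ case you sketch is essentially the $(k,h)=(2,1)$ instance of Lemma~\ref{lemma:expanded-atom-path-k-1}; the harder cases $h=k$ and $h=k+1$ are not direct extensions of the three-agent table and need this additional rounding idea.
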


We remark that Lemma~\ref{lemma:general-tree-subsidy} directly leads to the following theorem.

\begin{theorem}\label{thm:subsidy-<n-1}
    Given the fractional allocation $\bx$ with less than $n-1$ fractional items returned by Algorithm~\ref{alg:FBTA}, there exists a rounding scheme that returns an integral WPROPS allocation with total subsidy at most $(n-1)/3$.
\end{theorem}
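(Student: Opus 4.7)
The plan is to round each connected component of the item-sharing forest $G$ independently, invoking Lemma~\ref{lemma:general-tree-subsidy} on those trees that contain an atom-path and falling back on Theorem~\ref{thm:subsidy-n-1} (or threshold rounding for single-edge trees) on the rest. Isolated agents contribute zero subsidy, since the fractional allocation is already WPROP and their integral bundle is unchanged, so they will not affect the final bound.

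I would write the components of $G$ as $T_1,\dots,T_c$ with edge counts $z_1,\dots,z_c$, so that $\sum_{i=1}^{c}(z_i+1)=n$ and hence $\sum_i z_i = n-c$. Because the edges representing any single shattered item form a connected atom-path, all edges of any fixed fractional item lie in the same $T_i$; thus the fractional-item sets of different trees are disjoint, and the trees can be rounded independently without any subsidy interaction between them.

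Now I would split on $c$. If $c=1$, then $G$ is a single tree with $z_1=n-1$ edges; the hypothesis that the number of fractional items $f$ satisfies $f<n-1$, together with the fact that each non-shattered item contributes exactly one edge while each shattered item contributes at least two, forces at least one shattered item (hence an atom-path) inside $T_1$. Lemma~\ref{lemma:general-tree-subsidy} then immediately yields total subsidy at most $z_1/3=(n-1)/3$.

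If $c\geq 2$, I would bound the subsidy incurred on each $T_i$ by $z_i/3+1/6$: Lemma~\ref{lemma:general-tree-subsidy} handles any $T_i$ containing an atom-path (and actually gives the tighter bound $z_i/3$); Theorem~\ref{thm:subsidy-n-1}, applied to a standalone tree without shattered items, gives $z_i/3$ when $z_i$ is even and $z_i/3+1/6$ when $z_i$ is odd; and the singleton case $z_i=1$ costs at most $1/2=z_i/3+1/6$ via threshold rounding. Summing,
\[
\sum_{i=1}^{c}\left(\frac{z_i}{3}+\frac{1}{6}\right)=\frac{n-c}{3}+\frac{c}{6}=\frac{n}{3}-\frac{c}{6}\leq \frac{n-1}{3},
\]
where the last inequality uses $c\geq 2$. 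The main subtlety I anticipate is the bookkeeping: verifying that every atom-path lies entirely inside a single $T_i$ so Lemma~\ref{lemma:general-tree-subsidy} can be applied componentwise, and isolating the $c=1$ case where the strict inequality $f<n-1$ has to be leveraged to guarantee the existence of a shattered item. Everything else is elementary arithmetic on top of the per-tree bounds already in hand.
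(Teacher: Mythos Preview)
Your proposal is correct and follows essentially the same route as the paper. The paper simply remarks that Lemma~\ref{lemma:general-tree-subsidy} ``directly leads to'' Theorem~\ref{thm:subsidy-<n-1}, relying on the earlier WLOG reduction (Section~3.3) that the item-sharing graph may be assumed to be a single tree; under that reduction, fewer than $n-1$ fractional items forces a shattered item, and Lemma~\ref{lemma:general-tree-subsidy} applies with $z=n-1$. Your argument makes this WLOG precise by summing per-component bounds, distinguishing $c=1$ (where the strict inequality $f<n-1$ supplies an atom-path) from $c\ge 2$ (where the slack $c/6\ge 1/3$ absorbs the possible $+1/6$ on odd-size, atom-path-free components), which is exactly the bookkeeping the paper glosses over.
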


\subsection{Atom-path splitting}

In this section, we introduce the general tree splitting that we use when there exists at least one atom-path in the item-sharing graph $G$.
A naive idea is to take the atom-path alone (with size at least two) as a subtree and recursively do the splitting on the remaining graph (which might have multiple connected components).
For example, in Figure~\ref{fig:atom-path-example}, we can split the tree into the atom-path $\{e^2, e^3\}$, a subtree with edges $\{e^4, e^5\}$, and another subtree with edge $e^1$.
Then we can round the subtrees independently as in Section~\ref{sec:n-1}.
However, notice that rounding a path with size one, e.g., item $e^1$ in Figure~\ref{fig:atom-path-example}, may cause an increase in subsidy by $1/2$.
This may cause a problem when the length of the atom-path is large, as each of these ``hanging edges'' will incur an increase of $1/2$ in the subsidy.
It would be even harder to bound the total subsidy if there are other atom-paths in some connected components of the resulting graph after removing the atom-path.
However, we notice that there is an easy fix to the above problem: we can simply attach the ``hanging edges'' to the atom-path to form a subtree, which we call an \emph{expanded atom-path} (the formal definition will be given later).
On a high level, we want to do the splitting such that after taking out an expanded atom-path, each connected component of the resulting graph is \emph{good}.

\begin{definition}[Good Subtree]
    We say that a subtree is \emph{good} if either it contains at least one atom-path, or it has even size.
\end{definition}

Note that we can bound (using mathematical induction) the total subsidy required to round a good subtree with size $z$ by $z/3$ using Lemma~\ref{lemma:general-tree-subsidy} (if it contains an atom-path) or Theorem~\ref{thm:subsidy-n-1} (if it has even size).
To ensure that we can split the tree into an expanded atom-path and a collection of good subtrees, we check each connected component after taking out the atom-path:
\begin{itemize}
    \item If the component is a good subtree, we keep it directly and bound the required subsidy by $1/3$ of its size, following Lemma~\ref{lemma:general-tree-subsidy} or Theorem~\ref{thm:subsidy-n-1}.
    \item For any connected component with odd size and without an atom-path, we show that there exists an edge in the connected component that can be attached to the atom-path such that after removing the edge the remaining graph is a collection of good subtrees.
\end{itemize}

We conclude the splitting in Algorithm~\ref{alg:GD} and call it \textsf{Atom-path Splitting}.

\begin{algorithm}[htbp]
    \caption{\textsf{Atom-path Splitting}}
    \label{alg:GD}
    \KwIn{An item-sharing graph $G$ with atom-path.}
    Pick an arbitrary atom-path and let it be a subtree \;
    Remove the edges of atom-path from the original tree \;
    \For{each connected component $T$}{
        \If{$T$ has an atom-path or has even size}{
            Let $T$ be one subtree of the splitting \;
        }
        \Else(\textit{$T$ has odd size and no atom-path}){
            Let $(i,j)\in T$ be an edge such that (1) agent $i$ is in the atom-path and (2) the subtree containing $j$ after removing $(i,j)$ has even size\;
            Attached $(i,j)$ to the atom-path\;
            Do \splitting on each connected component of $T\setminus\{(i,j)\}$ \;
        }
    }
    \KwOut{A collection of subtrees and the remaining graph $G'$.}
\end{algorithm}

By running \textsf{Atom-path Splitting} for a graph $G$ with atom-path(s), we obtain a collection of good subtrees and an atom-path with multiple edges attached to it (see Example~\ref{example:atom-path-splitting}).

\begin{example} \label{example:atom-path-splitting}
Consider a tree graph $G$ with an atom-path $(e^1, e^2, e^3)$.
After cutting the atom-path we obtain four subtrees $T_1, T_2, T_3, T_4$ (see Figure~\ref{fig:general-splitting-example}).
By definition $T_2$ and $T_4$ are good subtrees.

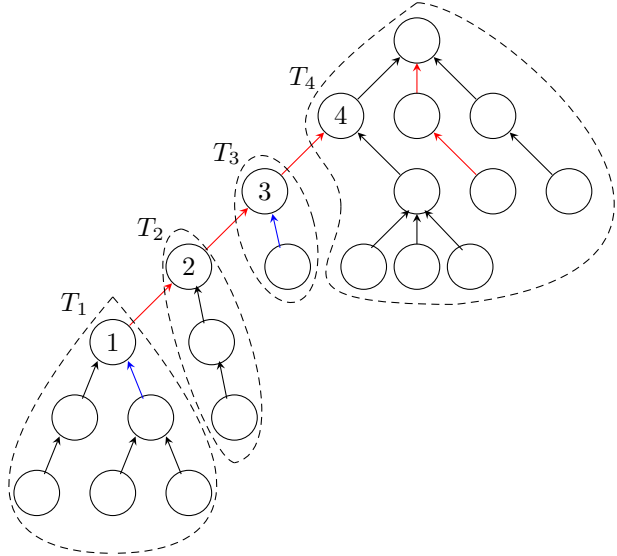
\begin{figure}[ht]
    \centering
    \begin{tikzpicture}	
        \draw (0,0) circle (0.3); \node at (0,0) {$1$};
        \draw[red][-stealth] (0.22,0.22)--(0.78,0.78);
        \draw (1,1) circle (0.3); \node at (1,1) {$2$};
        \draw[red][-stealth] (1.22,1.22)--(1.78,1.78);
        \draw (2,2) circle (0.3); \node at (2,2) {$3$};
        \draw[red][-stealth] (2.22,2.22)--(2.78,2.78);
        \draw (3,3) circle (0.3); \node at (3,3) {$4$};
        %

        \node at (-0.5,0.5) {$T_1$};
        \draw[densely dashed] (0,0.6) .. controls (2.2,-2) and (1.4,-2.8) .. (0,-2.8);
        \draw[densely dashed] (0,0.6) .. controls (-2.2,-2) and (-1.4,-2.8) .. (-0,-2.8);
        \draw (-0.5,-1) circle (0.3);
        \draw[-stealth] (-0.4,-0.75)--(-0.2,-0.25);
        \draw (-1,-2) circle (0.3);
        \draw[-stealth] (-0.9,-1.75)--(-0.7,-1.25);
        \draw (0.5,-1) circle (0.3);
        \draw[blue][-stealth] (0.4,-0.75)--(0.2,-0.25);
        \draw (0,-2) circle (0.3);
        \draw[-stealth] (0.1,-1.75)--(0.3,-1.25);
        \draw (1,-2) circle (0.3);
        \draw[-stealth] (0.9,-1.75)--(0.7,-1.25);
        
        \node at (0.5,1.5) {$T_2$};
        \draw[densely dashed] (0.9,1.5) .. controls (1.5,1.4) and (2.5,-1.4) .. (1.6,-1.6);
        \draw[densely dashed] (0.9,1.5) .. controls (0.2,1.4) and (1.2,-1.5) .. (1.6,-1.6);
        \draw (1.3,0) circle (0.3);
        \draw[-stealth] (1.2,0.25)--(1.1,0.75); 
        \draw (1.6,-1) circle (0.3);
        \draw[-stealth] (1.5,-0.75)--(1.4,-0.25); 

        \node at (1.5,2.5) {$T_3$};
        \draw[densely dashed] [rotate around ={15:(2.15,1.5)}] (2.15, 1.5) ellipse (0.5 and 1);
        \draw (2.3,1) circle (0.3);
        \draw[blue][-stealth] (2.2,1.25)--(2.1,1.7);

        \node at (2.5,3.5) {$T_4$};
        \draw[densely dashed] (4,4.5) .. controls (5,4.5) and (9,1) .. (4.5,0.5);
        \draw[densely dashed] (4,4.5) .. controls (1.45,2.65) and (3,2.7) .. (3,1.8);
        \draw[densely dashed] (3,1.8) .. controls (3,1) and (2,0.5) .. (4.5,0.5);
        \draw (4,2) circle (0.3);
        \draw[-stealth] (3.78,2.22)--(3.22,2.78); 
        \draw (3.3,1) circle (0.3);
        \draw[-stealth] (3.4,1.25)--(3.9,1.75);    
        \draw (4,1) circle (0.3);  
        \draw[-stealth] (4,1.3)--(4,1.7); 
        \draw (4.7,1) circle (0.3);
        \draw[-stealth] (4.6,1.22)--(4.1,1.75); 
        \draw[-stealth] (3.22,3.22)--(3.78,3.78);
        \draw (4,4) circle (0.3); 
        \draw[red][-stealth] (4,3.3)--(4,3.7);
        \draw (4,3) circle (0.3);    
        \draw[red][-stealth] (4.78,2.22)--(4.22,2.78);
        \draw (5,2) circle (0.3);
        \draw[-stealth] (4.78,3.22)--(4.22,3.78);
        \draw (5,3) circle (0.3);       
        \draw[-stealth] (5.78,2.22)--(5.22,2.78);
        \draw (6,2) circle (0.3);
    \end{tikzpicture}
    \caption{An example tree with an atom-path $(e^1, e^2, e^3)$.}
    \label{fig:general-splitting-example}
\end{figure}

Since $T_3$ has size one, we can simply attached the edge to the atom-path.
Observe that $T_1$ also has odd size.
Hence there must exits an edge (the edge with blue color in Figure~\ref{fig:general-splitting-example}) that is incident to node $1$ whose removal results in a collection of even-size subtrees.
The splitting result is as follows (see Figure~\ref{fig:general-splitting-result}).

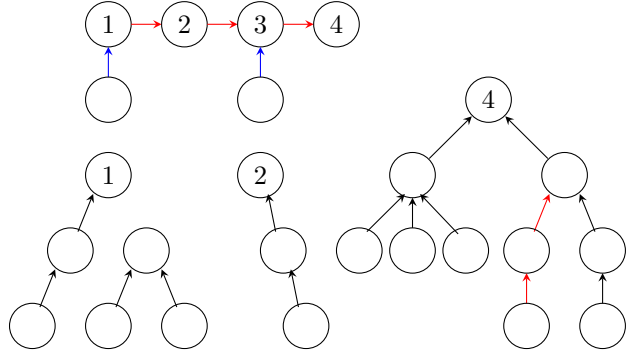
\begin{figure}[ht]
    \centering
    \begin{tikzpicture}	
        \draw (0,4) circle (0.3); \node at (0,4) {$1$};
        \draw[red][-stealth] (0.3,4)--(0.7,4);
        \draw (1,4) circle (0.3); \node at (1,4) {$2$};
        \draw[red][-stealth] (1.3,4)--(1.7,4);
        \draw (2,4) circle (0.3); \node at (2,4) {$3$};
        \draw[red][-stealth] (2.3,4)--(2.7,4);
        \draw (3,4) circle (0.3); \node at (3,4) {$4$};
        \draw[blue][-stealth] (2,3.3)--(2,3.7);
        \draw (2,3) circle (0.3);
        \draw[blue][-stealth] (0,3.3)--(0,3.7);
        \draw (0,3) circle (0.3);
        %

        \draw (0,2) circle (0.3); \node at (0,2) {$1$};
        \draw (-0.5,1) circle (0.3);
        \draw[-stealth] (-0.4,1.25)--(-0.2,1.75);
        \draw (-1,0) circle (0.3);
        \draw[-stealth] (-0.9,0.25)--(-0.7,0.75);

        \draw (0.5,1) circle (0.3);
        \draw (0,0) circle (0.3);
        \draw[-stealth] (0.1,0.25)--(0.3,0.75);
        \draw (1,0) circle (0.3);
        \draw[-stealth] (0.9,0.25)--(0.7,0.75);
        
        \draw (2,2) circle (0.3);   \node at (2,2) {$2$}; 
        \draw (2.3,1) circle (0.3);
        \draw[-stealth] (2.2,1.25)--(2.1,1.75); 
        \draw (2.6,0) circle (0.3);
        \draw[-stealth] (2.5,0.25)--(2.4,0.75); 


        \draw (4,2) circle (0.3);
        \draw[-stealth] (4.22,2.22)--(4.78,2.78);
        \draw (3.3,1) circle (0.3);
        \draw[-stealth] (3.4,1.25)--(3.9,1.75);    
        \draw (4,1) circle (0.3);  
        \draw[-stealth] (4,1.3)--(4,1.7); 
        \draw (4.7,1) circle (0.3);
        \draw[-stealth] (4.6,1.22)--(4.1,1.75); 
        \draw (5,3) circle (0.3);      \node at (5,3) {$4$};       
        \draw[-stealth] (5.78,2.22)--(5.22,2.78);
        \draw (6,2) circle (0.3);
        \draw[red][-stealth] (5.6,1.25)--(5.8,1.75);
        \draw (5.5,1) circle (0.3);
        \draw[-stealth] (6.4,1.25)--(6.2,1.75);
        \draw (6.5,1) circle (0.3);
        \draw[red][-stealth] (5.5,0.3)--(5.5,0.7);
        \draw (5.5,0) circle (0.3);
        \draw[-stealth] (6.5,0.3)--(6.5,0.7);
        \draw (6.5,0) circle (0.3);
    \end{tikzpicture}
    \caption{The splitting result of Figure~\ref{fig:general-splitting-example}.}
    \label{fig:general-splitting-result}
\end{figure}
\end{example}

Note that each node in the atom-path is attached at most one edge.
We call the resulting structure an \emph{expanded atom-path}.

\begin{definition}[Expanded Atom-path]
    We call a subtree an \emph{expanded atom-path} if it is composed of an atom-path and a collection of attached edges, where each node in the atom-path is incident to at most one attached edge.
\end{definition}

\begin{lemma}\label{lemma:atom-path-splitting}
    For any graph $G$ containing (at least) one atom-path, Algorithm~\ref{alg:GD} splits $G$ into an expanded atom-path and a collection of good subtrees.
\end{lemma}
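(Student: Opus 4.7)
The plan is to verify the two assertions of the lemma: that every output piece other than the central one is a good subtree, and that the central piece is an expanded atom-path. The starting observation is that since $G$ is a tree, removing the edges of the chosen atom-path yields a forest whose connected components correspond bijectively to the nodes of the atom-path, with each component ``rooted'' at its unique atom-path vertex (any two atom-path nodes were joined only by the removed edges, while non-atom-path nodes inherit their unique path in $G$ to some atom-path vertex).

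The first two branches of the \textbf{for}-loop are immediate: a component containing an atom-path, or of even size, is good by definition. The real work is in the third branch, where a component $T$ has odd size and no atom-path. I would prove existence of the attachable edge $(i,j)$ by parity. Let $i$ be the unique atom-path node in $T$, let $j_1,\ldots,j_k$ be its children in $T$, and let $|T_\ell|$ denote the number of edges in the subtree rooted at $j_\ell$ (not counting the edge $(i,j_\ell)$). Then $|T|=k+\sum_\ell |T_\ell|$; if every $|T_\ell|$ were odd, the sum $k+\sum_\ell|T_\ell|$ would have the parity of $2k$ and hence be even, contradicting $|T|$ odd. So some $|T_\ell|$ is even. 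Choosing $j=j_\ell$ for such an $\ell$, removing the edge $(i,j)$ from $T$ yields two components of sizes $|T_\ell|$ and $|T|-|T_\ell|-1$, both even. Neither contains an atom-path, since atom-paths are preserved under edge removal and $T$ had none; Lemma~\ref{lemma:simple-splitting} then lets Simple Splitting decompose each into size-$2$ subtrees, all of which are good because they have even size.

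For the expanded atom-path structure, observe that distinct components of the forest lie at distinct nodes of the atom-path, so each atom-path node can receive at most one attached edge from the third branch, while the first two branches contribute no attached edge at all. Hence the central piece is the atom-path together with a collection of edges, each incident to a distinct atom-path node, which is exactly an expanded atom-path. The main obstacle is the existence claim for the attachable edge $(i,j)$ in the third branch; the parity argument above settles it cleanly, and the remaining verification is essentially structural bookkeeping on how the algorithm routes each component.
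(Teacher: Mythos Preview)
Your proof is correct and follows essentially the same approach as the paper's: both identify the unique atom-path vertex $i$ in the odd-size component $T$, use a parity argument on the subtrees hanging off $i$ to find an edge whose removal leaves two even-size pieces, and conclude by observing that at most one edge is attached per atom-path node. You spell out the parity step and the bijection between components and atom-path nodes more explicitly than the paper does, which is helpful; the only cosmetic issue is that you reuse $k$ for the number of children of $i$, clashing with the paper's use of $k$ for the atom-path length.
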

\begin{proof}
    To prove the lemma, it suffices to consider the else condition in line $6$ as in other cases the subtree $T$ is a good tree.
    Let $T$ be a subtree with odd size and no atom-path.
    Let $i$ be the node in both the atom-path and $T$.
    Let $(i,j_1), \ldots, (i,j_t)$ be the edges incident to $i$ in $T$, and let $T'_x$ be the subtree containing $j_x$ after the hypothetical deletion of edge $(i,j_x)$.
    Since $T$ has odd size, at least one of $T'_1,\ldots,T'_t$ must have even size.
    Suppose $T'_x$ has even size.
    We attach edge $(i,j_x)$ to the atom-path and delete it from $T$.
    Then we obtain two subtrees: $T'_x$ and $T\setminus (T'_x \cup {(i,j_x)})$, both of which have even size.
    Since we attach at most one edge of each node of the atom-path, the result of the splitting is an expanded atom-path and a collection of good subtrees.
\end{proof}

Recall that we can upper bound the subsidy required to round a good subtree with $z$ edges by $z/3$.
To upper bound the total subsidy of a graph with at least one atom-path, it remains to bound the subsidy required to round an expanded atom-path.

\begin{lemma}\label{lemma:expanded-atom-path}
    For any expanded atom-path consisting of an atom-path with $k$ edges and $h\leq k+1$ attached edges, there exists a rounding scheme with total subsidy bounded by $(k+h)/3$.
\end{lemma}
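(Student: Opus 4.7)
My plan is to prove the lemma by fixing a family of candidate rounding strategies parameterized by the recipient of the shattered item, and then combining an averaging argument with a case analysis on $h$.

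First I would set up the notation: write the atom-path as $v_0,\dots,v_k$ with fractions $y_0,\dots,y_k$ of the shattered item $e$ (summing to $1$) and per-agent costs $a_t=c_{v_t}(e)\le 1$; for each node $v_t$ carrying an attached edge (there are $h$ such nodes, forming a set $A\subseteq\{0,\dots,k\}$), let $u_t$ denote the external endpoint, $f_t$ the attached item with fractions $\beta_t$ at $v_t$ and $1-\beta_t$ at $u_t$, and $b_t=c_{v_t}(f_t)$, $b'_t=c_{u_t}(f_t)$. Because every non-root node on the atom-path, and every external node $u_t$, became inactive in FBTA at exactly its proportional share, the subsidy of any agent after rounding equals the positive part of the change in its cost.

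For the base case $h=0$, the $k+1$ candidate strategies ``round $e$ to $v_j$'' incur total subsidy $(1-y_j)\,a_j$; since $\sum_{j=0}^{k}(1-y_j)=k$ and $a_j\le 1$, averaging yields some $j$ with subsidy at most $k/(k+1)\le k/3$ (using $k\ge 2$, which always holds for an atom-path). For $h\ge 1$, I would consider the strategies $\sigma_j$ that round $e$ to $v_j$ and, for each attached edge at $v_t$ with $t\neq j$, round $f_t$ internally to $v_t$ when the slack $y_t a_t$ freed by removing $v_t$'s share of $e$ absorbs $(1-\beta_t)b_t$, and externally to $u_t$ otherwise; for $t=j$, $f_j$ is rounded externally so as not to compound $v_j$'s already positive excess. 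The goal is then to bound the minimum over $j$ of the resulting total subsidy by $(k+h)/3$ via a weighted averaging argument on $j$.

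The main obstacle is the coupling when $j\in A$ (the recipient of $e$ also carries an attached edge) and the near-tightness of the bound as $h$ approaches $k+1$. For small $h$ relative to $k$ (roughly $h\le k-2$) I expect the direct averaging to already yield the required bound, since the budget $(k+h)/3$ grows by $1/3$ per attached edge; in the tight regimes $h\in\{k-1,k,k+1\}$, however, the simple averaging fails and a finer structural analysis is needed, which I would handle through three dedicated sub-cases matching Lemmas~\ref{lemma:expanded-atom-path-k-1}, \ref{lemma:expanded-atom-path-k}, and \ref{lemma:expanded-atom-path-k+1}. In each of these cases, I would exploit the FBTA relations between $\beta_t$, $b_t$, $b'_t$, $y_t$, and $a_t$ forced by the fact that $u_t$ handed $f_t$ to $v_t$ precisely at proportional share, and carefully choose an atom-path-centric shifting of rounding decisions so that no single strategy exceeds the $(k+h)/3$ budget.
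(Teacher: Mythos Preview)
Your plan follows the paper's architecture almost exactly: reduce Lemma~\ref{lemma:expanded-atom-path} to a ``small $h$'' regime plus the three boundary cases $h\in\{k-1,k,k+1\}$, invoking Lemmas~\ref{lemma:expanded-atom-path-k-1}, \ref{lemma:expanded-atom-path-k}, \ref{lemma:expanded-atom-path-k+1} for the latter. Two points are worth sharpening. First, your ``easy'' regime is more complicated than necessary: the paper's Lemma~\ref{lemma:expanded-atom-path-general} does not average over $k{+}1$ biased strategies $\sigma_j$ but simply applies threshold rounding to \emph{every} item, giving the clean bound $k/(k+1)+h/2$, which is $\le (k+h)/3$ precisely when $h\le k\bigl(2-\tfrac{6}{k+1}\bigr)$. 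This threshold already absorbs $h=k-1$ for all $k\ge 4$, so Lemma~\ref{lemma:expanded-atom-path-k-1} is only needed for $k\in\{2,3\}$---your blanket delegation of $h=k-1$ to that lemma for all $k$ overshoots what the lemma actually covers. Second, your setup assumes each attached edge is oriented into the atom-path (``$u_t$ handed $f_t$ to $v_t$''), which forces $b_t\le a_t$ via IDO; this is exactly Observation~\ref{obser:c_i(e_0)>=c_i(e_i)}, but it can fail at the head $v_{k+1}$, whose attached edge may be \emph{outgoing} (so possibly $c_{k+1}(e_{k+1})>c_{k+1}(e_0)$). This asymmetry is what drives the two-scheme case splits inside Lemmas~\ref{lemma:expanded-atom-path-k+1} and~\ref{lemma:expanded-atom-path-k}, and your sketch of those sub-cases should account for it explicitly rather than treating all $t$ uniformly.
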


We defer the proof of Lemma~\ref{lemma:expanded-atom-path} to Section~\ref{ssec:expanded-atom-path}.
Given Lemma~\ref{lemma:expanded-atom-path}, we are ready to prove Lemma~\ref{lemma:general-tree-subsidy}.


\begin{replemma}{lemma:general-tree-subsidy}
    For any tree with size $z \geq 2$ such that there exists an atom-path in the tree, there exists a rounding scheme with total subsidy bounded by $z/3$.
\end{replemma}

\begin{proofof}{Lemma~\ref{lemma:general-tree-subsidy}}
    We prove the lemma by induction on $z$.
    We first consider the base case of a tree with $z=2$ edges, which is an atom-path.
    In such case, we only have one fractional item that is shared by three agents.
    Using threshold rounding, we round the item to the agent holding the maximum fraction of it, which incurs a subsidy of at most $2/3$.
    Hence the lemma holds for the base case.
    
    Now suppose that for any tree with edges $z' < z$ containing at least one atom-path, there exists a rounding scheme with total subsidy at most $z'/3$.
    We consider a tree with $z$ edges containing at least one atom-path.
    Following Lemma~\ref{lemma:atom-path-splitting}, given any tree with at least one atom-path, the \textsf{Atom-path splitting} returns an expanded atom-path and a collection of subtrees that either contains an atom-path or has even size.
    For the former case we can bound its required subsidy by $z'/3$, where $z'$ is the number of edges it contains, by induction hypothesis.
    For the later case we can also bound the required subsidy by $z'/3$ by \splitting and Theorem~\ref{thm:subsidy-n-1}.
    Finally, by Lemma~\ref{lemma:expanded-atom-path}, for expanded atom-path we can also bound the required subsidy by $1/3$ of its size.
    Since the summation of the size of all good subtrees and the expanded atom-path is $z$, the total subsidy required is at most $z/3$, and the lemma follows.
\end{proofof}

\subsection{Expanded Atom-path}\label{ssec:expanded-atom-path}

In this section, we consider the expanded atom-paths and prove Lemma~\ref{lemma:expanded-atom-path}.
We first define some notation.
Fix any expanded atom-path $T$, we assume that the atom-path has $k\geq 2$ edges, which correspond to the one single item denoted by $e_0$.
We denote the nodes in the atom-path by agents $1, \ldots, k+1$, where agent $i+1$ is the successor of agent $i$, for all $i=1,2,\ldots,k$.
We denote the edge attached to agent $i$ by $e_i$.
When there is no attached edge to agent $i$, we define $e_i = \bot$.
We call agents $1,2,\ldots,k+1$ the \emph{atom-path agents} and the other agents the \emph{attached agents}.
We use $h \leq k+1$ to denote the number of attached edges/agents in the expanded atom-path.
For each item $e_i$, where $i\in \{1,\ldots,k+1\}$, we use $y_i$ to denote the fraction of item $e_i$ held by the corresponding attached agent, i.e., $y_i = 1 - x_i(e_i)$.
For all $i\in \{1, \ldots, k+1\}$, when there is no ambiguity, we use $x_i$ to denote $x_i(e_0)$ for convenience.

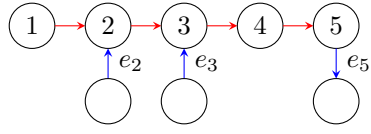
\begin{figure}[ht]
    \centering
    \begin{tikzpicture}	
        \draw (0,4) circle (0.3); \node at (0,4) {$1$};
        \draw[red][-stealth] (0.3,4)--(0.7,4);
        \draw (1,4) circle (0.3); \node at (1,4) {$2$};
        \draw[red][-stealth] (1.3,4)--(1.7,4);
        \draw (2,4) circle (0.3); \node at (2,4) {$3$};
        \draw[red][-stealth] (2.3,4)--(2.7,4);
        \draw (3,4) circle (0.3); \node at (3,4) {$4$};
        \draw[red][-stealth] (3.3,4)--(3.7,4);
        \draw (4,4) circle (0.3); \node at (4,4) {$5$};
        \draw[blue][-stealth] (4,3.7)--(4,3.3);
        \node at (4.3,3.5) {$e_5$};
        \draw (4,3) circle (0.3);
        \draw[blue][-stealth] (2,3.3)--(2,3.7);
        \node at (2.3,3.5) {$e_3$};
        \draw (2,3) circle (0.3);
        \draw[blue][-stealth] (1,3.3)--(1,3.7);
        \node at (1.3,3.5) {$e_2$};
        \draw (1,3) circle (0.3);
    \end{tikzpicture}
    \caption{Example of an expanded atom-path with $k=4$ and $h=3$, where the red edges correspond to item $e_0$, the three blue edges correspond to items $e_2, e_3$ and $e_5$ respectively.}
\end{figure}

Now we are ready to upper bound the subsidy required for expanded atom-paths.
We prove Lemma~\ref{lemma:expanded-atom-path} by giving the following arguments for different types of expanded atom-paths respectively.

\begin{lemma}\label{lemma:expanded-atom-path-general}
    For any expanded atom-path with $h \leq k \cdot (2-\frac{6}{k+1})$, there exists a rounding scheme with total subsidy at most $(k+h)/3$.
\end{lemma}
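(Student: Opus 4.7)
The plan is to exhibit $k+1$ candidate rounding schemes, one for each choice of which atom-path agent receives the shared item $e_0$, and show that the best one meets the target subsidy by a simple averaging argument over these schemes.

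For each $j \in \{1, \ldots, k+1\}$, define scheme $R_j$ as follows: round $e_0$ entirely to agent $j$, and for each attached edge $e_i$ (with $i \in A$, the set of $h$ indices hosting an attached edge), round $e_i$ to whichever of its two endpoints contributes less to the total subsidy. I would first verify the per-scheme subsidy bound. Since every atom-path agent $i$ became inactive when finishing $e_0$, we have $c_i(\bx_i) \leq \WPROP_i$, so rounding $e_0$ to agent $j$ contributes at most $(1 - x_j)\, c_j(e_0) \leq 1 - x_j$ to the subsidy, while removing $e_0$ from the other atom-path agents only decreases their cost. For an attached edge $e_i$, rounding to atom-path agent $i$ contributes at most $y_i\, c_i(e_i) \leq y_i$, while rounding to the attached agent contributes at most $(1 - y_i)\, c_{\text{att}}(e_i) \leq 1 - y_i$; set $\beta_i := \min(y_i, 1 - y_i) \leq 1/2$. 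Because different attached edges involve disjoint items and distinct attached agents, and because at agent $j$ the contributions from $e_0$ and (if $j \in A$) from $e_j$ add additively, the total subsidy for scheme $R_j$ is at most
\begin{equation*}
S_j \;\leq\; (1 - x_j) \;+\; \sum_{i \in A} \beta_i.
\end{equation*}

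Next, I would use the identity $\sum_{j=1}^{k+1} x_j = 1$ (since $e_0$ is fully distributed along the atom-path) to average $S_j$ over $j$:
\begin{equation*}
\min_{j} S_j \;\leq\; \frac{1}{k+1}\sum_{j=1}^{k+1} S_j \;=\; \frac{k}{k+1} + \sum_{i \in A} \beta_i \;\leq\; \frac{k}{k+1} + \frac{h}{2}.
\end{equation*}
It remains to verify that $\tfrac{k}{k+1} + \tfrac{h}{2} \leq \tfrac{k+h}{3}$ under the hypothesis $h \leq k\bigl(2 - \tfrac{6}{k+1}\bigr)$. Clearing denominators by multiplying both sides by $6(k+1)$ and rearranging yields $h(k+1) \leq 2k(k-2)$, i.e., $h \leq \tfrac{2k(k-2)}{k+1} = k\bigl(2 - \tfrac{6}{k+1}\bigr)$, which is exactly the hypothesis.

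The main obstacle, though not deep, is the subsidy bookkeeping: one must check that the optimal rounding choice for each attached edge is decoupled from the choice of $j$ (so the averaging argument works term-by-term), and that at the shared agent $j$ the subsidy contributions from $e_0$ and from $e_j$ simply add. Both facts follow from additivity of costs, the fact that each attached agent is incident to a single attached edge, and the observation that every atom-path agent is weakly better off (in the subsidy sense) when $e_0$ is taken away from them.
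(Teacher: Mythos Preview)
Your proof is correct and follows essentially the same approach as the paper: both arguments use threshold rounding on every attached edge (giving at most $1/2$ each) and then bound the contribution of $e_0$ by $k/(k+1)$, after which the algebraic condition $h\le k\bigl(2-\tfrac{6}{k+1}\bigr)$ finishes the job. The only cosmetic difference is that the paper reaches the $k/(k+1)$ bound directly by picking the single agent $i^*=\arg\max_j x_j$ (so $1-x_{i^*}\le k/(k+1)$), whereas you arrive at the same number via an averaging over the $k+1$ schemes $R_j$.
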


\begin{lemma}\label{lemma:expanded-atom-path-k+1}
    For any expanded atom-path with $h = k+1$, there exists a rounding scheme with total subsidy at most $(k+h)/3$.
\end{lemma}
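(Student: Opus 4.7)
The plan is to analyze a parametric family of rounding schemes indexed by the choice of atom-path agent $j \in \{1,\ldots,k+1\}$ that receives the atom-path item $e_0$. Denote by $x_i := x_i(e_0)$ the fraction of $e_0$ held by atom-path agent $i$ (so $\sum_i x_i = 1$), by $y_i \in [0,1]$ the fraction of the attached edge $e_i$ held by the attached agent of $i$, and write $\alpha_i = c_i(e_0)$ and $\beta_i = c_i(e_i)$. A key structural observation is that by the IDO property and the order in which FBTA processes items, atom-path agent $i$ received (part of) $e_i$ before any fraction of $e_0$, so $\beta_i \le \alpha_i \le 1$; this inequality will be crucial in the tight regime.

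For each $j$, consider the scheme $S(j)$: round $e_0$ to atom-path agent $j$, round $e_j$ to the attached agent of $j$, and for each $i \ne j$, round $e_i$ to whichever of atom-path agent $i$ or the attached agent of $i$ yields the smaller local subsidy. Using WPROP of $\bx$, each agent's subsidy is bounded by the change in cost from fractional to integral allocation, and I bound the total subsidy for $S(j)$ as the sum of: $((1-x_j)\alpha_j - (1-y_j)\beta_j)^+$ for atom-path agent $j$; at most $1 - y_j$ for the attached agent of $j$; and for each $i \ne j$, at most $\min\{1 - y_i,\, (y_i\beta_i - x_i\alpha_i)^+\}$, where the second term exploits the slack $x_i\alpha_i$ that atom-path agent $i$ gains from losing its $x_i$-fraction of $e_0$.

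Summing and averaging over $j = 1,\ldots,k+1$, using $\sum_i x_i = 1$, produces a bound depending only on $Y := \sum_i y_i$ and $k$, and since the minimum over $j$ is at most the average, this yields a single bound. For regimes of $Y$ where this averaged bound already meets $(2k+1)/3$ we are done; for the extreme regimes I will introduce two auxiliary schemes, one rounding every attached edge to its attached agent (useful when $Y$ is large, as many attached agents then hold most of their edge) and one rounding every attached edge to its atom-path agent (useful when $Y$ is small, since each atom-path agent's slack $x_i \alpha_i$ absorbs the rounding cost), covering the remaining values of $Y$.

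The main obstacle will be the intermediate regime of $Y$ combined with small $k$ (notably $k = 2$, where the target $5/3$ is tight). There neither auxiliary scheme nor the averaged $S(j)$ bound suffices directly, and I will have to exploit $\beta_j \le \alpha_j$ more carefully: picking $j$ to minimize $y_j$ (equivalently, to maximize $1 - y_j$) guarantees $(1-y_j)\beta_j \ge (1-x_j)\alpha_j$ whenever the ratios of $x_j$ and $y_j$ are suitably balanced, which zeroes out atom-path agent $j$'s contribution and provides the additional slack needed to close the gap to $(2k+1)/3$; the remaining small-$k$ corner cases will likely need a direct verification using this sharpened bound.
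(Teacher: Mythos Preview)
Your central structural claim---that $\beta_i = c_i(e_i) \le c_i(e_0) = \alpha_i$ for every atom-path agent $i$---fails in general for $i = k+1$. For $i \le k$, agent $i$'s unique outgoing edge in the item-sharing graph is the atom-path edge corresponding to $e_0$, so the attached edge $e_i$ must be incoming and hence was processed before $e_0$, giving $\beta_i \le \alpha_i$ by IDO. But agent $k+1$'s outgoing edge (if any) is \emph{not} part of the atom-path; the attached edge $e_{k+1}$ can be agent $k+1$'s own outgoing edge $e^{k+1}$, in which case it is processed \emph{after} $e_0$ and $\beta_{k+1} \ge \alpha_{k+1}$. This is precisely the content of Observation~\ref{obser:c_i(e_0)>=c_i(e_i)} and the remark following it, and the paper's proof of Lemma~\ref{lemma:expanded-atom-path-k+1} splits into two cases on the sign of $c_{k+1}(e_0) - c_{k+1}(e_{k+1})$ for exactly this reason. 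Since you declare the inequality ``crucial in the tight regime'' and use it to control the biased rounding terms, this is a genuine gap: when $\beta_{k+1} > \alpha_{k+1}$, your local bound $\min\{1-y_{k+1},\,(y_{k+1}\beta_{k+1} - x_{k+1}\alpha_{k+1})^+\}$ for $k+1 \ne j$ need not be at most $(1-x_{k+1})/2$, and the zeroing argument you propose for the tight case has no reason to go through if the minimizing $j$ happens to be $k+1$.

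Separately, the averaging step is asserted rather than carried out: the expression $\sum_j B_j$ retains the $\alpha_i,\beta_i$ through the $(\cdot)^+$ terms and does not obviously reduce to a function of $Y = \sum_i y_i$ and $k$ alone. The paper avoids averaging entirely: it always assigns $e_0$ to agent $k+1$ (or, in the symmetric case, to the agent with largest $x_i$), applies biased threshold rounding to $e_1,\ldots,e_k$ to get $\sum_{i\le k}(1-x_i)/2$, and then compares only two ways of rounding $e_{k+1}$, with the case split on $c_{k+1}(e_0)$ versus $c_{k+1}(e_{k+1})$ absorbing exactly the asymmetry you overlooked. Fixing your argument would at minimum require isolating agent $k+1$ and treating it separately, at which point you are close to the paper's proof anyway.
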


\begin{lemma}\label{lemma:expanded-atom-path-k}
    For any expanded atom-path with $h = k$, there exists a rounding scheme with total subsidy at most $(k+h)/3$.
\end{lemma}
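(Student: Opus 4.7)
My plan is to prove Lemma~\ref{lemma:expanded-atom-path-k} by adapting the averaging-of-schemes strategy used for Theorem~\ref{thm:three-agents} to this larger structure. Since $h = k$, exactly one atom-path agent, call her $j^*$, has no attached edge, and this asymmetry is the key structural feature I will exploit when constructing candidate rounding schemes.

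First, I fix notation: let $x_i = x_i(e_0)$ and let $y_i$ be as defined in Section~\ref{ssec:expanded-atom-path}, and recall that atom-path agents $1,\ldots,k$ and each attached agent satisfy $c_i(\bx_i)=\WPROP_i$ exactly. Consequently, every agent's post-rounding subsidy equals the non-negative part of the net change in her cost, which is piecewise linear in the rounding decisions. Next, I would enumerate a small family of candidate rounding schemes: (a) round $e_0$ to $j^*$ and apply threshold rounding on every attached edge; (b) round $e_0$ to the first atom-path agent and send each attached edge to its attached agent; (c) the symmetric version of (b), rounding $e_0$ to the last atom-path agent; and possibly (d) a scheme that rounds $e_0$ by threshold while pushing every attached edge to its atom-path endpoint. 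For each scheme I would write the total subsidy explicitly as a piecewise-linear function of the $x_i$, $y_i$, and the relevant cost ratios.

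Then I would take a carefully chosen weighted average of these schemes so that the $(\cdot)^+$ terms combine into globally bounded quantities. The identity $\sum_{i=1}^{k+1} x_i = 1$ collapses the scheme-dependent subsidy on $e_0$ to a constant, and Proposition~\ref{proposition:xy<1/4}-type inequalities eliminate the remaining positive parts across pairs of ``crossed'' schemes. The target is a bound of the form (sum of subsidies over the family) $\le c \cdot \tfrac{2k}{3}$, where $c$ is the family size, from which the minimum-subsidy scheme achieves at most $\tfrac{2k}{3} = (k+h)/3$, as required.

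The main obstacle I expect is the \emph{interior} case, when $j^*$ lies strictly inside the atom-path. There, scheme (a) controls the subsidy well near $j^*$ but the two halves of the atom-path interact through $e_0$, so neither (b) nor (c) alone suffices to balance the subsidy on both sides. To handle this, I expect to need additional candidate schemes tailored to the split position, or alternatively to argue by induction on $k$: detach one attached edge from the heavier side of $j^*$ to reduce to either an instance covered by Lemma~\ref{lemma:expanded-atom-path-general} (if the remaining $h'$ drops into that regime) or to a smaller instance of the present lemma via the induction hypothesis. Ensuring that the bookkeeping for this reduction never exceeds the $\tfrac{2k}{3}$ budget is where most of the care will be needed.
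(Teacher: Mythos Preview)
Your proposal has a genuine gap: it never uses the \emph{biased threshold rounding} (Proposition~\ref{proposition:modified-greedy}), which is the engine that makes the averaging argument close. With vanilla threshold rounding on every attached edge, scheme~(a) gives a subsidy of at most $(1-x_{j^*}) + k/2$, and this exceeds $2k/3$ whenever $x_{j^*} < 1 - k/6$; your schemes~(b),~(c),~(d) are each of order $k$ in the worst case and do not cancel the excess when averaged, because the $(1-y_i)$ terms they produce are unrelated to the $1/2$'s from scheme~(a). The paper's key observation is that when agent $i$ does \emph{not} receive $e_0$, she frees up $x_i\cdot c_i(e_0)$ of slack, and since $c_i(e_0)\ge c_i(e_i)$ for $i\le k$ (Observation~\ref{obser:c_i(e_0)>=c_i(e_i)}), the attached edge $e_i$ can be rounded with subsidy only $(1-x_i)/2$ rather than $1/2$. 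Summing $\sum_{i\neq j}(1-x_i)/2 + (1-x_j) = k/2 + (1-x_j)/2$ then uses $\sum_i x_i = 1$ in the way you anticipated, and already gives $\le 2k/3$ for all $k\ge 3$; only $k=2$ needs extra work.

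You also misidentify the main obstacle. The position of $j^*$ along the atom-path is irrelevant; what matters is whether $j^*=k+1$ and, if not, whether $c_{k+1}(e_0)\ge c_{k+1}(e_{k+1})$, because agent $k+1$ is the one node for which Observation~\ref{obser:c_i(e_0)>=c_i(e_i)} may fail and hence biased threshold rounding may not apply to $e_{k+1}$. The paper's proof splits on exactly this, and in the bad case rounds $e_0$ to agent $k+1$ (or to $j^*$) and handles $e_{k+1}$ separately with a two-scheme min. Finally, your fallback of detaching one attached edge does not reduce $k$, only $h$, and the arithmetic does not close: rounding the detached edge costs up to $1/2$, and the residual $(k,h{-}1)$ instance costs up to $(2k-1)/3$ by Lemma~\ref{lemma:expanded-atom-path-general}, giving $(4k+1)/6 > 2k/3$.
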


\begin{lemma}\label{lemma:expanded-atom-path-k-1}
    For any expanded atom-path with $2 \leq k \leq 3$ and $h = k-1$, there exists a rounding scheme with total subsidy at most $(k+h)/3$.
\end{lemma}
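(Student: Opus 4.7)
The plan is to prove Lemma~\ref{lemma:expanded-atom-path-k-1} by an explicit case analysis, following the pattern of Lemmas~\ref{lemma:expanded-atom-path-k+1} and \ref{lemma:expanded-atom-path-k} but adapted to the regime $h = k-1$ with $k \in \{2,3\}$. Since $k$ is small, the expanded atom-path has either $3$ or $4$ atom-path agents and either $1$ or $2$ attached edges. I first enumerate, up to the natural reflection symmetry of the atom-path, all possible positions of the attached edges: for $k=2, h=1$ there are two cases (attached to an endpoint, or to the middle agent), and for $k=3, h=2$ there are roughly three cases (attached to the two endpoints, to one endpoint together with one interior agent, or to the two interior agents).

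For each configuration I introduce a family of candidate rounding schemes and upper bound each of their subsidies using exactly the same bookkeeping as in Lemma~\ref{lemma:four-upperbounds}. The schemes come in three flavors: (i) \emph{threshold} schemes that round $e_0$ to the atom-path agent $i^*$ with the largest fraction $x_{i^*}$ and round each attached item $e_j$ to its endpoint with the larger mass; (ii) \emph{biased} schemes that round $e_0$ to a specifically chosen atom-path agent $j$ carrying an attached edge, and simultaneously round $e_j$ to the attached agent, so that the extra mass $(1-x_j)\cdot c_j(e_0)$ absorbed by $j$ is at least partially offset by the mass $(1-y_j)\cdot c_j(e_j)$ that $j$ relinquishes; (iii) the mirror variants that push $e_0$ away from the edge-carrying agents. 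In each scheme the total subsidy decomposes into positive parts involving the $x_i$'s, the $y_j$'s and the cost ratios $\alpha_j = c_j(e_j)/c_j(e_0)$, in direct analogy with Table~\ref{tab:three-upperbounds}.

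For each case I then pick non-negative coefficients $\lambda_1, \ldots, \lambda_t$ with $\sum_r \lambda_r = \Lambda$ so that the weighted sum $\sum_r \lambda_r \cdot s_r$ of the candidate subsidies is bounded by $\Lambda \cdot (k+h)/3$, uniformly over the admissible parameter ranges. This immediately yields $\min_r s_r \leq (k+h)/3$, as required. As in Theorem~\ref{thm:three-agents}, the sign of each $(\cdot)^+$ term splits the analysis into subcases, and some of those invoke Proposition~\ref{proposition:xy<1/4} to control cross terms of the form $y_j(1-\alpha_j)$ or $(1-y_j)\alpha_j$.

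The main obstacle is the combinatorial branching: for every configuration, every activation pattern of the $(\cdot)^+$ expressions gives a different linear program for the coefficients $\lambda_r$, and the tight cases (notably $k=3, h=2$ with the two attached edges in non-symmetric positions) leave very little slack above $(k+h)/3$. To close the gap I will exploit the global constraints imposed by FBTA on the fractional allocation, in particular $\sum_{i=1}^{k+1} x_i = 1$ and the saturation identity $c_i(\bx_i) = \WPROP_i$ for every atom-path agent except possibly the last (see Lemma~\ref{lemma: wprop}); without these relations the naive bound $\tfrac{k}{k+1} + \tfrac{h}{2}$ from independent threshold rounding already exceeds $(k+h)/3$ in exactly this regime, which is why the biased schemes and their weighted combination are essential.
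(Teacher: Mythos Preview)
Your plan has a concrete flaw and misses the paper's main simplification.

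\textbf{The symmetry assumption is wrong.} The atom-path is directed and \emph{not} reflection-symmetric: Observation~\ref{obser:c_i(e_0)>=c_i(e_i)} guarantees $c_i(e_0) \geq c_i(e_i)$ only for $i \leq k$, not for $i = k+1$, and therefore the $(1-x_i)/2$ bound of Proposition~\ref{proposition:modified-greedy} is available only for attached edges on agents $1,\dots,k$. Your enumeration ``up to the natural reflection symmetry'' collapses genuinely different situations: for $k=2,h=1$ the three positions (edge on agent $1$, $2$, or $3$) are all distinct, and the case ``edge on agent $3=k+1$'' behaves differently from ``edge on agent $1$''. You acknowledge later that the last agent is special, but your case count ignores it, so any argument you run for a generic ``endpoint'' silently uses an inequality that can fail.

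\textbf{The paper's approach is simpler.} Since $h=k-1$, there are \emph{two} atom-path agents $i,j$ with no attached edge. The paper rounds $e_0$ to $\arg\max\{x_i,x_j\}$ (not to the global maximizer as in your scheme (i), and not to an edge-carrying agent as in your scheme (ii)). Then every attached edge sits on an agent who did not receive $e_0$, and whenever that agent lies in $\{1,\dots,k\}$ the biased threshold bound applies; the subsidy telescopes against $\sum_\ell x_\ell = 1$ and $x_j\leq x_i$ to give at most $3/2$ for $k=3$ and at most $1$ for $k=2$ with no weighted averaging at all. The only residual hard cases are when agent $k+1$ carries an attached edge: for $k=3$ the paper branches on $x_4\lessgtr 1/3$ and runs a two-scheme argument on $e_4$ via Proposition~\ref{proposition:xy<1/4}; for $k=2$ the entire structure is a path on four nodes and the paper simply cites the $n/4$ path bound (Theorem~4.10 of \cite{conf/wine/WuZZ23}). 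Your plan would need to rediscover these case splits and, for the last one, essentially re-prove the path result from scratch.
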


We first show that given the above arguments, we can prove Lemma~\ref{lemma:expanded-atom-path}.
\begin{replemma}{lemma:expanded-atom-path}
    For any expanded atom-path consisting of an atom-path with $k$ edges and $h\leq k+1$ attached edges, there exists a rounding scheme with total subsidy bounded by $(k+h)/3$.
\end{replemma}

\begin{proofof}{Lemma~\ref{lemma:expanded-atom-path}}
    Recall that $h \leq k+1$.
    Following Lemma~\ref{lemma:expanded-atom-path-k+1} and Lemma~\ref{lemma:expanded-atom-path-k}, for all the atom-paths with $h \in \{k+1,k\}$, there exists a rounding scheme with total subsidy bounded by $(k+h)/3$.
    It can be verified that $k - 1 \leq k \cdot (2 - \frac{6}{k+1})$ holds for all $k\geq 4$.
    Hence for all $k\geq 4$ and $h\leq k-1$ we can bound the total subsidy by $(k+h)/3$ using Lemma~\ref{lemma:expanded-atom-path-general}.
    Furthermore, it can be verified that $k - 1 \leq k \cdot (2 - \frac{6}{k+1})$ holds when $k = 3, h\leq 1$, and when $k=2, h=0$.
    Hence it remains to consider the cases that $k =3, h=2$; or $k =2, h=1$, which are covered by Lemma~\ref{lemma:expanded-atom-path-k-1}.
    In conclusion, for all types of expanded atom-path, there exists a rounding scheme with total subsidy at most $(k+h)/3$.
\end{proofof}

It remains to prove Lemma~\ref{lemma:expanded-atom-path-general},~\ref{lemma:expanded-atom-path-k+1},~\ref{lemma:expanded-atom-path-k} and ~\ref{lemma:expanded-atom-path-k-1}.
We use a charging argument that charges money to the fractional items $e_0, e_1, \ldots, e_{k+1}$: we charge each fractional item $e_i$ an amount of money $p_i$.
In the following analysis, we focus on bounding the money we charge for each item $e_i$ and we show that the total charge to the fractional items is sufficient to pay for the incurred subsidy.

\begin{lemma:expanded-atom-path-general}
    For an expanded atom-path with $h \leq k \cdot (2-\frac{6}{k+1})$, there exists a rounding scheme with total subsidy at most $(k+h)/3$.
\end{lemma:expanded-atom-path-general}

\begin{proof}
    We consider the threshold rounding that rounds each item to the agent that holds the maximum fraction of it.
    For item $e_0$, we round it to $i^* = \argmax_{1\leq i \leq k+1} x_i$.
    For each item $e_i$, we round it to agent $i$ if and only if $y_i \leq 1/2$.
    We charge item $e_0$ an amount of money $p_0 = \frac{k}{k+1}$.
    For each item $e_i$ we charge it an amount of money $p_i = 1/2$ if and only if item $e_i \neq \bot$.
    For completeness, we let $p_i = 0$ when $e_i = \bot$.

    Since the inclusion of item $e_0$ to agent $i^*$ incurs an increase in subsidy by at most
    \begin{equation}
        (1-x_{i^*}) \cdot c_{i^*}(e_0) \leq 1-x_{i^*} \leq \frac{k}{k+1} = p_0,
    \end{equation}
    and the inclusion of item $e_i$ for $i\neq 0$ incurs an increase in subsidy by at most $1/2$, the charged money is sufficient to pay for the subsidy.
    To bound the required subsidy by $(k+h)/3$, it remains to prove that
    \begin{equation*}
        \sum_{0 \leq i \leq k+1} p_i = \frac{k}{k+1} + \frac{h}{2} \leq \frac{k+h}{3},
    \end{equation*}
    which is equivalent to $h \leq k \cdot (2-\frac{6}{k+1})$ as stated in the lemma.
\end{proof}

Notice that in the above analysis, rounding item $e_0$ to agent $i^*$ releases a space of $x_i \cdot c_i(e_0)$ for all agent $i \neq i^*$, which should be helpful in reducing the subsidy requirement for the attached edges.
In the following, we make use of this observation to prove Lemma~\ref{lemma:expanded-atom-path-k+1},~\ref{lemma:expanded-atom-path-k} and ~\ref{lemma:expanded-atom-path-k-1}.

\begin{observation}\label{obser:c_i(e_0)>=c_i(e_i)}
    For any agent $i \leq k$, $c_i(e_0) \geq c_i(e_i)$.
\end{observation}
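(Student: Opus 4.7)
The plan is to combine the IDO ordering of the items with the execution order of Algorithm~\ref{alg:FBTA}. Under IDO we have $c_i(e_1) \le c_i(e_2) \le \cdots \le c_i(e_m)$ for every agent $i$, and the algorithm processes items in exactly this index order. Hence it is enough to show that the item carried by the attached edge $e_i$ is processed \emph{strictly before} the atom-path item $e_0$; the desired inequality $c_i(e_0) \ge c_i(e_i)$ then follows from the monotonicity of $c_i$ along the IDO sequence.

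The first step is to verify that the item carried by $e_i$ is distinct from $e_0$. By the maximality clause in the definition of an atom-path, any agent receiving a positive fraction of $e_0$ is linked into the successor chain of $e_0$ and hence belongs to the set $\{1, \ldots, k+1\}$ of atom-path agents. The attached edge $e_i$, by the construction of the expanded atom-path, connects the atom-path agent $i$ to an agent outside $\{1, \ldots, k+1\}$, so the item it represents cannot be $e_0$ and therefore has a different index in the $e_1, e_2, \ldots, e_m$ sequence.

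The second step is to locate this item relative to $e_0$ in the processing order. For $i \le k$, agent $i$'s outgoing atom-path edge is $e^i = e_0$; by the definition of the successor relation used to construct the item-sharing graph, this means agent $i$ exhausted its proportional share in the middle of the allocation of $e_0$ and became inactive at that moment. Consequently $e_0$ is the last item agent $i$ ever receives a positive fraction of. Since agent $i$ is also an endpoint of the attached edge $e_i$ and therefore receives a positive fraction of the corresponding (distinct) item, that item must have been processed strictly earlier in the $e_1, \ldots, e_m$ sequence. The IDO monotonicity of $c_i$ then yields $c_i(e_i) \le c_i(e_0)$, finishing the proof. I do not anticipate any real obstacle: the whole argument is a short unwinding of the definitions of FBTA, the atom-path, and the attached edge in the item-sharing graph.
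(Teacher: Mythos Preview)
Your argument is correct and follows essentially the same route as the paper: for $i\le k$ the outgoing edge $e^i$ equals $e_0$, so $e_0$ is the last item agent $i$ touches, hence any other item she holds---in particular the one carried by the attached edge $e_i$---was processed earlier in the IDO order, giving $c_i(e_i)\le c_i(e_0)$. Your first step (arguing $e_i\neq e_0$ via maximality of the atom-path) is more care than the paper takes and is not strictly needed for the weak inequality, but it is correct and harmless.
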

\begin{proof}
    We remark that for any agent $i \leq k$, $e_0$ is the last (fraction of) item that she receives.
    Otherwise, agent $i$ would not have a successor with respect to $e_0$.
    In other words, for any agent $i$, the event that $i$ takes (a fraction of) item $e_i$ happens before the event that $i$ takes (a fraction of) item $e_0$.
    Recall that we consider IDO instances and allocate items with increasing cost in Algorithm~\ref{alg:FBTA},
    Hence, for any agent $i \leq k$, we have $c_i(e_0) \geq c_i(e_i)$.
\end{proof}

Note that the above observation might not hold for agent $k+1$: it is possible that $e^{k+1} = e_{k+1} \neq e_0$.
In this case we have $c_{k+1}(e_0) \leq c_{k+1}(e_{k+1})$.
Given the above observation, for all agent $i\neq i^*$ that does not receive item $e_0$, when rounding item $e_i$ we should somehow give agent $i$ a slightly higher priority (compared with the other endpoint) as rounding $e_i$ to agent $i$ incurs a subsidy $(y_i\cdot c_i(e_i) - x_i\cdot c_i(e_0))^+ \leq (y_i - x_i)^+$.
We formalize this idea into the following rounding scheme.


\paragraph{Biased Threshold Rounding.}
Given an attached edge $e_i$, if $y_i - x_i \leq 1-y_i$, we round item $e_i$ to agent $i$; otherwise we round $e_i$ to the attached agent.

\medskip


\begin{proposition}\label{proposition:modified-greedy}
    For any agent $i\in \{1,2,\ldots,k+1\}$ that does not receive item $e_0$ and satisfies $c_i(e_0) \geq c_i(e_i)$, the subsidy incurred by rounding $e_i$ using the biased threshold rounding is at most $(1-x_i)/2$.
\end{proposition}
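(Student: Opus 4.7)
The plan is to split on the two branches of the biased threshold rule. The rule sends $e_i$ to agent $i$ exactly when $y_i - x_i \le 1 - y_i$, i.e., $y_i \le (1+x_i)/2$, and to the attached agent otherwise. I will treat each case separately, tracking the net cost change at \emph{both} endpoints of $e_i$ after $e_0$ is rounded (to some $i^* \neq i$) and $e_i$ is rounded as prescribed, and then taking the positive part as an upper bound on the incurred subsidy.

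In the first case ($e_i$ rounded to $i$), agent $i$ loses $x_i \cdot c_i(e_0)$ from the $e_0$ rounding and gains $y_i \cdot c_i(e_i)$ from the $e_i$ rounding, so its subsidy is at most
\begin{equation*}
\bigl(y_i c_i(e_i) - x_i c_i(e_0)\bigr)^+ \;\le\; \bigl((y_i - x_i)\, c_i(e_i)\bigr)^+ \;\le\; (y_i - x_i)^+ \;\le\; \frac{1-x_i}{2},
\end{equation*}
where the first inequality uses the hypothesis $c_i(e_0) \ge c_i(e_i)$, the second uses $c_i(e_i) \le 1$, and the last is the case condition rearranged as $2y_i \le 1 + x_i$. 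The attached agent's cost only decreases, so it incurs no subsidy.

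In the second case ($e_i$ rounded to the attached agent), that agent's cost rises by at most $(1-y_i)\,c_{\mathrm{attached}}(e_i) \le 1 - y_i$, and the strict case condition $y_i > (1+x_i)/2$ yields $1 - y_i < (1-x_i)/2$. Agent $i$ loses both $x_i c_i(e_0)$ and $(1-y_i)c_i(e_i)$, so its cost strictly decreases and no further subsidy is needed.

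The main subtlety is the charging convention: the slack $x_i c_i(e_0)$ that offsets agent $i$'s cost in Case~1 is produced by rounding $e_0$ away from $i$, yet it is credited toward the $e_i$ bound here. This is consistent with the surrounding scheme in which $e_0$'s own charge covers only the subsidy at $i^*$ (see Lemma~\ref{lemma:expanded-atom-path-general}). The hypothesis $c_i(e_0) \ge c_i(e_i)$, which holds automatically for $i \le k$ by Observation~\ref{obser:c_i(e_0)>=c_i(e_i)}, is invoked exactly once, to convert $c_i(e_0)$-slack into $c_i(e_i)$-slack; without it, Case~1 cannot be closed. Note also that since the fractional allocation is WPROP, i.e., $c_j(\bx_j) \le \WPROP_j$ for every agent $j$, bounding the positive part of the net cost change suffices to bound the subsidy, whether or not the corresponding agent became inactive during FBTA.
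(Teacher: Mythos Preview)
Your proof is correct and follows essentially the same approach as the paper's: split on the biased threshold rule, bound Case~1 by $(y_i - x_i)^+$ using $c_i(e_0) \ge c_i(e_i)$ and $c_i(e_i) \le 1$, and bound Case~2 by $1 - y_i$. The only cosmetic difference is that the paper finishes by writing $\min\{1-y_i,(y_i-x_i)^+\} \le \tfrac{(1-y_i)+(y_i-x_i)}{2} = \tfrac{1-x_i}{2}$, whereas you invoke the case condition directly in each branch; these are equivalent.
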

\begin{proof}
    By the biased threshold rounding, if $y_i - x_i \leq 1-y_i$ then we round item $e_i$ to agent $i$, which incurs a subsidy of $(y_i\cdot c_i(e_i) - x_i \cdot c_i(e_0))^+ \leq (y_i - x_i)^+$.
    Otherwise, we round item $e_i$ to the other agent which incurs an increase in subsidy by at most $(1-y_i)$.
    Hence the incurred subsidy can be bounded by
    \begin{equation*}
        \min \{1-y_i, (y_i- x_i)^+\}
        \leq \frac{1-y_i+y_i-x_i}{2} = \frac{1-x_i}{2}.\qedhere
    \end{equation*}
\end{proof}


\begin{lemma:expanded-atom-path-k+1}
    For an expanded atom-path with $h = k+1$, there exists a rounding scheme with total subsidy at most $(k+h)/3$.
\end{lemma:expanded-atom-path-k+1}

\begin{proof}
    Recall from Observation~\ref{obser:c_i(e_0)>=c_i(e_i)} that we have $c_{i}(e_0) \geq c_{i}(e_{i})$ for all $i\in {1,2,\ldots,k}$.
    Unfortunately, we can not guarantee this for agent $k+1$.
    In the following, we present different rounding schemes depending on whether $c_{k+1}(e_0) \geq c_{k+1}(e_{k+1})$. 
    We first consider the case when $c_{k+1}(e_0) < c_{k+1}(e_{k+1})$.

    In this case, we round $e_0$ to agent $k+1$ and round items $e_1, \ldots, e_k$ using the biased threshold rounding.
    By Proposition~\ref{proposition:modified-greedy}, rounding each $e_i$ incurs a subsidy of at most $(1-x_i)/2$, for all $i\in {1,2,\ldots,k}$.
    Next, we show that at least one of the following rounding schemes gives a total subsidy of at most $(k+h)/3$:
    \begin{itemize}
        \item \textbf{Scheme 1.} We round item $e_{k+1}$ to the attached agent;
        \item \textbf{Scheme 2.} We round item $e_{k+1}$ to agent $k+1$.
    \end{itemize}
    
    Note that rounding item $e_{k+1}$ to the attached agent release a space of $(1-y_{k+1}) \cdot c_{k+1}(e_{k+1})$ for agent $k+1$.
    Hence the incurred subsidy by rounding $e_0$ in scheme 1 is
    \begin{align*}
        &((1-x_{k+1}) \cdot c_{k+1}(e_0) - (1-y_{k+1}) \cdot c_{k+1}(e_{k+1}))^+    \\
        \leq & ((y_{k+1}-x_{k+1}) \cdot c_{k+1}(e_{k+1}))^+ &\tag{$c_{k+1}(e_0) < c_{k+1}(e_{k+1})$}  \\
        \leq & (y_{k+1}-x_{k+1})^+. &\tag{$c_{k+1}(e_{k+1}) \leq 1$}
    \end{align*}
    
    Since we need to compensate at most $1-y_{k+1}$ to the attached agent for receiving $e_{k+1}$, the total subsidy required in scheme 1 can be upper bounded by 
    \begin{equation}
        \sum_{1 \leq i \leq k} \frac{1-x_i}{2} + (y_{k+1} - x_{k+1})^+ +  (1-y_{k+1}). \label{eq:k+1-scheme1}
    \end{equation}
    
    Similarly, for scheme 2, the total required subsidy can be upper bounded by
    \begin{equation}
        \sum_{1 \leq i \leq k} \frac{1-x_i}{2} + (1 - x_{k+1}) + y_{k+1}. \label{eq:k+1-scheme2}
    \end{equation}
    
    When $y_{k+1} - x_{k+1} \geq 0$, it is easy to find that Equation~\eqref{eq:k+1-scheme1} dominates Equation~\eqref{eq:k+1-scheme2} (gives a smaller upper bound).
    We can upper bound the subsidy for scheme 1 as follows.
    \begin{align*}
        \|s\|_1 &\leq (1 - x_{k+1}) + \sum_{1 \leq i \leq k} \frac{1-x_i}{2}  &\tag{By Equation~\eqref{eq:k+1-scheme1} and $y_{k+1} - x_{k+1} \geq 0$}\\
        & = \sum_{1 \leq i \leq k} x_i + \frac{k}{2} - \sum_{1 \leq i \leq k} \frac{x_i}{2}   &\tag{$1 - x_{k+1} = \sum_{1 \leq i \leq k} x_i$}\\
        & = \frac{k}{2} + \sum_{1 \leq i \leq k} \frac{x_i}{2} \leq \frac{k+1}{2}.  &\tag{$\sum_{1 \leq i \leq k} x_i \leq 1$}
    \end{align*}
    
    Next we consider the case that $y_{k+1} - x_{k+1} < 0$.
    We have
    \begin{align*}
        \text{For scheme 1, } \|s\|_1 &\leq \sum_{1 \leq i \leq k} \frac{1-x_i}{2} + (1-y_{k+1});  &\tag{By Equation~\eqref{eq:k+1-scheme1}}\\
        \text{For scheme 2, } \|s\|_1 &\leq \sum_{1 \leq i \leq k} \frac{1-x_i}{2} + (1 - x_{k+1}) + y_{k+1}.   &\tag{By Equation~\eqref{eq:k+1-scheme2}}
    \end{align*}
    
    Summing up the above two upper bounds, we have 
    \begin{align*}
        &\sum_{1 \leq i \leq k} \frac{1-x_i}{2} + (1-y_{k+1}) + (1 - x_{k+1}) + \sum_{1 \leq i \leq k} \frac{1-x_i}{2} + y_{k+1} \\
        = &\sum_{1 \leq i \leq {k+1}} (1-x_i) + 1  
        = k+2 - \sum_{1 \leq i \leq {k+1}} x_i  = k+1  &\tag{$\sum_{1 \leq i \leq {k+1}} x_i = 1$}
    \end{align*}
    
    Therefore, at least one of the two rounding schemes requires a total subsidy of at most
    \begin{equation*}
        \frac{k+1}{2} \leq \frac{2k+1}{3} = \frac{k+h}{3},
    \end{equation*}
    where the inequality holds for all $k\geq 2$.

    \smallskip
    
    Next, we consider the case when $c_{k+1}(e_0) \geq c_{k+1}(e_{k+1})$.

    Following Observation~\ref{obser:c_i(e_0)>=c_i(e_i)}, in this case we have $c_i(e_0) \geq c_i(e_i)$ for all $i \leq \{1,2,\ldots,k+1\}$ and can assume w.l.o.g. that $x_1\leq x_2\leq \cdots\leq x_{k+1}$.
    We first round item $e_0$ to agent $k+1$, who holds the maximum fraction of $e_0$.
    Let $\alpha = c_{k+1}(e_{k+1})$.
    As before, by using biased threshold rounding to round items $e_1,\ldots, e_k$, we can bound the required subsidy by $\sum_{i=1}^k \frac{1-x_i}{2}$.
    Again, we consider the following two rounding schemes:
    \begin{itemize}
        \item \textbf{Scheme 1.} We round item $e_{k+1}$ to the attached agent;
        \item \textbf{Scheme 2.} We round item $e_{k+1}$ to agent $k+1$.
    \end{itemize}
    
    Note that rounding item $e_{k+1}$ to the attached agent release a space of $(1-y_{k+1}) \cdot \alpha$ for agent $k+1$.
    Hence the subsidy required in scheme 1 can be upper bounded by
    \begin{equation}
        \sum_{1 \leq i \leq k} \frac{1-x_i}{2} + ((1 - x_{k+1}) - (1-y_{k+1}) \cdot \alpha)^+ + (1-y_{k+1}). \label{eq:k+1-scheme3}
    \end{equation}
    
    For scheme 2, the total required subsidy can be upper bounded by
    \begin{equation}
        \sum_{1 \leq i \leq k} \frac{1-x_i}{2} + (1 - x_{k+1}) + y_{k+1}\cdot \alpha. \label{eq:k+1-scheme4}
    \end{equation}
    
    When $(1 - x_{k+1}) - (1-y_{k+1}) \cdot \alpha \geq 0$, the subsidy required in scheme 1 can be upper bounded by
    \begin{equation}
        \sum_{1 \leq i \leq k} \frac{1-x_i}{2} + (1 - x_{k+1})  + (1-y_{k+1})\cdot (1-\alpha) \label{eq:k+1-scheme3-1}
    \end{equation}

    By Proposition~\ref{proposition:xy<1/4}, we have
    \begin{equation*}
        \min \{(y_{k+1}\cdot \alpha), (1-y_{k+1})\cdot (1-\alpha) \} \leq 1/4. \tag{$0\leq y_{k+1}\leq 1$ and $0 \leq \alpha \leq 1$}
    \end{equation*}
    
    Hence the minimum of~\eqref{eq:k+1-scheme4} and~\eqref{eq:k+1-scheme3-1} can be bounded by (recall that $x_{k+1} = \max_{1 \leq i\leq k+1} \{x_i\}$)
    \begin{align*}
        &\sum_{1 \leq i \leq k} \frac{1-x_i}{2} +  (1 - x_{k+1})  + \frac{1}{4} 
        = \sum_{1 \leq i \leq k} \frac{1-x_i}{2} + \sum_{1 \leq i \leq k} x_i + \frac{1}{4}   \\
        = & \frac{k}{2} + \frac{1}{2} \cdot \sum_{1 \leq i \leq k} x_i + \frac{1}{4} 
        \leq \frac{k}{2} + \frac{k}{2k+2} + \frac{1}{4} = \frac{2k^2 + 5k + 1}{4k+4} \leq \frac{2k+1}{3} = \frac{k+h}{3},
    \end{align*}
    where the last inequality holds for all $k\geq 2$.

    Now suppose that $(1-x_{k+1}) - (1-y_{k+1}) \cdot \alpha <0$.
    We have the total required subsidy for scheme 1 can be bounded by
    \begin{equation}
        \sum_{1 \leq i \leq k} \frac{1-x_i}{2} + (1-y_{k+1}).  \label{eq:k+1-scheme3-2}
    \end{equation}
    
    Summing up the upper bounds for scheme 1 (Equation~\eqref{eq:k+1-scheme3-2}) and scheme 2 (Equation~\eqref{eq:k+1-scheme4}), the total subsidy required by the two rounding schemes combined is at most
    \begin{align*}
        &\sum_{1 \leq i \leq k} \frac{1-x_i}{2} + (1-y_{k+1}) + \sum_{1 \leq i \leq k} \frac{1-x_i}{2} + (1 - x_{k+1}) + y_{k+1} \cdot \alpha \\
        = &\sum_{1 \leq i \leq {k+1}} (1-x_i) + 1 + y_{k+1}\cdot \alpha - y_{k+1}   \\
        \leq & \sum_{1 \leq i \leq {k+1}} (1-x_i) + 1
        = k+2 - \sum_{1 \leq i \leq {k+1}} x_i = k+1.
    \end{align*}
    
    Therefore, at least one of the two rounding schemes requires a total subsidy at most $(k+1)/2$, which is not larger than $(k+h)/3$ for all $k\geq 2$.
\end{proof}

\begin{lemma:expanded-atom-path-k}
    For an expanded atom-path with $h = k$, there exists a rounding scheme with total subsidy at most $(k+h)/3$.
\end{lemma:expanded-atom-path-k}

\begin{proof}
    When $h = k$, there exists an agent $j$ in the atom-path with no edge attached, i.e., $e_j = \bot$.
    We first consider the case that $j \neq k+1$ and $c_{k+1}(e_0) < c_{k+1}(e_{k+1})$.
    \begin{itemize}
        \item \textbf{Scheme 1.}
        We round item $e_0$ to agent $k+1$ and round item $e_{k+1}$ to the attached agent.
        We use the biased threshold rounding for other items $e_i$ s.t. $1\leq i\leq k$.
        %
        By a similar argument as in Lemma~\ref{lemma:expanded-atom-path-k+1}, the total subsidy required can be upper bounded by
        \begin{equation}
            \sum_{1 \leq i \leq k} \frac{1-x_i}{2} - \frac{1-x_j}{2} + (y_{k+1} - x_{k+1})^+ + (1-y_{k+1}). \label{eq:k-scheme1}
        \end{equation}
        \item  \textbf{Scheme 2.}
        We round item $e_0$ to agent $j$ and item $e_{k+1}$ by threshold rounding.
        For any other item $e_i$ s.t. $1 \leq i\leq k$, we use the biased threshold rounding.
        The total subsidy can be upper bounded by
        \begin{equation}
            \sum_{1 \leq i \leq k} \frac{1-x_i}{2} - \frac{1-x_j}{2} + (1-x_j) + \frac{1}{2}. \label{eq:k-scheme2}
        \end{equation}
    \end{itemize}
    
    Summing up the above two upper bounds, we have
    \begin{align*}
        &  \sum_{1 \leq i \leq k} (1-x_i) + (y_{k+1} - x_{k+1})^+ + (1-y_{k+1}) + \frac{1}{2} \\
        = &\frac{k}{2} - \sum_{1 \leq i \leq k} x_i +  (y_{k+1} - x_{k+1})^+ + (1-y_{k+1}) + \frac{1}{2} \\
        = & k + x_{k+1} +  (y_{k+1} - x_{k+1})^+ - y_{k+1} + \frac{1}{2}    \\
        = & k + \frac{1}{2}. &\tag{When $y_{k+1} - x_{k+1} \geq 0$}
    \end{align*}
    
    Hence when $y_{k+1} - x_{k+1} \geq 0$, at least one of the two rounding schemes requires a total subsidy at most $(2k+1)/4$, which is no larger than $\frac{2k}{3} = \frac{k+h}{3}$ for all $k\geq 2$.

    When $y_{k+1} - x_{k+1} < 0$, Equation~\eqref{eq:k-scheme1} becomes
    \begin{equation}
        \sum_{1 \leq i \leq k} \frac{1-x_i}{2} - \frac{1-x_j}{2} + (1-y_{k+1}) \label{eq:k-scheme1a}
    \end{equation}

    We introduce one more rounding scheme.
    \begin{itemize}
        \item \textbf{Scheme 3.}
        We round items $e_0$ to $e_j$ and $e_{k+1}$ to agent $k+1$.
        For any other item $e_i$ s.t. $1 \leq i\leq k$, we use the biased threshold rounding.
        The total subsidy can be bounded by
        \begin{equation}
            \sum_{1 \leq i \leq k} \frac{1-x_i}{2} - \frac{1-x_j}{2} + (1-x_j) + y_{k+1} - x_{k+1}\cdot \alpha \label{eq:k-scheme3}
        \end{equation}
    \end{itemize}
    
    By summing up Equations~\eqref{eq:k-scheme1a} and~\eqref{eq:k-scheme3}, we have
    \begin{align*}
        \sum_{1 \leq i \leq k} (1-x_i) + (1-x_{k+1} \cdot \alpha)
        =  k + (1-\sum_{1 \leq i \leq k} x_i) - x_{k+1} \cdot \alpha 
        = k + x_{k+1} - x_{k+1} \cdot \alpha.
    \end{align*}
    
    To show that one of the two rounding schemes requires a total subsidy of at most $\frac{k}{2} + \frac{1}{4} \leq \frac{2k}{3}$, it suffices to show that $x_{k+1} \cdot (1-\alpha) \leq \frac{1}{2}$.
    Assume otherwise $x_{k+1} \cdot (1-\alpha) > \frac{1}{2}$.
    Following Proposition~\ref{proposition:xy<1/4}, we have $(1-x_{k+1}) \cdot \alpha < \frac{1}{4}$.
    Consider the rounding scheme that rounds item $e_0$ to agent $k+1$ and uses threshold rounding to any other items.
    It can be verified that the total subsidy required is at most
    \begin{equation*}
        (1-x_{k+1}) \cdot \alpha + \frac{1}{2} \cdot k \leq \frac{k}{2} + \frac{1}{4} \leq \frac{2k}{3}.
    \end{equation*}
    
    Hence, for the case that $j \neq k+1$ and $c_{k+1}(e_0) < c_{k+1}(e_{k+1})$, there always exists a rounding scheme with total subsidy bounded by $(k+h)/3$.

    \smallskip

    Next we consider the case that $j = k+1$ or $c_{k+1}(e_{k+1}) \geq c_{k+1}(e_0)$, in which we have $c_i(e_i) \leq c_i(e_0)$ for all $i\neq j$.
    We round item $e_0$ to agent $j$ and use the biased threshold rounding to round every item $e_i$ s.t. $i\neq j$.
    Following similar analysis as shown before, total subsidy required is at most
    \begin{equation*}
        \sum_{i\in [k+1], i\neq j} \frac{1-x_i}{2} + (1-x_j) = \frac{k}{2} + \frac{1 - x_j}{2}.
    \end{equation*}
    
    To ensure that $\frac{k}{2} + \frac{1 - x_j}{2} $, we need $x_j \geq 1 - \frac{k}{3}$, which naturally holds for all $k\geq 3$.

    It remains to consider the case that $k = 2$ and $x_j < 1 - \frac{2}{3} = \frac{1}{3}$.
    We assume w.l.o.g. that $x_{3} = \max\{x_1,x_2,x_3\}$ and we round $e_0$ to agent $3$.
    Note that $j \neq 3$ since $x_3 \geq 1/3$.
    Let $\alpha = c_3(e_3)$.
    We consider two rounding schemes that both round item $e_0$ to agent $3$ but differ in the rounding of item $e_3$.
    Let $i$ be the agent other than $j$ and $3$, and we use the biased threshold rounding to round item $e_i$.
    Following Proposition~\ref{proposition:modified-greedy}, the total subsidy required can be upper bounded by
        \begin{align*}
            \begin{cases}
                \cfrac{1-x_i}{2} + ((1-x_3) - (1-y_3)\alpha)^+ + (1-y_3), &\text{if we round $e_3$ to the attached agent} \\
                \cfrac{1-x_i}{2} + (1-x_3) + y_3\cdot \alpha, &\text{if we round $e_3$ to agent $3$}.
            \end{cases}
        \end{align*}
        
    When $(1-x_3) - (1-y_3)\alpha \geq 0$, following Proposition~\ref{proposition:xy<1/4}, the minimum of two bounds is at most
    \begin{align*}
        \frac{1-x_i}{2} + (1-x_3) + \frac{1}{4} 
        = \frac{5}{4} + \frac{1-x_3-x_i}{2} - \frac{x_3}{2}
        = \frac{5}{4} + \frac{x_j}{2} - \frac{x_3}{2}
        \leq \frac{5}{4} \leq \frac{4}{3}.   
    \end{align*}
        
    Hence at least one of the two rounding schemes with total subsidy bounded by $4/3= 2k/3$.
        
    When $(1-x_3) - (1-y_3)\alpha < 0$, the sum of the above two bounds is
    \begin{align*}
        (1-x_i) + (1-x_3) + (1-y_3) + y_3\cdot \alpha
        \leq 3 - x_1 - x_3
        \leq 2 + x_j \leq \frac{7}{3}. 
    \end{align*}
    
    Hence at least one of the two rounding schemes requires subsidy at most $7/6 < 4/3= 2k/3$.
\end{proof}

\begin{replemma}{lemma:expanded-atom-path-k-1}
    For any expanded atom-path with $2 \leq k \leq 3$ and $h = k-1$, there exists a rounding scheme with total subsidy at most $(k+h)/3$.
\end{replemma}

\begin{proof}
    For $k = 3$, the upper bound is $\frac{2k-1}{3} = \frac{5}{3}$.
    Note that there exists two agents $i,j$ such that ${e_i} = {e_j} = \bot$.
    We first consider the case that $k+1 = 4 \in \{i,j\}$.
    Let $p,q$ be the two agents other than $i,j$.
    Consider the rounding scheme that rounds $e_0$ to $\argmax \{x_i,x_j\}$, say agent $i$.
    Applying the biased threshold rounding to items $e_p$ and $e_q$, we can bound the total subsidy required by 
    \begin{equation*}
        \frac{1-x_p}{2} + \frac{1-x_q}{2} + (x_p + x_q + x_j) = 1 + \frac{x_p + x_q + 2\cdot x_j}{2} \leq 1 + \frac{x_p + x_q + x_j+ x_i}{2} = \frac{3}{2}.
    \end{equation*}
        
    Now suppose that $k+1 = 4 \notin \{i,j\}$.
    Let $p$ be the agent other than $i,j,4$.
    Again, we round $e_0$ to $\argmax \{x_i,x_j\}$, say agent $i$, and apply the biased threshold rounding to round item $e_p$.
    By applying the threshold rounding for item $e_4$, we can upper bound the total subsidy by
    \begin{equation*}
        \frac{1-x_p}{2} +1-x_i + \frac{1}{2} \leq \frac{3}{2} + \frac{1 - x_p - x_i - x_j}{2} 
        = \frac{3+x_4}{2}.
    \end{equation*}
    
    Note that $\frac{3+x_4}{2} \leq \frac{5}{3}$ holds when $x_4 \leq 1/3$.
    Assume otherwise that $x_4 > 1/3$.
    We round $e_0$ to agent $4$ and use the biased threshold rounding for item $e_p$.
    %
    The total subsidy can be upper bounded by
    \begin{equation*}
        \begin{cases}
            \frac{1-x_p}{2} + ((1-x_4)-(1-y_4)\cdot\alpha))^+ + (1-y_4),\quad &\text{if we round $e_4$ to agent $4$} \\
            \frac{1-x_p}{2} + (1-x_4) + y_4\cdot \alpha, &\text{if we round $e_4$ to the attached agent}.
        \end{cases}
    \end{equation*}
    
    When $(1-x_4)-(1-y_4)\cdot\alpha \geq 0$, by Proposition~\ref{proposition:xy<1/4}, the minimum of two upper bounds is at most
    \begin{equation*}
        \frac{1-x_p}{2} + (1-x_4) + \frac{1}{4} \leq \frac{3}{4} + (1-x_4) \leq \frac{3}{4} + \frac{1}{3} \leq \frac{5}{3}.
    \end{equation*}
    
    When $(1-x_4)-(1-y_4)\cdot\alpha < 0$, the first upper bound becomes
    \begin{equation*}
        \frac{1-x_p}{2} + (1-y_4) \leq \frac{3}{2}\leq \frac{5}{3}.
    \end{equation*}
    
    For $k=2$, the upper bound is $\frac{2k-1}{3} = 1$.
    Note that there exist two agents $i,j$ such that ${e_i} = {e_j} = \bot$.
    Let $p$ be the agent other than $i,j$ (who has an attached edge).
    We first consider the case that $p \neq k+1$.
    We round $e_0$ to $\argmax \{x_i,x_j\}$, say agent $i$, and apply the biased threshold rounding for items $e_p$.
    The total subsidy can be upper bounded by
    \begin{equation*}
        \frac{1-x_p}{2} + (x_j+x_p) = \frac{1+x_p + 2\cdot x_j}{2} \leq \frac{1+x_p + x_j+ x_i}{2} = 1.
    \end{equation*}
        
    Consider otherwise that $p = k+1$, where the atom-path reduces to a line.
    Following Theorem 4.10 in~\cite{conf/wine/WuZZ23}, there exists a rounding scheme that requires a total subsidy of at most $n/4 = 1$.
\end{proof}

\section{Conclusion and Open Problems}
\label{sec:conclusion}

In this work we revisit the problem of computing weighted proportional allocations with subsidy, and propose a rounding scheme for the fractional bid-and-take algorithm based on tree splitting.
We show that for any instance with $n$ agents with general weights, we can compute a weighted proportional allocation that requires a total subsidy of at most $n/3$.
Our work leaves several natural problems open.
The most fascinating open problem is to investigate the existence of algorithms that guarantee weighted proportionality requiring a total subsidy of at most $n/4$.
We conjecture that the fractional bid-and-take algorithm admits such rounding schemes.
It would also be interesting to study other fairness notions like Maximin Share or AnyPrice Share and see if we can further lower the subsidy requirement.
Finally, we believe that it would be interesting to apply the tree composition-based framework to other fair allocation settings, e.g., the Best of Both Worlds setting, that require rounding fractional allocations.

\bibliographystyle{abbrv}
\bibliography{subsidy}

\newpage
\appendix

\section{Missing Proofs}\label{sec:missing-proofs}
\begin{replemma}{lemma:reduction-to-IDO}
    If there exists a polynomial time algorithm that given any IDO instance computes a WPROPS allocation with total subsidy at most $\alpha$, then there exists a polynomial time algorithm that given any instance computes a WPROPS allocation with total subsidy at most $\alpha$.
\end{replemma}
\begin{proof}
    For any general instance $\mathcal{I} = (M,N,\bw,\bc)$, we reduce it to an IDO instance $\mathcal{I'} = (M,N,\bw,\bc')$ by the following construction.
    Let $\sigma_i(k)\in M$ be the $k$-th most costly item under cost function $c_i$.
    Let $c'_i(e_k) = c_i(\sigma_i(k))$.
    Thus with cost function $\bc'$, the instances $\mathcal{I'}$ is IDO.
    Note that for all $i\in N$ we have $c'_i(M) = c_i(M)$.
    Then we run the algorithm for the IDO instance $\mathcal{I'}$ and get a WPROPS allocation $\bX'$ with subsidy $\bs'$ such that $\|\bs'\|_1 \leq \alpha$.
    By definition, for all agent $i\in N$ we have
    \begin{equation*}
        c'_i(X'_i) - s'_i \leq w_i \cdot c'_i(M) = w_i \cdot c_i(M) = \WPROP_i.
    \end{equation*}
    
    We use $\bX'$ to guide us on computing a WPROPS allocation $\bX$ with subsidy $\bs$ for the original instance $\mathcal{I}$.
    We show that $c_i(X_i) \leq c'_i(X'_i)$ for all $i\in N$, which implies $\|\bs\|_1 \leq \|\bs'\|_1 \leq \alpha$.
    We initialize $X_i = \emptyset$ for all $i\in N$ and let $P = M$ be the set of unallocated items.
    Sequentially for $j = m,\ldots,1$, we let the agent $i$ who receives item $e_j$ under allocation $\bX'$, i.e., $e_j \in X'_i$, pick her favorite unallocated item.
    At the beginning of each round $j$, we have $|P| = j$.
    Since $e_j$ is the $j$-th most costly item under cost function $c'_i$, we must have $c_i(e) \leq c'_i(e_j)$ for the item $e$ agent $i$ picks during round $j$.
    Therefore for all items in $X_i$ and $X'_i$ we can establish a one-to-one correspondence relationship similarly, which implies $c_i(X_i) \leq c'_i(X'_i)$.
\end{proof}

\begin{replemma}{lemma: wprop}
    The output allocation $\bx$ is a fractional WPROP allocation. 
\end{replemma}
\begin{proof}
    Note that throughout the allocation process, we maintain the property that $c_i(\bx_i) \leq \WPROP_i$ for each agent $i$.
    Particularly, for every inactive agent $i\in N \setminus A$, we have $c_i(\bx_i) = \WPROP_i = w_i \cdot c_i(M)$.
    Hence it suffices to show all items are fully allocated, i.e., there is at least one active agent when we try to allocate each item $e$.
    Since we allocate each item $e$ to the active agent $i$ with the minimum $\frac{c_i(e)}{c_i(M)}$, we have $\frac{c_i(e)}{c_i(M)} \leq \frac{c_j(e)}{c_j(M)}$ for every active $j\neq i$. Therefore we have $\frac{c_j(X_i)}{c_j(M)} \geq \frac{c_i(X_i)}{c_i(M)}$ for any active agents $i,j\in A$.

    Assume by contradiction that when assigning some item $e$, all agents are inactive, i.e., $A=\emptyset$. Let $j$ be the last agent that becomes inactive.
    Consider the moment when $j$ becomes inactive, we have
    \begin{equation*}
        1 = \frac{c_j(M)}{c_j(M)}>\sum_{i\in N} \frac{c_j(\bx_i)}{c_j(M)}
        \ge \sum_{i\in N}\frac{c_i(\bx_i)}{c_i(M)} = \sum_{i\in N} w_i = 1,
    \end{equation*}
    which is a contradiction.
\end{proof}

\begin{replemma}{lemma:four-upperbounds}
    For the four different rounding schemes, the total required subsidy can be upper bounded as shown in the following table.
    \begin{table}[htbp]
        \centering
        \begin{tabular}{|c|c|}
        \hline
        \textbf{Rounding Scheme} & \textbf{Upper Bound for Total Subsidy}   \\ \hline
        $\LL$ & $x_2(e_1) + \left((1-x_2(e_2)) \cdot \alpha - x_2(e_1)\right)^+$ \\ \hline
        $\RR$ & $x_2(e_2) + \left((1-x_2(e_1)) - x_2(e_2) \cdot \alpha \right)^+$ \\ \hline
        $\LR$ & $x_2(e_1) + x_2(e_2)$ \\ \hline
        $\RL$ & $(1-x_2(e_1)) + (1-x_2(e_2)) \cdot \alpha$          \\ \hline
        \end{tabular}
        \caption{Upper bounds for the subsidies required for the four different rounding schemes.}\label{tab:four-upperbounds}
    \end{table}
\end{replemma}
\begin{proof}
    In the $\LL$ rounding scheme, we round both items $e_1, e_2$ to their left endpoint agents.
    The inclusion of item $e_1$ to agent $1$ incurs an increase in subsidy by at most
    \begin{equation*}
        (1-x_1(e_1))\cdot c_1(e_1) \leq 1-x_1(e_1) = x_2(e_1).
    \end{equation*}
    
    Note that rounding $e_1$ to agent $1$ releases a space of $x_2(e_1) \cdot a_1$ for agent $2$.
    Hence the inclusion of item $e_2$ to agent $2$ incurs an increase in subsidy by at most
    \begin{equation*}
        \left ((1-x_2(e_2)) \cdot a_2 - x_2(e_1) \cdot a_1 \right)^+ \leq a_1 \cdot \left ((1-x_2(e_2)) \cdot \alpha - x_2(e_1) \right)^+ \leq \left ((1-x_2(e_2)) \cdot \alpha - x_2(e_1) \right)^+.
    \end{equation*}
    Hence we have
    \begin{equation*}
        s_\LL \leq x_2(e_1) + \left((1-x_2(e_2)) \cdot \alpha - x_2(e_1)\right)^+.
    \end{equation*}
    
    In the $\RR$ rounding scheme, we round both items $e_1, e_2$ to their right endpoint agents.
    The inclusion of item $e_2$ to agent $3$ incurs an increase in subsidy by at most
    \begin{equation*}
        (1-x_3(e_2))\cdot c_3(e_2) \leq 1-x_3(e_2) = x_2(e_2).
    \end{equation*}
    
    Since rounding $e_2$ to agent $3$ releases a space of $x_2(e_2) \cdot a_2$ for agent $2$, the inclusion of item $e_2$ to agent $2$ incurs an increase in subsidy by at most
    \begin{equation*}
        \left ((1-x_2(e_1)) \cdot a_1 - x_2(e_2) \cdot a_2 \right)^+ \leq a_1 \cdot \left ((1-x_2(e_1)) - x_2(e_2) \cdot \alpha \right)^+ \leq \left ((1-x_2(e_1)) - x_2(e_2) \cdot \alpha \right)^+.
    \end{equation*}
    Hence we have
    \begin{equation*}
        s_\LL \leq x_2(e_2) + \left((1-x_2(e_1)) - x_2(e_2) \cdot \alpha \right)^+.
    \end{equation*}

    In the $\LR$ rounding scheme, we round items $e_1$ to agent $1$ and $e_2$ to agent $3$.
    The total subsidy incurred can be upper bounded as follows:
    \begin{equation*}
        s_\LR = (1-x_1(e_1))\cdot c_1(e_1) + (1-x_3(e_2))\cdot c_3(e_2) \leq (1-x_1(e_1)) + (1-x_3(e_2)) = x_2(e_1) + x_2(e_2).
    \end{equation*}
    
    In the $\RL$ rounding scheme, we round both items $e_1, e_2$ to agent $2$.
    The total subsidy incurred is
    \begin{align*}
        s_\RL = (1-x_2(e_1)) \cdot a_1 + (1-x_2(e_2)) \cdot a_2 &= a_1 \left((1-x_2(e_1)) + (1-x_2(e_2)) \cdot \alpha\right) \\
        &\leq (1-x_2(e_1)) + (1-x_2(e_2)) \cdot \alpha. \qedhere 
    \end{align*}
\end{proof}

\section{WPROPS Allocation for Goods}

In this section, we consider the allocation of goods and propose an algorithm that computes a WPROPS allocation with total subsidy at most $n/3$.
Since the analysis is almost identical to the ones we have shown in previous sections, we will only highlight the main ideas and changes to the proof, without presenting too much tedious and repetitive analysis.

\subsection{The Notation and Definitions}

We consider the problem of allocating $m$ indivisible goods $M$ to $n$ agents $N$ where each agent $i\in N$ has weight $w_i > 0$ and additive valuation function $v_i:2^M \to \bR^+ \cup \{0\}$.
As before, we assume $\sum_{i\in N} w_i = 1$ and $v_i(e) \leq 1$ for all $i\in N$, $e\in M$.
We define $\WPROP_i$ as agent $i$'s proportional share, i.e., $\WPROP_i = w_i\cdot v_i(M)$.

\begin{definition}[WPROP]
    An allocation $\bX$ is called weighted proportional (WPROP) if $v_i(X_i) \geq \WPROP_i$ for all $i\in N$.
\end{definition}

As before, we use $s_i \geq 0$ to denote the subsidy we give to agent $i\in N$, $\bs = (s_1, \ldots, s_n)$ and $\|\bs\|_1 = \sum_{i\in N} s_i$.

\begin{definition}[WPROPS]
    An allocation $\bX$ with subsidies $\bs = (s_1, \ldots, s_n)$ is called weighted proportional with subsidies (WPROPS) if for any $i\in N$,
    \begin{equation*}
        v_i(X_i) + s_i \geq \WPROP_i.
    \end{equation*}
\end{definition}

Given any allocation $\bX$, computing the minimum subsidy to achieve weighted proportionality can be trivially done by setting
\begin{equation*}
    s_i = \max \{\WPROP_i - v_i(X_i), 0\}, \qquad \forall i\in N.
\end{equation*}

We first show a similar reduction as Lemma~\ref{lemma:reduction-to-IDO} which allows us to consider only IDO instances.
Since the proof is almost identical to that for Lemma~\ref{lemma:reduction-to-IDO}, we omit it.

\begin{definition}[Identical Ordering (IDO) Instances]
    An instance is called identical ordering (IDO) if all agents have the same ordinal preference on the items, i.e., $v_i(e_1) \leq v_i(e_2) \leq \cdots \leq v_i(e_m)$ for all $i\in N$.
\end{definition}

\begin{lemma}
    If there exists a polynomial time algorithm that given any IDO instance computes a PROPS allocation with at most $\alpha$ subsidy, then there exists a polynomial time algorithm that given any instance computes a PROPS allocation with at most $\alpha$ subsidy.
\end{lemma}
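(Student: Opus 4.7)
The plan is to mirror the chore version of the IDO reduction (the proof of Lemma~\ref{lemma:reduction-to-IDO}) with the inequality directions flipped to suit goods. Given a general instance $\mathcal{I} = (M, N, \bw, \bv)$, I would build an IDO companion $\mathcal{I}' = (M, N, \bw, \bv')$ by letting $\sigma_i(k) \in M$ denote the $k$-th least valued item for agent $i$ under $v_i$, and defining $v'_i(e_k) = v_i(\sigma_i(k))$. Then $v'_i(e_1) \leq \cdots \leq v'_i(e_m)$, so $\mathcal{I}'$ is IDO in the goods sense, and $v'_i(M) = v_i(M)$, so $\WPROP_i$ coincides in the two instances.

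I would then invoke the hypothesized polynomial-time algorithm on $\mathcal{I}'$ to obtain a WPROPS allocation $\bX'$ with subsidies $\bs'$ satisfying $v'_i(X'_i) + s'_i \geq \WPROP_i$ for every $i \in N$ and $\|\bs'\|_1 \leq \alpha$. From $\bX'$ I would construct an allocation $\bX$ for $\mathcal{I}$ as follows: initialize $X_i = \emptyset$ for all $i$ and $P = M$; then for $j = m, m-1, \ldots, 1$ (i.e., scanning items from the most valued downward in $\mathcal{I}'$), let the unique agent $i$ with $e_j \in X'_i$ pick her $v_i$-favorite item in $P$, add it to $X_i$, and remove it from $P$. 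This is clearly polynomial-time given the algorithm for IDO instances.

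The crux is to verify $v_i(X_i) \geq v'_i(X'_i)$ for every agent $i$, which lets me reuse $\bs' $ as the subsidies for $\bX$. Fix an agent $i$ and consider any round $j$ with $e_j \in X'_i$. At the start of round $j$, exactly $m-j$ items have already been removed from $P$, so among the $m-j+1$ items $\sigma_i(j), \sigma_i(j+1), \ldots, \sigma_i(m)$ (the $v_i$-top-$(m-j+1)$ items) at least one must still be in $P$. Hence $i$'s picked item in round $j$ has $v_i$-value at least $v_i(\sigma_i(j)) = v'_i(e_j)$. Summing over the rounds $j$ with $e_j \in X'_i$ establishes $v_i(X_i) \geq v'_i(X'_i)$, so setting $s_i := s'_i$ yields $v_i(X_i) + s_i \geq v'_i(X'_i) + s'_i \geq \WPROP_i$, and $\|\bs\|_1 \leq \|\bs'\|_1 \leq \alpha$.

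The only step requiring care is the pigeonhole counting in the final paragraph; everything else is bookkeeping. I expect no real obstacle, since the only structural difference from the chore reduction is that items are picked in decreasing rather than increasing $v_i$-order, and the correspondence gives a lower bound rather than an upper bound on the realized valuations.
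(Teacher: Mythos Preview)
Your proposal is correct and is precisely the natural goods analogue of the chore reduction in Lemma~\ref{lemma:reduction-to-IDO}; the paper in fact omits this proof entirely, stating it is ``almost identical'' to the chore version, and your write-up supplies exactly that mirror argument with the inequality directions flipped and the pigeonhole step made explicit.
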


With the above reduction, in the following, we only consider the IDO instances. 
Like the chore setting, our algorithm has two main steps: we first compute a fractional PROP allocation, in which a small number of items are fractionally allocated; then we find a way to round the fractional allocation to an integral one. 
Since some agents may have bundle value less than their proportional share after rounding, we offer subsidies to these agents.
By carefully deciding the rounding scheme, we show that the total subsidy required is at most $n/3$.

\subsection{Fractional Bid-and-Take Algorithm}
We use the same notation defined in Section~\ref{ssec:FBTA} for representing a fractional allocation, e.g., we use $x_i(e)\in [0,1]$ to denote the fraction of item $e$ that is allocated to agent $i$.
\paragraph{The Algorithm.}
We continuously allocate the items one by one following an arbitrarily fixed ordering $e_1,e_2,\ldots,e_m$ of the items.
Initially, all agents are active.
For each item $e_j\in M$, we continuously allocate $e_j$ to the active agent $i$ with the maximum $\frac{v_i(e_j)}{v_i(M)}$, until either $e_j$ is fully allocated or $v_i(X_i) = \WPROP_i$. 
If $v_i(X_i) = \WPROP_i$, we inactive agent $i$. 
The algorithm terminates when all items are fully allocated. The steps of the full algorithm are summarized in Algorithm~\ref{alg:FBTA-goods}.

\begin{algorithm}[htbp]
    \caption{Fractional Bid and Take Algorithm}
    \label{alg:FBTA-goods}
    \KwIn{An instance $(M,N,\bw,\mathbf{v})$ with $v_i(e_1) \leq v_i(e_2) \leq \cdots \leq v_i(e_m)$ for all $i\in N$}
    $X_i \gets \mathbf{0}^m, \forall i\in N$  \qquad \qquad \tcp{current fractional bundle}
    $A\gets N$ \qquad \qquad  \qquad \qquad \tcp{ the set of active agents}
    $\mathbf{z} \gets \mathbf{1}^m$ \qquad \qquad  \qquad \qquad\tcp{remaining fraction of the items}
    $j\gets 1$ \qquad \qquad  \qquad \qquad \> \> \tcp{item to be allocated}
    \While{$j \le m$}{
        Let $i \gets \argmax_{i'\in A} \frac{v_{i'}(e_j)}{v_{i'}(M)}$\;
        \If{$v_i(X_i) +z_j\cdot v_i(e_j)>\WPROP_i$}{
            $x_{ie_{j}} \gets \frac{\WPROP_i-v_i(X_i)}{v_i(e_j)}$\;
            $z_j \gets z_j - x_{ie_{j}}$, $A \gets A\setminus \{i\}$\;
            \If{$|A| = 1$}{
                Allocate all remaining items to the only active agent and go to output\;
            }
        }
        \Else{
            $x_{ie_{j}} \gets z_{j}$, $z_{j} \gets 0$, $j\gets j+1$\;
        }    
    }
    \KwOut{A fractional allocation $\bx$.}
\end{algorithm}

\begin{lemma}\label{lemma: wprop-goods}
    The output allocation $\bx$ is a fractional WPROP allocation. 
\end{lemma}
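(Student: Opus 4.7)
The plan is to mirror the proof of Lemma~\ref{lemma: wprop} from the chores setting, flipping the direction of the inequalities appropriately for goods. By construction of the partial allocation step, any agent that becomes inactive does so exactly at the moment $v_i(\bx_i) = \WPROP_i$, and active agents always satisfy $v_i(\bx_i) \leq \WPROP_i$ throughout the run. So inactive agents automatically achieve WPROP with equality, and the remaining task is to show $v_i(\bx_i) \geq \WPROP_i$ for every agent $i$ that is still active at termination (including the sole active agent that the $|A|=1$ shortcut hands all remaining items to).

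Next I would establish that the returned allocation is complete, i.e.\ $\sum_{j \in N} v_i(\bx_j) = v_i(M)$ for every $i$. A quick invariant check on Algorithm~\ref{alg:FBTA-goods} shows that the main loop maintains $|A| \geq 2$ at the start of every iteration, because the shortcut fires exactly when inactivation drops $|A|$ to $1$ and immediately dispatches all remaining items to that last active agent. Hence there is always a recipient available for each item, and all of $M$ ends up allocated.

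The main step is the following charging argument. Fix any agent $i$ that remains active at termination; since agents only transition from active to inactive, $i$ was active during the entire execution. Consider any other agent $j \neq i$ and any item $e$ for which $x_j(e) > 0$. At the moment $j$ received (that fraction of) $e$, the algorithm chose $j$ as the active maximizer of $v_{\cdot}(e)/v_{\cdot}(M)$; since $i$ was also active at that moment, the choice rule gives $v_j(e)/v_j(M) \geq v_i(e)/v_i(M)$. Multiplying by $x_j(e)$ and summing over $e$ yields
\begin{equation*}
    \frac{v_j(\bx_j)}{v_j(M)} \geq \frac{v_i(\bx_j)}{v_i(M)}.
\end{equation*}
Combined with $v_j(\bx_j) \leq \WPROP_j = w_j \cdot v_j(M)$ (which holds whether $j$ is active or inactive at termination), we obtain $v_i(\bx_j)/v_i(M) \leq w_j$ for every $j \neq i$.

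Summing this bound over $j \neq i$ and using completeness gives
\begin{equation*}
    \frac{v_i(\bx_i)}{v_i(M)} = 1 - \sum_{j \neq i} \frac{v_i(\bx_j)}{v_i(M)} \geq 1 - \sum_{j \neq i} w_j = w_i,
\end{equation*}
so $v_i(\bx_i) \geq w_i \cdot v_i(M) = \WPROP_i$, as desired. The only delicate point I anticipate is making the active-throughout argument airtight in the fractional setting, in particular verifying that the inequality on the ratios $v_j(e)/v_j(M) \geq v_i(e)/v_i(M)$ indeed holds at every infinitesimal step at which $j$ absorbed part of $e$; since an item may be split across several iterations of the while loop, one should state precisely that each such sub-iteration satisfies the greedy choice rule with $i \in A$, which then aggregates cleanly to the summed inequality above.
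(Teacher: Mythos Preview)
Your proposal is correct and follows essentially the same approach as the paper: both hinge on the charging inequality $\frac{v_i(\bx_j)}{v_i(M)} \leq \frac{v_j(\bx_j)}{v_j(M)}$ for any agent $i$ that stayed active throughout, combined with completeness and $\sum_j w_j = 1$. The paper phrases it as a contradiction (assume the last active agent misses $\WPROP$ and derive $1<1$) while you give the direct summation, but the substance is identical.
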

\begin{proof}
    It suffices to show that the last active agent $n$ receives a bundle of value at least $\WPROP_n$. 
    Since we allocate each item $e$ to the active agent $i$ with the maximum $\frac{v_i(e)}{v_i(M)}$, we have $\frac{v_i(e)}{v_i(M)} \geq \frac{v_j(e)}{v_j(M)}$ for every active $j\neq i$. Therefore we have $\frac{v_j(X_i)}{v_j(M)} \leq \frac{v_i(X_i)}{v_i(M)}$ for any active agents $i,j\in A$. Suppose that $v_n(X_n) < \WPROP_n$ at the end of the algorithm, then we have 
    \begin{equation*}
        1 = \sum_{i\in N} \frac{v_n(X_i)}{v_n(M)}
        \le \sum_{i\in N}\frac{v_i(X_i)}{v_i(M)} < \sum_{i\in N}\frac{\WPROP_i}{v_i(M)} = \sum_{i\in N} w_i = 1,
    \end{equation*}
    which is a contradiction.
\end{proof}

The definition of item-sharing graph can be naturally extended to the goods case.
We call agent $j$ a successor of agent $i$ if Algorithm~\ref{alg:FBTA-goods} tries to allocate the remaining part of item $e^i$ to $j$.
By letting the set of nodes be the agents where each agent has a directed edge to its successor (if any), we construct the item-sharing graph of the fractional allocation $\bx$ returned by Algorithm~\ref{alg:FBTA-goods}.
As before, we differ in two cases of whether there exist \emph{shattered items}, i.e., items shared by three or more agents.
We remark that the splitting we introduced in the chores setting can also be used in the goods setting since it only deals with the tree structure instead of the cost/value of items.

\subsection{Graph without Shattered Item}
We first consider the case that there is no shattered item in the item-sharing graph $G$.
Following the same analysis as in the proof of Lemma~\ref{thm:three-agents}, we can show that for any tree with size two, there exists a rounding scheme that guarantees a WPROPS allocation with a total subsidy of at most $2/3$.

\begin{theorem}\label{thm:three-agents-goods}
    For a tree with size $2$, there exists a rounding scheme that guarantees a WPROPS allocation with total subsidy at most $2/3$, which is optimal.
\end{theorem}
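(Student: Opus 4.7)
The plan is to mirror the four-scheme argument used for Theorem~\ref{thm:three-agents} in the chores setting. With $y_1 = x_2(e_1)$ and $y_2 = x_2(e_2)$, I would define $\LL$, $\RR$, $\LR$, $\RL$ exactly as in Section~\ref{sec:three-agents}: $\LL$ rounds $e_1$ to agent $1$ and $e_2$ to agent $2$; $\RR$ rounds $e_1$ to agent $2$ and $e_2$ to agent $3$; $\LR$ rounds $e_1$ to agent $1$ and $e_2$ to agent $3$; and $\RL$ rounds both $e_1$ and $e_2$ to agent $2$. For goods, however, the subsidy is triggered when an agent is \emph{deprived} of a fractional share she had under $\bx$, so the roles of ``over-loaded'' and ``under-loaded'' agents are swapped compared to chores.

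The first step is to prove a goods analog of Lemma~\ref{lemma:four-upperbounds}. For example, for $\LL$, agent $1$ only gains value so needs no subsidy; agent $3$ loses her share $1-y_2$ of $e_2$, incurring subsidy at most $(1-y_2)\cdot v_3(e_2) \le 1-y_2$; agent $2$ loses $y_1 \cdot v_2(e_1)$ on $e_1$ but gains $(1-y_2)\cdot v_2(e_2)$ on $e_2$, so the net subsidy for agent $2$ is at most $\bigl(y_1\cdot v_2(e_1) - (1-y_2)\cdot v_2(e_2)\bigr)^+$. Symmetric derivations for $\RR$, $\LR$, $\RL$ will yield four expressions that, after the WLOG normalization $\alpha = \min\{v_2(e_1), v_2(e_2)\}/\max\{v_2(e_1), v_2(e_2)\} \in (0,1]$ and a suitable relabeling $(y_1,y_2) \leftrightarrow (1-y_1,1-y_2)$, will fit into exactly the same algebraic template as Table~\ref{tab:three-upperbounds}.

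Once that reduction is in place, the entire case analysis of Theorem~\ref{thm:three-agents} transfers over: split on whether each of the two $(\cdot)^+$ terms in the bounds for $\LL$ and $\RR$ activates, giving the three cases. Case~1 bounds $s_\LL + s_\RR + s_\RL \le 2$; Case~2 combines $s_\LL$ with $\min\{s_\RR, s_\RL\}$ via Proposition~\ref{proposition:xy<1/4} to obtain $5/8 < 2/3$; Case~3 combines $s_\LL, s_\RR, s_\LR$ weighted by $(2-\alpha)/\alpha, 1, 1$ to yield $\min \le \alpha(3-\alpha)/(2+\alpha) \le 2/3$. Optimality of $2/3$ already follows from the three-agent lower bound instance of~\cite{conf/wine/WuZZ23}, which is a special case (all weights equal) of our setting.

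The main obstacle I expect is verifying that the reparametrization really does bring the four bounds into the chores template. In $\RL$ (both items to agent $2$), the chores bound carried an explicit factor of $\alpha$ on one summand because agent $2$ was the one absorbing both items; here, agents $1$ and $3$ are the ones needing compensation, so the natural bound is $(1-y_1) + (1-y_2)$ without an $\alpha$ factor, and I must absorb $\alpha$ into the correct place (for instance by choosing which endpoint plays the role of ``agent $2$'' in the template based on the direction of the WLOG inequality). Once this bookkeeping is settled, Proposition~\ref{proposition:xy<1/4} plugs in to close the middle case with no further work.
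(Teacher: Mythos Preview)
Your proposal is correct and matches the paper's approach almost exactly. The paper also derives four bounds for $\LL,\RR,\LR,\RL$ (its Table~\ref{tab:four-upperbounds-goods}) and then observes that the substitution $y_i \mapsto 1-x_2(e_i)$ maps them onto the chores template of Table~\ref{tab:three-upperbounds}, after which the three-case analysis of Theorem~\ref{thm:three-agents} applies verbatim; the swap you anticipate (goods-$\RL$ matching chores-$\LR$ and vice versa, so that the $\alpha$ factor lands on the scheme where agent~$2$ loses both items) is precisely what makes the reparametrization go through.
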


Consider an instance $\cI = (M,N,\bw,\bv)$ where $N = \{1,2,3\}$ (see Figure~\ref{fig:example-three-agents-goods}).
Let $a_1 = v_2(e_1)$, $a_2 = v_2(e_2)$, we assume w.l.o.g. that $a_1 \geq a_2$.
We let $\alpha = a_1 / a_2$.

    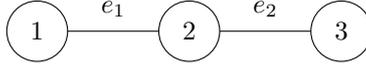
\begin{figure}[ht]
        \centering
        \begin{tikzpicture}	
            \draw (-2,0) circle(0.4);    \node at (-2,0) {$1$};
            \draw (-1.6,0)--(-0.4,0);   \node at(-1,0.3) {$e_1$};
            \draw (0,0) circle(0.4);    \node at (0,0) {$2$};
            \draw (0.4,0)--(1.6,0);     \node at(1,0.3) {$e_2$};
            \draw (2,0) circle(0.4);    \node at (2,0) {$3$};
        \end{tikzpicture}
        \caption{The item-sharing graph for three agents case.}
        \label{fig:example-three-agents-goods}
    \end{figure}
    
Similar to the chore setting, we introduce four different rounding schemes called $\LL,\RR,\LR,\RL$.
We use $s_{\LL}, s_{\RR}, s_{\LR}, s_{\RL}$ to denote the subsidy required in the four rounding strategies respectively.
The upper bounds for the subsidies required for the four different rounding schemes are as shown in Table~\ref{tab:four-upperbounds-goods}.

    \begin{table}[htbp]
        \centering
        \begin{tabular}{|c|c|}
        \hline
        \textbf{Rounding Schemes} & \textbf{Upper Bounds for Subsidies}   \\ \hline
        $\LL$ & $(1-x_2(e_2)) + \left(x_2(e_1) - x_2(e_2)\cdot\alpha\right)^+$ \\ \hline
        $\RR$ & $(1-x_2(e_1)) + \left(x_2(e_2)\cdot\alpha - x_2(e_1)\right)^+$ \\ \hline
        $\LR$ & $x_2(e_1) + x_2(e_2)\cdot \alpha$ \\ \hline
        $\RL$ & $(1-x_2(e_1)) + (1-x_2(e_2))$          \\ \hline
        \end{tabular}
        \caption{Upper bounds for the subsidies required for the four different rounding schemes.}\label{tab:four-upperbounds-goods}
    \end{table}

If we use $y_1$ and $y_2$ to denote $1-x_2(e_1)$ and $1-x_2(e_2)$, then the above upper bounds are symmetrical as the upper bounds we present in the proof of Theorem~\ref{thm:three-agents}. 
Note that $0 \leq 1 - x_2(e_i) \leq 1$ for both $i\in \{1,2\}$, the proof of Theorem~\ref{thm:three-agents} can be extended to the goods case straightforwardly.

Based on Theorem~\ref{thm:three-agents-goods}, we call \splitting
to decompose any graph $G$ (without shattered item) into a collection of subtrees with size two.
By Lemma~\ref{lemma:simple-splitting} and the same analysis as in the proof of Theorem~\ref{thm:subsidy-n-1}, we can bound the total subsidy required for the graph without shattered item by $n/3$.

\begin{theorem}\label{thm:subsidy-n-1-goods}
    Given the fractional allocation $\bx$ with $n-1$ fractional items returned by Algorithm~\ref{alg:FBTA-goods}, there exists a rounding scheme that returns an integral WPROPS allocation $\bX$ with total subsidy at most $n/3$.
\end{theorem}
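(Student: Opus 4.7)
The plan is to port the argument of Theorem~\ref{thm:subsidy-n-1} verbatim to the goods setting, since the item-sharing graph is a purely combinatorial object and the \splitting routine of Algorithm~\ref{alg:SD} manipulates only its tree structure, not the values or costs of items. With no shattered items the graph $G$ is a tree on $n$ nodes with $n-1$ edges, each edge corresponding to a unique fractional item shared by its two endpoints, so different subtrees in any edge-disjoint splitting involve disjoint sets of fractional items and can be rounded independently.

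First I would apply Lemma~\ref{lemma:simple-splitting} to decompose $G$ into subtrees of size exactly two, with at most one leftover subtree of size one when $n$ is even. For every size-two subtree I would invoke Theorem~\ref{thm:three-agents-goods} to round its two fractional items at a cost of at most $2/3$ in total subsidy. For the possible leftover size-one subtree, I would use threshold rounding: round the unique fractional item $e$ to the agent $i^*$ holding the larger fraction; the loser $j$ satisfies $x_j(e)\leq 1/2$, so to restore $j$'s proportional share we need subsidy at most $x_j(e)\cdot v_j(e)\leq 1/2$.

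Summing across all pieces of the splitting gives the claimed bound. When $n$ is odd there are $(n-1)/2$ size-two subtrees and the total subsidy is at most $\frac{2}{3}\cdot\frac{n-1}{2}=\frac{n-1}{3}\leq \frac{n}{3}$. When $n$ is even there are $(n-2)/2$ size-two subtrees plus one size-one subtree, contributing at most $\frac{2}{3}\cdot\frac{n-2}{2}+\frac{1}{2}=\frac{n}{3}-\frac{1}{6}\leq \frac{n}{3}$. In either case the bound $n/3$ holds.

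There is no genuine obstacle, but two small points deserve attention. The first is to make sure that the four-case analysis underlying Theorem~\ref{thm:three-agents-goods} really is symmetric to that of Theorem~\ref{thm:three-agents}: the excerpt exhibits upper bounds for $s_{\LL},s_{\RR},s_{\LR},s_{\RL}$ that are, under the substitution $y_i=1-x_2(e_i)$, identical in form to the chore bounds in Lemma~\ref{lemma:four-upperbounds}, so the case analysis transfers directly since $y_1,y_2\in[0,1]$ and $\alpha\in(0,1]$ lie in the same ranges used there. The second point is that for the size-one subtree we are using the fact that $v_j(e)\leq 1$ by the normalization assumption on valuations, which is exactly the analogue of the cost normalization used in the chore setting.
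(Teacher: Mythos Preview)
Your proposal is correct and follows essentially the same approach as the paper: the paper explicitly states that by applying \splitting (Lemma~\ref{lemma:simple-splitting}) and invoking Theorem~\ref{thm:three-agents-goods} on each size-two subtree, the analysis of Theorem~\ref{thm:subsidy-n-1} carries over verbatim to bound the total subsidy by $n/3$. Your handling of the leftover size-one subtree via threshold rounding (with the loser receiving subsidy at most $x_j(e)\cdot v_j(e)\le 1/2$) is exactly the goods analogue of the chore argument, and your remarks on the symmetry under $y_i=1-x_2(e_i)$ and the normalization $v_j(e)\le 1$ are the right sanity checks.
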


\subsection{Graph with Shattered Item}
In this section, we consider the case that there exists at least one shattered item in the item-sharing graph $G$.
Similar to the chores setting, we call Algorithm~\ref{alg:GD} to decompose the graph $G$ into an expanded atom-path and a collection of trees.
Following Lemma~\ref{lemma:atom-path-splitting}, each decomposed tree returned by the Atom-path Splitting either contains an atom-path or has an even size.
To upper bound the total required subsidy, it suffices to bound the subsidy required for each expanded atom-path.

\begin{lemma}\label{lemma:expanded-atom-path-goods}
    For any expanded atom-path with $k$-length atom-path and $h\leq k+1$ attached edges, there exists a rounding scheme with total subsidy bounded by $(k+h)/3$.
\end{lemma}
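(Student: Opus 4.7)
The plan is to mirror the proof of Lemma~\ref{lemma:expanded-atom-path} from the chores setting almost verbatim, exploiting the fact that the per-item subsidy bound ``at most $1-x_i(e)$'' holds identically in both settings: in chores, rounding $e$ to agent $i$ costs agent $i$ at most $(1-x_i(e)) c_i(e) \leq 1-x_i(e)$, whereas in goods, rounding $e$ to agent $i$ makes every other agent $j$ lose at most $x_j(e) v_j(e) \leq x_j(e)$, summing to at most $1-x_i(e)$. So the upper bound tables used throughout Section~\ref{ssec:expanded-atom-path} transfer line for line once we re-derive the two underlying ingredients in the goods language.

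First I would establish the goods analog of Observation~\ref{obser:c_i(e_0)\geq c_i(e_i)}: for every atom-path agent $i\in\{1,\ldots,k\}$, we have $v_i(e_0)\geq v_i(e_i)$. This follows because in Algorithm~\ref{alg:FBTA-goods} the items are scanned in the IDO (increasing value) order $e_1,\ldots,e_m$, and $e_0$ is the last (fraction of) item agent $i$ consumes before becoming inactive, so $e_i$ was consumed earlier and has no larger value than $e_0$. As in the chores case, this may fail for $i=k+1$, so the argument must split on whether $v_{k+1}(e_0)\geq v_{k+1}(e_{k+1})$. Next I would re-prove Proposition~\ref{proposition:modified-greedy} in the goods setting: rounding the attached edge $e_i$ to the atom-path agent $i$ (who receives $e_0$'s ``loss'' of $x_i\cdot v_i(e_0)$ but ``gains'' $y_i\cdot v_i(e_i)$) incurs net subsidy $(x_i\cdot v_i(e_0)-y_i\cdot v_i(e_i))^+\leq (x_i - (1-y_i))^+ \cdot 1$; the alternative (rounding to the attached agent) costs the atom-path agent extra $(1-y_i)\cdot v_i(e_i)\leq 1-y_i$, but wait---in the goods direction the signs reverse. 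Accordingly I would define the biased threshold rounding for goods by swapping the roles: round $e_i$ to the atom-path agent $i$ iff $1-y_i \leq y_i - x_i$ is appropriately adjusted, and verify that the minimum of the two options is again bounded by $(1-x_i)/2$.

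With these two ingredients in place I would reprove the four auxiliary lemmas Lemma~\ref{lemma:expanded-atom-path-general}, \ref{lemma:expanded-atom-path-k+1}, \ref{lemma:expanded-atom-path-k}, \ref{lemma:expanded-atom-path-k-1} in sequence, using exactly the same charging scheme ($p_0 = k/(k+1)$ for the atom-path item $e_0$, $p_i = 1/2$ for each attached edge) and exactly the same algebra: the inequalities $h \leq k(2 - 6/(k+1))$, $h=k+1$, $h=k$, $h=k-1$ all reduce to identical arithmetic in $x_i, y_i, \alpha$. Proposition~\ref{proposition:xy<1/4} is symmetric and carries over unchanged. The three-agent building block (Theorem~\ref{thm:three-agents-goods}) and its four-scheme table (Table~\ref{tab:four-upperbounds-goods}) have already been shown to be symmetric to the chores case, giving further confidence that the substitution is faithful. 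Finally I would combine the four sub-lemmas exactly as in the proof of Lemma~\ref{lemma:expanded-atom-path}, verifying the arithmetic case split $h\in\{k+1,k\}$ versus $h\leq k-1$ on $k\geq 4$ with the small-$k$ exceptions handled by the $h=k-1$ lemma.

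The main obstacle I anticipate is getting the direction of the biased threshold rounding right in the goods setting: in chores the attached agent is ``being loaded with cost'' whereas in goods she is ``being deprived of value''; the quantities $x_i, y_i$ that appear in the upper bounds end up in the same positions, but verifying this requires carefully tracking which agent is compensated when the $e_0$ rounding releases or consumes value. Beyond that, the only place where the goods setting could misbehave is the boundary agent $k+1$, for which the inequality $v_{k+1}(e_0)\geq v_{k+1}(e_{k+1})$ may fail; this is already handled by a case split in Lemma~\ref{lemma:expanded-atom-path-k+1} and Lemma~\ref{lemma:expanded-atom-path-k}, and I would replicate that case split here, using the same ``try two rounding schemes and average'' trick to conclude that the minimum of their subsidies satisfies the desired $(k+h)/3$ bound.
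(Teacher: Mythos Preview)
Your high-level structure matches the paper's: both reduce Lemma~\ref{lemma:expanded-atom-path-goods} to four sub-lemmas covering $h\leq k(2-6/(k+1))$, $h=k+1$, $h=k$, and the small cases $h=k-1$ with $k\in\{2,3\}$, and both rely on a goods analogue of Observation~\ref{obser:c_i(e_0)>=c_i(e_i)}. So the skeleton is fine.

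The genuine gap is in your biased threshold rounding. You assert that ``the minimum of the two options is again bounded by $(1-x_i)/2$,'' and that therefore the algebra of the four sub-lemmas is ``identical'' to the chores case. This is false. In the goods setting, an atom-path agent $i$ who does \emph{not} receive $e_0$ suffers a \emph{loss} of $x_i\cdot v_i(e_0)$ that must be subsidised, whereas in chores she enjoys a \emph{saving} of $x_i\cdot c_i(e_0)$. Concretely, rounding $e_i$ to agent $i$ costs the attached agent $y_i$ and leaves agent $i$ with deficit $(x_i\,v_i(e_0)-y_i\,v_i(e_i))^+$; rounding $e_i$ to the attached agent leaves agent $i$ with deficit $x_i\,v_i(e_0)+(1-y_i)\,v_i(e_i)$. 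Minimising these two gives at best $(1+x_i\alpha_i)/2$ (when $x_i\alpha_i\leq 1/2$) or $x_i\alpha_i+1/4$ (otherwise), never $(1-x_i)/2$. You actually flag the sign reversal yourself (``but wait---in the goods direction the signs reverse'') and then wave it away; that is exactly where the argument breaks.

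Because the per-edge bound is $(1+x_i)/2$ rather than $(1-x_i)/2$, the arithmetic in each sub-lemma is genuinely different from the chores case (the sums now \emph{add} $\sum x_i$ instead of subtracting it), and a new obstruction appears: when some $x_j\alpha_j>1/2$ the $(1+x_i)/2$ bound is too weak. The paper handles this with a separate preliminary proposition that disposes of all expanded atom-paths containing an agent with $x_j\alpha_j>1/2$ via plain threshold rounding and ad hoc case analysis for small $k$; only after that can one assume $x_i\alpha_i\leq 1/2$ for all $i$ and use the $(1+x_i)/2$ bound. Your proposal is missing this entire branch. Also, the case split on agent $k+1$ that you plan to replicate from chores is not how the paper proceeds in the goods setting; there the special role is played by agent~$1$, to whom $e_0$ is rounded so that the inequality $v_1(e_1)\leq v_1(e_0)$ can be exploited directly.
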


Given Lemma~\ref{lemma:expanded-atom-path-goods}, we are ready to upper bound the total subsidy required for graph $G$.
The proof of Lemma~\ref{lemma:general-tree-subsidy-goods} is based on mathematical induction and similar to the proof of Lemma~\ref{lemma:general-tree-subsidy}.
Here we omit the detailed proof.

\begin{lemma}\label{lemma:general-tree-subsidy-goods}
    For any tree with size $z \geq 2$ such that there exists an atom-path in the tree, there exists a rounding scheme with total subsidy bounded by $z/3$.
\end{lemma}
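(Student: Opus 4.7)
The plan is to mirror the chores proof of Lemma~\ref{lemma:general-tree-subsidy} exactly, proceeding by strong induction on the size $z$ of the tree, since all the structural machinery (Atom-path Splitting, the notion of good subtree, expanded atom-path) depends only on the combinatorics of the item-sharing graph and is identical in the goods setting; the only difference lies in swapping cost-style subsidy bounds for value-style ones, which has already been absorbed into Theorem~\ref{thm:subsidy-n-1-goods} and Lemma~\ref{lemma:expanded-atom-path-goods}.

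For the base case $z=2$, the tree is itself an atom-path consisting of one shattered item $e$ shared by three agents. I would round $e$ to the agent $i^\star$ holding the maximum fraction, so $x_{i^\star}(e) \geq 1/3$. Every other agent $j$ on the atom-path had $x_j(e)\cdot v_j(e)$ of value from $e$ in the fractional allocation and needs subsidy at most $x_j(e) \cdot v_j(e) \leq x_j(e)$ to restore WPROP; summing over the other two agents yields total subsidy at most $1 - x_{i^\star}(e) \leq 2/3$, matching $z/3$.

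For the inductive step, assume the bound holds for every tree of size $z' < z$ with an atom-path. Given a tree $T$ of size $z \geq 3$ with at least one atom-path, invoke Algorithm~\ref{alg:GD} (whose correctness, Lemma~\ref{lemma:atom-path-splitting}, is a purely combinatorial fact that carries over verbatim). This produces one expanded atom-path $P$ together with a collection of good subtrees $T_1,\ldots,T_r$ (each either contains an atom-path or has even size), with $|P| + \sum_\ell |T_\ell| = z$. Bound the subsidy on each piece independently: on $P$ the bound $|P|/3$ follows from Lemma~\ref{lemma:expanded-atom-path-goods}; on any $T_\ell$ that contains an atom-path, the bound $|T_\ell|/3$ follows from the induction hypothesis (since $|T_\ell| < z$); on any $T_\ell$ of even size with no atom-path, split it with \splitting into pairs of adjacent edges, each handled by Theorem~\ref{thm:three-agents-goods} at cost at most $2/3$, for a total of $(|T_\ell|/2)\cdot(2/3) = |T_\ell|/3$. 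Summing gives total subsidy at most $z/3$, completing the induction.

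The only step that is not completely mechanical is verifying that all the ingredients indeed transfer to the goods setting: Algorithm~\ref{alg:GD} and Lemma~\ref{lemma:atom-path-splitting} are purely graph-theoretic and carry over unchanged; Theorem~\ref{thm:three-agents-goods} was established by the symmetric substitution $y_i = 1 - x_2(e_i)$ that reduces the goods four-scheme table to the chores four-scheme table; and Lemma~\ref{lemma:expanded-atom-path-goods} is stated with the same $(k+h)/3$ bound. So the main obstacle is purely bookkeeping: one must check that the even-size good subtrees arising from the splitting really admit a pairing that respects the edge structure (guaranteed by \splitting in Lemma~\ref{lemma:simple-splitting}), and that no piece produced by the recursion drops below $z=2$ without being an atom-path, which is precisely the definition of ``good subtree.''
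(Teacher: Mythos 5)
Your proof is correct and mirrors the paper's intended argument, which the paper itself omits by referring back to the chores version (Lemma~\ref{lemma:general-tree-subsidy}). You correctly handle the base case $z=2$ (a single shattered item shared by three agents, threshold rounding giving subsidy $\le 2/3$), and your inductive step uses the same decomposition via Atom-path Splitting, bounding the expanded atom-path by Lemma~\ref{lemma:expanded-atom-path-goods}, even-size components by \splitting together with Theorem~\ref{thm:three-agents-goods}, and components with atom-paths by the induction hypothesis; the only cosmetic difference is that you invoke \splitting and Theorem~\ref{thm:three-agents-goods} directly for the even-size pieces rather than citing Theorem~\ref{thm:subsidy-n-1-goods}, which is slightly cleaner since the latter is stated with the blanket bound $n/3$ rather than the sharper $(n-1)/3$ for odd $n$.
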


We remark that Lemma~\ref{lemma:general-tree-subsidy-goods} directly leads to the following theorem.

\begin{theorem}\label{thm:subsidy-<n-1-goods}
    Given the fractional allocation $\bx$ with less than $n-1$ fractional items returned by Algorithm~\ref{alg:FBTA}, there exists a rounding scheme that returns an integral WPROPS allocation with total subsidy at most $(n-1)/3$.
\end{theorem}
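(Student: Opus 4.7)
The plan is to derive Theorem~\ref{thm:subsidy-<n-1-goods} as an immediate corollary of Lemma~\ref{lemma:general-tree-subsidy-goods}, in exact parallel to how the chores-case Theorem~\ref{thm:subsidy-<n-1} is obtained from Lemma~\ref{lemma:general-tree-subsidy}. First I would invoke the same w.l.o.g.\ reduction used throughout the paper: since any two connected components of the item-sharing graph involve disjoint sets of fractional items, they may be rounded independently, so it suffices to treat a single tree $G$ on all $n$ agents, in which case $G$ has exactly $n-1$ edges.

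Next I would observe that, under this reduction, the hypothesis ``strictly fewer than $n-1$ fractional items'' is equivalent to saying that some fractional item corresponds to at least two edges of $G$, i.e.\ that $G$ contains an atom-path of length at least two. The tree $G$ therefore satisfies the hypothesis of Lemma~\ref{lemma:general-tree-subsidy-goods} with size $z = n-1$, and applying that lemma directly yields a rounding scheme with total subsidy at most $(n-1)/3$, which is the desired bound.

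The real substance is therefore not in this deduction but in the underlying Lemma~\ref{lemma:general-tree-subsidy-goods}, which the paper asserts follows by a direct transfer of the chores analysis. That transfer proceeds by the same induction on tree size used for Lemma~\ref{lemma:general-tree-subsidy}: invoke \textsf{Atom-path Splitting} to carve out an expanded atom-path together with a collection of good subtrees, bound each good subtree by $z'/3$ via Theorem~\ref{thm:subsidy-n-1-goods} or the inductive hypothesis, and bound the expanded atom-path by the goods analog of Lemma~\ref{lemma:expanded-atom-path}. I expect the main technical obstacles in that transfer to be (i) recasting the biased threshold rounding of Proposition~\ref{proposition:modified-greedy} so that the ``space released'' when $e_0$ is rounded away from an atom-path agent $i$ is reinterpreted as a value credit covering $i$'s shortfall on the attached edge $e_i$, and (ii) verifying that Observation~\ref{obser:c_i(e_0)>=c_i(e_i)} continues to hold for goods; the latter holds because Algorithm~\ref{alg:FBTA-goods} still respects the IDO ordering and $e_0$ is the last item received by every atom-path agent $i \leq k$, giving $v_i(e_0) \geq v_i(e_i)$. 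Once these two analogs are in place, the case splits of Lemmas~\ref{lemma:expanded-atom-path-general}--\ref{lemma:expanded-atom-path-k-1} port over verbatim, with Table~\ref{tab:four-upperbounds-goods} replacing Table~\ref{tab:three-upperbounds} and all the algebraic inequalities preserved under the symbol substitution $y_i \leftrightarrow 1 - y_i$.
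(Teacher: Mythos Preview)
Your derivation of the theorem from Lemma~\ref{lemma:general-tree-subsidy-goods} is correct and is exactly the paper's approach: the paper simply remarks that the lemma ``directly leads to'' the theorem, and your reduction (single tree, $n-1$ edges, fewer than $n-1$ fractional items forces an atom-path, apply the lemma with $z=n-1$) spells this out accurately.

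One caveat on your supplementary discussion: the claim that the expanded atom-path case analysis ``ports over verbatim'' under the substitution $y_i \leftrightarrow 1-y_i$ is an oversimplification. In the goods setting the biased threshold rounding bound (Proposition~\ref{proposition:modified-greedy-goods}) has a genuinely different shape---it splits on whether $x_i\alpha_i \le 1/2$ and yields $(1+x_i\alpha_i)/2$ or $x_i\alpha_i + 1/4$ rather than the single $(1-x_i)/2$ bound of the chores case---and the paper handles the regime $x_j\alpha_j > 1/2$ separately via Proposition~\ref{proposition:x_j>1/2} before the main case splits. The overall architecture (Lemmas~\ref{lemma:expanded-atom-path-general-goods}--\ref{lemma:expanded-atom-path-k-1-goods}) mirrors the chores lemmas, but the inequalities inside are reworked rather than relabeled. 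This does not affect the correctness of your proof of the theorem itself, which only needs Lemma~\ref{lemma:general-tree-subsidy-goods} as a black box.
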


Next, we define some notations that will be useful to prove Lemma~\ref{lemma:expanded-atom-path-goods}.
Fix any expanded atom-path $T$, we assume that the atom-path has $k\geq 2$ edges, which correspond to the item $e_0$.
We denote the nodes in the atom-path by agents $1, \ldots, k+1$, where agent $i+1$ is the successor of agent $i$, for all $i=1,2,\ldots,k$.
We denote the edge attached to agent $i$ by $e_i$.
When there is no attached edge to agent $i$, we define $e_i = \bot$.
We call agents $1,2,\ldots,k+1$ the \emph{atom-path agents} and the other agents the \emph{attached agents}.
We use $h \leq k+1$ to denote the number of attached edges/agents in the expanded atom-path.
For each item $e_i$, where $i\in \{1,\ldots,k+1\}$, we use $y_i$ to denote the fraction of item $e_i$ held by the corresponding attached agent, i.e., $y_i = 1 - x_i(e_i)$.
For all $i\in \{1, \ldots, k+1\}$, when there is no ambiguity, we use $x_i$ to denote $x_i(e_0)$ for convenience.

In the following, we focus on upper bounding the subsidy required for any expanded atom-path.
We show Lemma~\ref{lemma:expanded-atom-path-goods}, by giving the following arguments of different types of expanded atom-paths respectively, and show that the arguments cover all types of expanded atom-paths.

\begin{lemma}\label{lemma:expanded-atom-path-general-goods}
    For an expanded atom-path with $h \leq k \cdot (2-\frac{6}{k+1})$, there exists a rounding scheme with total subsidy at most $(k+h)/3$.
\end{lemma}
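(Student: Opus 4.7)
The plan is to mirror the charging argument used in Lemma~\ref{lemma:expanded-atom-path-general} for the chores setting, with one structural flip: in the goods case the subsidy from rounding an item does not fall on the recipient (who is now happier), but rather on the agents who lose their fractional claim to the item.

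First I would specify the rounding scheme. For the atom-path item $e_0$, round it fully to $i^\star = \argmax_{1 \leq i \leq k+1} x_i$; since $\sum_{i=1}^{k+1} x_i = 1$, we have $x_{i^\star} \geq 1/(k+1)$. For each attached item $e_i$, round it to whichever of agent $i$ or the attached agent holds the larger fraction; equivalently, round to agent $i$ iff $y_i \leq 1/2$, and to the attached agent otherwise.

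Next I would carry out the charging. For $e_0$, each atom-path agent $j \neq i^\star$ loses her fractional value worth $x_j \cdot v_j(e_0) \leq x_j$, so the total subsidy incurred across these $k$ agents is at most $\sum_{j \neq i^\star} x_j = 1 - x_{i^\star} \leq k/(k+1)$. Charge $p_0 = k/(k+1)$ to $e_0$. For each attached item $e_i$, the losing agent's foregone value is at most $\min\{y_i,\, 1-y_i\} \leq 1/2$, so charge $p_i = 1/2$. Since $e_i = \bot$ contributes nothing, the total charge, and hence the total subsidy, is bounded by
\begin{equation*}
    \sum_{i=0}^{k+1} p_i \;=\; \frac{k}{k+1} + \frac{h}{2}.
\end{equation*}

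Finally I would verify that the hypothesis $h \leq k \cdot (2 - 6/(k+1))$ is exactly the algebraic condition that turns the bound above into $(k+h)/3$. Indeed, $\frac{k}{k+1} + \frac{h}{2} \leq \frac{k+h}{3}$ rearranges to $h \leq \frac{2k(k-2)}{k+1} = k\cdot\bigl(2 - \frac{6}{k+1}\bigr)$. Since neither the choice of $i^\star$ nor the bounds $v_j(e_0)\leq 1$ and $v_i(e_i)\leq 1$ rely on IDO structure or on the observation $v_i(e_0) \geq v_i(e_i)$, the argument goes through without any auxiliary claim. I expect no genuine obstacle here; the proof is essentially a mechanical translation of the chores argument, the only subtle point being the reassignment of who pays the subsidy after rounding, which the analysis above handles uniformly via the $(1 - x_{i^\star})$ and $\min\{y_i,1-y_i\}$ bounds.
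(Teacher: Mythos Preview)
Your proposal is correct and takes essentially the same approach as the paper: threshold rounding for $e_0$ and each attached edge, then summing the charges $k/(k+1) + h/2$ and checking the algebraic equivalence with the hypothesis. Your write-up is in fact more explicit than the paper's terse version, since you spell out the goods-specific flip (the subsidy is paid by the agents who lose their fractional share, not by the recipient), which is exactly why $\sum_{j\neq i^\star} x_j = 1 - x_{i^\star} \leq k/(k+1)$ bounds the subsidy from $e_0$.
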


\begin{lemma}\label{lemma:expanded-atom-path-k+1-goods}
    For an expanded atom-path with $h = k+1$, there exists a rounding scheme with total subsidy at most $(k+h)/3$.
\end{lemma}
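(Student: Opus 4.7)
The plan is to mirror the chores proof of Lemma~\ref{lemma:expanded-atom-path-k+1}, flipping the sign convention from cost to value and adjusting the rounding subroutines accordingly. Since the expanded atom-path has every atom-path agent $1,\ldots,k+1$ carrying an attached edge, I must round the shared atom-path item $e_0$ together with the $k+1$ attached items $e_1,\ldots,e_{k+1}$, and the target bound $(k+h)/3=(2k+1)/3$ must come from aggregating across these $2k+1$ items.

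First I would establish the goods analogue of Observation~\ref{obser:c_i(e_0)>=c_i(e_i)}: because Algorithm~\ref{alg:FBTA-goods} processes items in non-decreasing value order on IDO instances, each atom-path agent $i\le k$ takes (a fraction of) $e_i$ strictly before her final item $e_0$, whence $v_i(e_0)\ge v_i(e_i)$. For agent $k+1$ this inequality may or may not hold depending on whether $e_{k+1}$ is outgoing or incoming in the item-sharing graph, and the proof will split on that dichotomy just as in the chores case.

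Next I would introduce a goods-biased threshold rounding for an attached edge $e_i$ whose atom-path agent $i$ does not receive $e_0$. Rounding $e_i$ to agent $i$ incurs subsidy at most $\max\{0,\,x_i v_i(e_0)-y_i v_i(e_i)\}+y_i\cdot v_{\text{att}}(e_i)$, while rounding $e_i$ to the attached agent incurs $x_i v_i(e_0)+(1-y_i)v_i(e_i)$. Using $v_i(e_0)\ge v_i(e_i)$ and the normalization $v\le 1$, taking the smaller of these two options should give a per-edge subsidy bound analogous to Proposition~\ref{proposition:modified-greedy}. I would then run the same case split as in the chores proof: when $v_{k+1}(e_0)<v_{k+1}(e_{k+1})$, compare Scheme~1 (round $e_{k+1}$ to the attached agent, $e_0$ to $k+1$) with Scheme~2 (round both $e_{k+1}$ and $e_0$ to $k+1$); when $v_{k+1}(e_0)\ge v_{k+1}(e_{k+1})$, route $e_0$ to the $\argmax$ holder of $e_0$ among the atom-path agents and again compare two schemes for $e_{k+1}$. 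In each case I would sum the two scheme bounds, apply $\sum_{i=1}^{k+1}x_i=1$, and invoke Proposition~\ref{proposition:xy<1/4} where needed, concluding that the minimum total subsidy is at most $(k+1)/2\le(2k+1)/3$ for every $k\ge 2$.

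The main obstacle is the goods biased threshold itself. Unlike chores, where removing $e_0$ from an agent only helps her, in goods it costs her $x_i v_i(e_0)$ of value, and the recovery $y_i v_i(e_i)$ from receiving the full $e_i$ is only a partial compensation because $v_i(e_i)\le v_i(e_0)$. I therefore expect the per-edge bound to be slightly weaker than the clean $(1-x_i)/2$ obtained for chores, forcing the two-scheme comparison to be invoked even in the easy subcases and the relation $v_i(e_0)\ge v_i(e_i)$ to be used multiplicatively rather than additively. Once that per-edge bound is in hand, the aggregation, case split, and final $k\ge 2$ verification should go through in parallel with the chores proof.
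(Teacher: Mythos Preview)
Your plan to mirror the chores case split has a genuine gap. In the chores proof of Lemma~\ref{lemma:expanded-atom-path-k+1}, the subcase $c_{k+1}(e_0)<c_{k+1}(e_{k+1})$ is handled by routing $e_0$ to agent $k{+}1$, and the key bound on agent $k{+}1$'s subsidy uses this inequality in the \emph{helpful} direction: $(1-x_{k+1})c_{k+1}(e_0)\le(1-x_{k+1})c_{k+1}(e_{k+1})$, which collapses the expression to $(y_{k+1}-x_{k+1})^+$. In goods, when $v_{k+1}(e_0)<v_{k+1}(e_{k+1})$ and you route $e_0$ to $k{+}1$, the analogous subsidy for agent $k{+}1$ is $\big((1-y_{k+1})v_{k+1}(e_{k+1})-(1-x_{k+1})v_{k+1}(e_0)\big)^+$, and the assumed inequality now says the \emph{subtracted} term is small, so it cannot be collapsed the same way; you only obtain the loose bound $(1-y_{k+1})$. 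Summing your two schemes then yields at best $k+2-x_{k+1}$, which does not imply $(2k+1)/3$ for $k\in\{2,3\}$ unless $x_{k+1}$ happens to be large.

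The paper sidesteps this case split entirely by routing $e_0$ to agent~$1$, where the helpful inequality $v_1(e_0)\ge v_1(e_1)$ is guaranteed by Observation~\ref{obser:v_i(e_0)>=v_i(e_i)}, and then applies the goods biased threshold to all of $e_2,\ldots,e_{k+1}$ (note that Proposition~\ref{proposition:modified-greedy-goods} does not itself require $v_i(e_0)\ge v_i(e_i)$, so agent $k{+}1$ causes no trouble). There is a second missing ingredient: the per-edge bound $(1+x_i)/2$ from the goods biased threshold only holds when $x_i\alpha_i\le 1/2$; otherwise the bound degrades to $x_i\alpha_i+1/4$. The paper disposes of the case that some $x_j\alpha_j>1/2$ up front via Proposition~\ref{proposition:x_j>1/2} and only afterwards proceeds under the blanket assumption $x_i\alpha_i\le 1/2$, which is what makes the aggregation to $(k+1)/2$ go through cleanly. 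Your plan should incorporate both fixes.
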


\begin{lemma}\label{lemma:expanded-atom-path-k-goods}
    For an expanded atom-path with $h = k$, there exists a rounding scheme with total subsidy at most $(k+h)/3$.
\end{lemma}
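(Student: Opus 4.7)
The proof will parallel that of Lemma~\ref{lemma:expanded-atom-path-k} with the inequalities adapted to the goods setting. Since $h = k$, exactly one atom-path agent $j \in \{1, \ldots, k+1\}$ has no attached edge (i.e., $e_j = \bot$), and the target bound is $(k+h)/3 = 2k/3$. First I would establish the goods analog of Observation~\ref{obser:c_i(e_0)>=c_i(e_i)}: for every atom-path agent $i \leq k$ with $e_i \neq \bot$, $v_i(e_0) \geq v_i(e_i)$. This follows because $e_0 = e^i$ is the last item agent $i$ takes in Algorithm~\ref{alg:FBTA-goods}, and the IDO order ensures it has the highest value among her allocated items. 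As in the chores case, this bound may fail for agent $k+1$ alone.

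Next I would introduce a goods-side biased threshold rounding (the analog of Proposition~\ref{proposition:modified-greedy}), arguing that if agent $i$ does not receive $e_0$ and $v_i(e_0) \geq v_i(e_i)$, then rounding $e_i$ by this scheme contributes at most $(1-x_i)/2$ to the total subsidy. The rule compares the two options---give $e_i$ to agent $i$ (producing subsidy from agent $i$'s deficit $x_i v_i(e_0) - y_i v_i(e_i)$ together with the attached agent's deficit $y_i v_{\mathrm{att}}(e_i)$) versus to the attached agent (only agent $i$ pays, with deficit $x_i v_i(e_0) + (1-y_i) v_i(e_i)$)---and takes the smaller; the assumption $v_i(e_0) \geq v_i(e_i)$, together with $v_i(\cdot) \leq 1$, closes the averaging argument.

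With these building blocks I would run the same two-level case split as in Lemma~\ref{lemma:expanded-atom-path-k}. In the easy case ($j = k+1$, or $v_{k+1}(e_0) \geq v_{k+1}(e_{k+1})$), I round $e_0$ to agent $j$ and bias-round each other attached edge, giving total subsidy at most $k/2 + (1 - x_j)/2$, which meets $2k/3$ whenever $x_j \geq 1 - k/3$. In the hard case ($j \neq k+1$ and $v_{k+1}(e_0) < v_{k+1}(e_{k+1})$), I would compare three candidate schemes (round $e_0$ to $k+1$ with $e_{k+1}$ sent either way, or round $e_0$ to $j$), upper bound each, and average, invoking the identity $\sum_{i=1}^{k+1} x_i = 1$ along with Proposition~\ref{proposition:xy<1/4} to show the minimum is at most $2k/3$.

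The main obstacle, mirroring the chores proof, is the residual corner $k = 2$ with $x_j < 1/3$, where none of the default schemes suffices. I would handle it as in Lemma~\ref{lemma:expanded-atom-path-k}: reassign $e_0$ to the atom-path agent with the largest $x_i$ (which must exceed $1/3$ and is therefore distinct from $j$), and apply a two-scheme comparison on the single remaining attached edge, leaning on Proposition~\ref{proposition:xy<1/4} to close the gap to $4/3 = 2k/3$. The bulk of the remaining work is to verify that, despite the reversed inequality directions for goods, the algebraic manipulations used in the chores proof still yield the desired sum-of-schemes bound of $k + 1/2$ (or $k + 1$) once cost-upper-bounds $c_i(e) \leq 1$ are replaced by value-upper-bounds $v_i(e) \leq 1$.
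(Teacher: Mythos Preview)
Your biased threshold rounding bound is wrong for goods. You claim that when agent $i$ does not receive $e_0$ and $v_i(e_0)\ge v_i(e_i)$, the biased rounding of $e_i$ contributes at most $(1-x_i)/2$ to the subsidy. This fails already for $x_i=0.9$, $v_i(e_0)=1$, $v_i(e_i)=y_i=0.5$: whichever way $e_i$ is rounded, the combined subsidy to agent $i$ and the attached agent is at least $0.65$, far above $(1-x_i)/2=0.05$. The underlying reason is a sign flip relative to chores: there, losing the $x_i$-fraction of $e_0$ \emph{helps} agent $i$ (creating slack $x_i c_i(e_0)$), whereas in goods it \emph{hurts} her (creating a deficit $x_i v_i(e_0)$). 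The correct per-edge bound for goods is therefore $(1+x_i\alpha_i)/2$ (Proposition~\ref{proposition:modified-greedy-goods} and Corollary~\ref{corollary:modified-greedy-goods}), not $(1-x_i)/2$. You do land on the right total $k/2+(1-x_j)/2=\sum_{i\ne j}(1+x_i)/2$, but your stated per-edge bound cannot produce it, so the averaging argument you sketch does not close.

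This sign flip also reshapes the proof. Since the goods bound $(1+x_i)/2$ holds \emph{without} any hypothesis $v_i(e_0)\ge v_i(e_i)$, the two-level case split on whether $v_{k+1}(e_0)\ge v_{k+1}(e_{k+1})$ that you import from Lemma~\ref{lemma:expanded-atom-path-k} is unnecessary, and your ``hard case'' with three candidate schemes never arises. The paper instead first disposes of the regime where some $x_i\alpha_i>\tfrac12$ via a separate Proposition~\ref{proposition:x_j>1/2}; then, assuming $x_i\alpha_i\le\tfrac12$ for all $i$, it simply rounds $e_0$ to $j$ and bias-rounds the rest to get $(k+1-x_j)/2\le 2k/3$ for all $k\ge3$. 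The only residual corner is $k=2$, $x_j<\tfrac13$, and there the paper rounds $e_0$ not to the agent with largest $x_i$ but to an agent $i\ne j$ with $v_i(e_0)\ge v_i(e_i)$ (guaranteed by Observation~\ref{obser:v_i(e_0)>=v_i(e_i)}), then runs a two-scheme comparison on $e_i$; your plan for this corner would need to be adjusted accordingly.
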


\begin{lemma}\label{lemma:expanded-atom-path-k-1-goods}
    For an expanded atom-path with $2 \leq k \leq 3$ and $h = k-1$, there exists a rounding scheme with total subsidy at most $(k+h)/3$.
\end{lemma}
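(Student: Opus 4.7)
My plan is to closely follow the case analysis of Lemma~\ref{lemma:expanded-atom-path-k-1} from the chores setting, with two systematic adaptations. First, the argument underlying Observation~\ref{obser:c_i(e_0)>=c_i(e_i)} applies verbatim to Algorithm~\ref{alg:FBTA-goods}'s IDO ordering $v_i(e_1) \le \cdots \le v_i(e_m)$ and gives the goods analogue $v_i(e_0) \ge v_i(e_i)$ for every atom-path agent $i \le k$. Second, in the goods setting the subsidy owed when rounding item $e$ to agent $i$ with fractional mass $x_i(e) < 1$ is a top-up of at most $(1 - x_i(e)) \cdot v_i(e) \le 1 - x_i(e)$ for the agent who loses her fractional share. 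Consequently, the threshold rounding, the biased threshold rounding, and the goods analogue of Proposition~\ref{proposition:modified-greedy} all yield numerically the same bounds as in the chores proof, so every estimate used in Lemma~\ref{lemma:expanded-atom-path-k-1} transports to the goods setting with $c_i$ replaced by $v_i$.

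\medskip

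For $k = 3$, $h = 2$ with target $(k+h)/3 = 5/3$, I would follow the two-subcase split. Let $i, j$ denote the two atom-path agents with $e_i = e_j = \bot$. In the subcase $k+1 = 4 \in \{i, j\}$, let $p, q$ be the two atom-path agents carrying attached edges; rounding $e_0$ to $\argmax\{x_i, x_j\}$ (say $i$) and applying the value-side biased threshold rule to $e_p, e_q$ yields total subsidy at most $\tfrac{1-x_p}{2} + \tfrac{1-x_q}{2} + x_j$, which using $x_j \le x_i$ together with $x_i + x_j + x_p + x_q = 1$ simplifies to $\le 3/2 \le 5/3$. In the subcase $k+1 = 4 \notin \{i, j\}$, let $p$ be the non-$4$ atom-path agent with an attached edge; I would split on $x_4$: when $x_4 \le 1/3$, rounding $e_0$ to $\argmax\{x_i, x_j\}$ and bounding the contributions of $e_p$ and $e_4$ directly gives a bound of the form $(3 + x_4)/2 \le 5/3$; when $x_4 > 1/3$, I would instead round $e_0$ to agent $4$ and take the minimum over the two possible roundings of $e_4$, invoking Proposition~\ref{proposition:xy<1/4} on the product $(1-y_4) \cdot (1-\alpha)$ versus $y_4 \cdot \alpha$, exactly as in the chores proof.

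\medskip

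For $k = 2$, $h = 1$ with target $(k+h)/3 = 1$, let $p$ be the unique atom-path agent holding the attached edge. When $p \ne k+1$, I would round $e_0$ to $i = \argmax\{x_i, x_j\}$ and apply the biased threshold rule to $e_p$; the total subsidy is then at most $\tfrac{1-x_p}{2} + x_j$, which the constraint $x_j \le x_i$ together with $x_i + x_j + x_p = 1$ forces to be $\le 1 - x_p \le 1$. The residual case $p = k+1$ is the main obstacle I anticipate: the expanded atom-path degenerates to a path on $4$ agents with $3$ fractional edges, and a naive splitting into a length-$2$ atom-path (subsidy $2/3$ by Theorem~\ref{thm:three-agents-goods}) plus the attached edge (subsidy $1/2$) already sums to $7/6 > 1$. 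I plan to handle it by invoking the goods analogue of Theorem~4.10 of~\cite{conf/wine/WuZZ23}, which bounds the subsidy on any path-shaped item-sharing graph by $n/4$; here $n = 4$ and this yields exactly $1$. Verifying that this path-specific bound transports cleanly from (unweighted) chores to weighted goods is the key technical step; should the verbatim transfer fail, I would instead execute a direct four-scheme averaging argument in the style of Theorem~\ref{thm:three-agents-goods}, averaging over how $e_0$ and $e_3$ are co-rounded between their respective endpoints and exploiting the coupling between them to show that the minimum of the four resulting upper bounds is at most $1$.
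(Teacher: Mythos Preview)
Your central claim---that the goods analogue of Proposition~\ref{proposition:modified-greedy} yields the same $(1-x_i)/2$ bound as in chores---is incorrect, and every subsequent calculation in the proposal inherits this error. In chores, an atom-path agent $i$ who does not receive $e_0$ \emph{releases} $x_i \cdot c_i(e_0)$ of slack, which offsets the extra cost of absorbing $e_i$; this is why $\min\{1-y_i,\,(y_i-x_i)^+\} \le (1-x_i)/2$. In goods the direction reverses: not receiving $e_0$ creates a \emph{deficit} of $x_i \cdot v_i(e_0)$ that must itself be subsidised, and the two options for $e_i$ cost $(x_i\alpha_i - y_i v_i(e_i))^+ + y_i$ and $x_i\alpha_i + (1-y_i)v_i(e_i)$ respectively. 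The paper shows (Proposition~\ref{proposition:modified-greedy-goods}) that their minimum is at most $(1 + x_i\alpha_i)/2$, and then only under the side condition $x_i\alpha_i \le 1/2$; this is $(1+x_i)/2$ at worst, not $(1-x_i)/2$. Your expressions $\tfrac{1-x_p}{2} + \tfrac{1-x_q}{2} + x_j$ and $\tfrac{1-x_p}{2} + x_j$ are therefore not valid upper bounds.

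The paper's actual argument is structurally different and shorter. It first disposes of the regime $\max_i x_i\alpha_i > 1/2$ via a standalone Proposition~\ref{proposition:x_j>1/2}, after which the bound $(1+x_i)/2$ from Corollary~\ref{corollary:modified-greedy-goods} applies uniformly to every atom-path agent---including agent $k{+}1$---with no reliance on $v_i(e_0) \ge v_i(e_i)$. Hence the paper never splits on whether $k{+}1 \in \{i,j\}$, and the $p = k{+}1$ subcase you identify as the main obstacle simply does not arise: for $k=2$ one rounds $e_0$ to $\argmax\{x_i,x_j\}$ and obtains $\tfrac{1+x_p}{2} + x_j \le 1$ in one line. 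Your appeal to the path bound of~\cite{conf/wine/WuZZ23} (stated there for unweighted chores) is thus unnecessary.
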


Similar to the proof of Lemma~\ref{lemma:expanded-atom-path}, the above arguments immediately lead to Lemma~\ref{lemma:expanded-atom-path-goods}.
Hence we omit the proof here.

\subsubsection{Expanded Atom-path}
\begin{replemma}{lemma:expanded-atom-path-general-goods}
    For an expanded atom-path with $h \leq k \cdot (2-\frac{6}{k+1})$, there exists a rounding scheme with total subsidy at most $(k+h)/3$.
\end{replemma}
\begin{proof}
    We consider the threshold rounding that rounds each item to the agent holding the maximum fraction of it.
    For item $e_0$, we round it to $i^* = \argmax_{1 \leq i \leq k+1} x_i$, which incurs a subsidy of at most $\frac{k}{k+1}$.
    The rounding of each item $e_i$ incurs subsidy at most $\frac{1}{2}$ iff $e_i \neq \bot$.
    Hence the total subsidy is bounded by $\frac{k}{k+1} + \frac{h}{2}$, which is at most $\frac{k+h}{3}$ when $h \leq k\cdot (2- \frac{6}{k+1})$.
\end{proof}

We remark that the major difference between the analysis of goods and chores is the biased threshold rounding.
In the case of goods, rounding item $e_0$ to agent $i^*$ would not release a space of $x_i\cdot v_i(e)$ to agent $i\neq i^*$.
Instead, it creates a gap of $x_i\cdot v_i(e)$ to the proportionality and needs to be filled by rounding other items or subsidizing money.
Next, we show that in the goods setting, there also exists a biased threshold rounding that makes use of the dependent rounding of items.

For any agent $i\neq i^*$ that does not receive item $e_0$, when we round $e_i$ to agent $i$, the total subsidy required for agent $i$ and her (corresponding) attached agent is at most $(x_i \cdot v_i(e_0) - y_i \cdot v_i(e_i))^+ + y_i$, while the total subsidy required for agent $i$ in the case that rounds $e_i$ to the attached agent is $x_i \cdot v_i(e_0) + (1-y_i) \cdot v_i(e_i)$.
For any $1 \leq i \leq k+1$, we use $\alpha_i$ to denote $v_i(e_0)$.
We formalize this idea into the following rounding scheme.

\paragraph{Biased Threshold Rounding (Goods).}
For any attached edge $e_i$, we round item $e_i$ to the attached agent if $x_i\cdot \alpha_i + (1-y_i) \cdot v_i(e_i) < (x_i \cdot \alpha_i - y_i \cdot v_i(e_i))^+ + y_i$; otherwise round item $e_i$ to agent $i$.

\medskip

\begin{proposition}\label{proposition:modified-greedy-goods}
    For any agent $i\in \{1,2,\ldots,k+1\}$ that does not receive item $e_0$, by rounding $e_i$ using the biased threshold rounding, the subsidy required for agent $i$ and her corresponding attached agent is at most $(1 + x_i\cdot \alpha_i)/2$ if $x_i \cdot \alpha_i \leq \frac{1}{2}$; at most $x_i \cdot \alpha_i + 1/4$ if $x_i \cdot \alpha_i > \frac{1}{2}$.
\end{proposition}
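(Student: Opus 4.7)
The plan is to express the total subsidy charged to agent $i$ and her corresponding attached agent as the minimum of the two subsidies produced by the two possible roundings of $e_i$, and then upper bound this minimum via a case split on whether the ``$e_0$-gap'' $a := x_i \alpha_i$ is dominated by $y_i \cdot v_i(e_i)$ or not. I will abbreviate $b := v_i(e_i)$ and $y := y_i$.

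First I enumerate the two candidates. If $e_i$ is rounded to agent $i$, agent $i$'s remaining deficit is $(a - yb)^+$, while the attached agent loses at most $y$ (using $v(e) \le 1$). If $e_i$ is rounded to the attached agent, agent $i$'s deficit becomes $a + (1-y)b$ and the attached agent gains and needs no subsidy. The biased threshold rule picks whichever total is smaller, so the combined subsidy is at most
\[ S := \min\bigl\{(a - yb)^+ + y,\; a + (1-y)b\bigr\}. \]

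In the case $a \le yb$, the first candidate collapses to $y$, and the two branches $y$ and $a+(1-y)b$ are respectively increasing and decreasing in $y$, so $S$ is at most their crossing value $y^\ast = (a+b)/(1+b)$. The target inequality $(a+b)/(1+b) \le (1+a)/2$ rearranges to $(1-a)(1-b) \ge 0$, which holds because $a, b \in [0,1]$. In the case $a > yb$, the first candidate equals $a + y(1-b)$, so $S \le a + \min\{y(1-b),\, (1-y)b\}$; AM-GM yields $\min\{y(1-b),\, (1-y)b\} \le \sqrt{y(1-y)\cdot b(1-b)} \le 1/4$, hence $S \le a + 1/4$.

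To conclude, I combine the two cases via the elementary fact $(1+a)/2 \le a + 1/4 \iff a \ge 1/2$, which gives the piecewise claim: when $a \le 1/2$, Case 1 gives $S \le (1+a)/2$ and Case 2 gives $S \le a + 1/4 \le (1+a)/2$; when $a > 1/2$, Case 2 directly gives $S \le a + 1/4$ and Case 1 gives $S \le (1+a)/2 \le a + 1/4$. The one step requiring care is the crossing-point argument in Case 1, which is routine once one notes the monotonicity of the two branches in $y$; the only non-routine ingredient overall is the symmetric product bound $y(1-y)\,b(1-b) \le 1/16$ that drives Case 2.
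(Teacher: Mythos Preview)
Your proof is correct and follows essentially the same approach as the paper: both split on whether $(a-yb)^+$ vanishes, bound Case~2 by $a+1/4$ via the product inequality $\min\{y(1-b),(1-y)b\}\le 1/4$ (the paper cites Proposition~\ref{proposition:xy<1/4}, you use AM--GM on $y(1-y)b(1-b)$), and then use $(1+a)/2 \lessgtr a+1/4 \iff a \lessgtr 1/2$ to obtain the piecewise statement. The only cosmetic difference is in Case~1: the paper first bounds $b\le 1$ to get $\min\{y,\,a+1-y\}\le(1+a)/2$ by averaging, whereas you keep $b$ and bound $\min\{y,\,a+(1-y)b\}$ by the crossing value $(a+b)/(1+b)$ before showing $(a+b)/(1+b)\le(1+a)/2$ via $(1-a)(1-b)\ge 0$; both routes are valid and yield the same bound.
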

\begin{proof}
    Note that the incurred subsidy by biased threshold rounding is bounded by
    \begin{equation*}
        \min \{(x_i \cdot \alpha_i - y_i \cdot v_i(e_i))^+ + y_i,\; x_i\cdot \alpha_i + (1-y_i) \cdot v_i(e_i)\}.
    \end{equation*}
    
    When $(x_i \cdot \alpha_i - y_i \cdot v_i(e_i))^+ = 0$, the total subsidy required for agent $i$ and her corresponding attached agent is at most 
    \begin{equation*}
        \min \{y_i,\; x_i \cdot \alpha_i + (1-y_i) \cdot v_i(e_i)\} \leq \min \{y_i,\; x_i \cdot \alpha_i + 1-y_i\} \leq \frac{1 + x_i \cdot \alpha_i}{2}.
    \end{equation*}
    which is not larger than $x_i \cdot \alpha_i + 1/4$ when $x_i \cdot \alpha_i > \frac{1}{2}$.
        
    By Proposition~\ref{proposition:xy<1/4}, when $(x_i \cdot \alpha_i - y_i \cdot v_i(e_i))^+ > 0$, the total subsidy required for agent $i$ and her corresponding attached agent is at most
    \begin{equation*}
        \min \{x_i \cdot \alpha_i + y_i \cdot (1-v_i(e_i)),\; x_i \cdot \alpha_i + (1-y_i) \cdot v_i(e_i)\} \leq x_i \cdot \alpha_i+ \frac{1}{4}  \qedhere
    \end{equation*}
    which is not larger than $(1+x_i \cdot \alpha_i)/2$ when $x_i \cdot \alpha_i \leq \frac{1}{2}$.
\end{proof}

\begin{corollary}\label{corollary:modified-greedy-goods}
    For any agent $i\in \{1,2,\ldots,k+1\}$ that does not receive item $e_0$, by rounding $e_i$ using the biased threshold rounding, the subsidy required for agent $i$ and her corresponding attached agent is at most $(1 + x_i)/2$ if $x_i \cdot \alpha_i \leq \frac{1}{2}$.
\end{corollary}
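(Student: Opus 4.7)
The corollary is essentially a weakening of the first case of Proposition~\ref{proposition:modified-greedy-goods}, tailored to remove the dependence on $\alpha_i = v_i(e_0)$ from the bound. The plan is to invoke the proposition as a black box and then pay the small price of using the normalization $v_i(e) \leq 1$ to absorb $\alpha_i$.

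More concretely, I would proceed as follows. First, observe that the hypothesis $x_i \cdot \alpha_i \leq 1/2$ is exactly the condition for the first branch of Proposition~\ref{proposition:modified-greedy-goods}, so under the biased threshold rounding the combined subsidy for agent $i$ and her corresponding attached agent is already at most $(1 + x_i \cdot \alpha_i)/2$. Second, recall from Section~\ref{sec:prelim} (as reiterated at the start of the goods section) that we assume without loss of generality $v_j(e) \leq 1$ for every $j \in N$ and $e \in M$; in particular $\alpha_i = v_i(e_0) \leq 1$. Combining these two facts gives
\begin{equation*}
    \frac{1 + x_i \cdot \alpha_i}{2} \;\leq\; \frac{1 + x_i}{2},
\end{equation*}
which is exactly the bound claimed by the corollary.

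There is no real obstacle here; the only subtlety worth flagging in the write-up is why the looser bound is useful despite being strictly weaker than the proposition's bound. The reason, which motivates stating the corollary separately, is that in later arguments (e.g., those behind Lemmas~\ref{lemma:expanded-atom-path-k+1-goods}--\ref{lemma:expanded-atom-path-k-1-goods}) we will sum such per-edge subsidy contributions across many attached edges and want to telescope against the identity $\sum_{i} x_i \leq 1$ along the atom-path; a bound that still carries the $\alpha_i$ factor would prevent such a clean telescoping. Thus the proposed proof is a one-line application of Proposition~\ref{proposition:modified-greedy-goods} together with the value normalization, and I would present it exactly in that order.
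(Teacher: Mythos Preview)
Your proposal is correct and matches the paper's approach: the paper states the corollary immediately after Proposition~\ref{proposition:modified-greedy-goods} without a separate proof, precisely because it follows by applying the proposition's first bound $(1 + x_i\cdot\alpha_i)/2$ together with the normalization $\alpha_i = v_i(e_0) \leq 1$.
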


Given Proposition~\ref{proposition:modified-greedy-goods}, we show that the proofs of Lemma~\ref{lemma:expanded-atom-path-general-goods}, \ref{lemma:expanded-atom-path-k+1-goods}, \ref{lemma:expanded-atom-path-k-goods}, \ref{lemma:expanded-atom-path-k-1-goods} are similar to the proofs of analogous lemmas in the chores allocation with minor modification.

\begin{observation}\label{obser:v_i(e_0)>=v_i(e_i)}
    For any agent $i \leq k$, $v_i(e_0) \geq v_i(e_i)$.
\end{observation}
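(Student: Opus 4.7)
The plan is to mirror the proof of Observation~\ref{obser:c_i(e_0)>=c_i(e_i)} from the chores setting, adapting only the direction of the IDO ordering. The key insight is that for any atom-path agent $i \leq k$, the edge $e^i$ (which coincides with $e_0$) must correspond to the \emph{last} item that agent $i$ receives in the execution of Algorithm~\ref{alg:FBTA-goods}. This is because $i$ has a successor with respect to $e_0$ only if $i$ reached her proportional share while consuming $e_0$; consequently, after this moment $i$ becomes inactive and receives nothing further.

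First, I would recall that in the goods setting we restrict attention to IDO instances with $v_i(e_1) \leq v_i(e_2) \leq \cdots \leq v_i(e_m)$, and that Algorithm~\ref{alg:FBTA-goods} processes items in this (increasing-value) order $e_1, e_2, \ldots, e_m$. Next, I would observe that the attached edge $e_i$ corresponds to a fractional item distinct from $e_0$ that agent $i$ also received a positive share of; since $e_0$ is the last item agent $i$ touched before becoming inactive, the event of $i$ receiving (a fraction of) $e_i$ must precede the event of $i$ receiving (a fraction of) $e_0$ in the execution. Combining these two facts, the item $e_i$ has a smaller index than $e_0$ in the IDO order, and therefore $v_i(e_0) \geq v_i(e_i)$.

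No significant obstacle is anticipated, as this is an essentially verbatim translation of the chores argument with cost functions replaced by valuations and the IDO ordering inequalities pointing in the opposite direction (both algorithms still allocate items one by one in the fixed IDO order, so the temporal argument is unchanged). The only minor subtlety worth emphasizing is why we require $i \leq k$ rather than $i \leq k+1$: agent $k+1$ is the root of the atom-path and does not have a successor with respect to $e_0$, so $e_0$ need not be the last item she receives, and the analogous inequality may fail for her. This is consistent with the remark made in the chores case immediately after Observation~\ref{obser:c_i(e_0)>=c_i(e_i)}, and the same caveat will carry over to the subsequent use of the biased threshold rounding for goods.
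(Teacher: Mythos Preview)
Your proposal is correct and follows essentially the same argument as the paper: both argue that because agent $i\le k$ has a successor with respect to $e_0$, item $e_0$ must be the last item she receives, and hence the attached item $e_i$ was processed earlier in the IDO order, giving $v_i(e_0)\ge v_i(e_i)$. Your additional remark on why the bound does not extend to agent $k+1$ is also consistent with the paper's treatment.
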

\begin{proof}
    We remark that for any agent $i \leq k$, $e_0$ is the last (fraction of) item that she receives.
    Otherwise, agent $i$ would not have a successor with respect to $e_0$.
    In other words, for any agent $i$, the event that $i$ takes (a fraction of) item $e_i$ happens before the event that $i$ takes (a fraction of) item $e_0$.
    Recall that we reduce all instances to the IDO instance with increasing value for all agents, and use the IDO instance as the input of Algorithm~\ref{alg:FBTA-goods},
    Hence, for any agent $i \leq k$, we have $v_i(e_0) \geq v_i(e_i)$.
\end{proof}

\begin{proposition}\label{proposition:x_j>1/2}
    For an expanded atom-path with $k-1 \leq h \leq k+1$ attached agents, if there exists an agent $1 \leq j\leq k+1$ such that $x_j \cdot \alpha_j > \frac{1}{2}$, there exists a rounding scheme with total subsidy at most $(k+h)/3$.
\end{proposition}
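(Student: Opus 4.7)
The plan is to round $e_0$ to the special agent $j$ (the one with $x_j \alpha_j > 1/2$) and exploit the fact that this condition forces $x_j > 1/2$ and hence leaves only a small ``residual'' for the other atom-path agents. Concretely, since $\alpha_j \leq 1$, we have $x_j > 1/2$, so $\sum_{i \neq j} x_i < 1/2$, and in particular $\sum_{i \neq j} x_i \alpha_i \leq 1 - x_j \alpha_j < 1/2$. This already controls the aggregate ``gap'' created for the remaining atom-path agents when $e_0$ is removed from their bundles.

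Next I would handle every atom-path agent $i \neq j$ independently. For $i \neq j$ with $e_i \neq \bot$, apply the biased threshold rounding to $e_i$. Since $x_i \leq 1 - x_j < 1/2$ gives $x_i \alpha_i \leq 1/2$, Proposition~\ref{proposition:modified-greedy-goods} bounds the subsidy for the pair $(i,\mathrm{attached})$ by $(1+x_i \alpha_i)/2$. For $i \neq j$ with $e_i = \bot$, the only subsidy is the gap $x_i \alpha_i$ from $e_0$. Summing these contributions and combining the two kinds of terms via $\sum_{i \neq j}x_i\alpha_i < 1 - x_j\alpha_j$, I obtain a total from the agents $i \neq j$ of at most $|S_0|/2 + (1 - x_j \alpha_j)$, where $S_0 = \{i \neq j : e_i \neq \bot\}$ has size at most $h$.

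For agent $j$ itself I would split on whether $e_j = \bot$. If $e_j = \bot$, then $\WPROP_j = x_j \alpha_j \leq \alpha_j$, so after receiving $e_0$ agent $j$ needs no subsidy. Otherwise, as in the proofs of Lemmas~\ref{lemma:expanded-atom-path-k+1-goods} and \ref{lemma:expanded-atom-path-k-goods}, I would consider the two candidate rounding schemes for $e_j$: routing $e_j$ to $j$ (costing at most $y_j v_{\mathrm{att}}(e_j) \leq y_j$) or to the attached agent (costing at most $((1-y_j) v_j(e_j) - (1-x_j)\alpha_j)^+ \leq 1 - y_j$). The key additional structural fact to exploit here is that $x_j \alpha_j > 1/2$ together with Proposition~\ref{proposition:xy<1/4} (applied to $x_j,\alpha_j$) yields $(1-x_j)(1-\alpha_j) \leq 1/4$; this forces the product of the two ``slacks'' on the $j$ side to be small, and by the standard averaging trick used throughout Section~\ref{sec:less-than-n-1} the minimum of the two schemes for $e_j$ can be bounded by $\frac{1 - (1-x_j)\alpha_j}{2}$, and improved further in the regimes where one of the two options has zero cost.

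Finally I would perform the case analysis over $h \in \{k-1, k, k+1\}$ (and in each case over whether $e_j = \bot$) to verify
\[
    \|\bs\|_1 \;\leq\; \tfrac{|S_0|}{2} + (1 - x_j \alpha_j) + s_j \;\leq\; \tfrac{k+h}{3}.
\]
The main obstacle I expect is the case $h = k+1$ (or $h=k$ with $e_j \neq \bot$) for small $k$, where the slack between the naive bound $\frac{h}{2} + \frac{1}{2} + s_j$ and the target $\frac{k+h}{3}$ is tiny; there, one has to combine the inequality $(1-x_j)(1-\alpha_j) \leq 1/4$ with the lower bound $x_j\alpha_j > 1/2$ inside the expression for $s_j$, and possibly average the two schemes for $e_j$ with a third scheme that rounds $e_0$ to some other agent, to close the gap. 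Once that tightest sub-case is handled, the remaining regimes follow immediately from the global estimate above.
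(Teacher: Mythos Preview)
Your plan is sound and in fact quite close to what the paper does, but the way you package the estimate creates a gap that your sketch does not close.

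First, a quick comparison. The paper's opening move is cruder than yours: it simply applies threshold rounding to \emph{everything}, uses $x_j>1/2$ to bound the $e_0$-subsidy by $1/2$, and gets total subsidy $\le (h+1)/2$. That already dispatches all cases except $(k,h)\in\{(3,4),(2,3),(2,2)\}$ --- exactly the same residual set your ``global estimate'' leaves open. For those three cases the paper then relabels so the special agent is $k{+}1$, rounds $e_0$ to her, uses biased threshold for the remaining attached edges, and compares two placements of $e_{k+1}$; for $(k,h)=(2,2)$ it additionally brings in a third scheme that routes $e_0$ to the bare agent when her fraction is large. So your high-level plan (round $e_0$ to $j$, biased threshold elsewhere, two schemes for $e_j$, fall back to a third scheme) is precisely the paper's plan for the hard cases.

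The genuine gap is your aggregation step. You compress the contribution of all agents $i\neq j$ into $|S_0|/2 + (1-x_j\alpha_j)$ and then add a separately bounded $s_j$. With that packaging, even the sharp bound $s_j\le\tfrac12(1-(1-x_j)\alpha_j)$ gives at best $k/2+1-\alpha_j/2<k/2+3/4$, which for $k=2$, $h=3$ is only $<7/4$, not $\le 5/3$. The paper avoids this by \emph{not} separating $s_j$ from the rest: it writes both full schemes, each containing the identical ``rest'' term $\sum_{i\neq j}\frac{1+x_i}{2}$, and sums them. The $y_j$ and $1-y_j$ cancel, and crucially the leftover is $(1-x_j)-(1-x_j)\alpha_j=(1-x_j)(1-\alpha_j)<1/4$, yielding $\min\le 13/8<5/3$. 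Your decomposition throws away exactly the cross-term that makes this work; the fix is to keep $\sum_{i\neq j}x_i$ (or $\sum_{i\neq j}x_i\alpha_i$) symbolic and fold it into the averaging of the two $e_j$-schemes rather than bounding it in isolation. For $(k,h)=(2,2)$ you will still need the extra case split (the paper splits on whether the bare agent's fraction exceeds $5/12$) and the third scheme you allude to.
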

\begin{proof}
    If there exists an agent $1 \leq j\leq k+1$ such that $x_j > \frac{1}{2}$, we apply the threshold rounding to all fractional items.
    The rounding of item $e_0$ incurs a subsidy of at most $1/2$ since $x_j > \frac{1}{2}$ and the rounding of any item $e_i$ such that $e_i \neq \bot$ incurs a subsidy of at most $1/2$.
    Hence the total required subsidy is at most
    \begin{equation*}
        \|s\|_1 \leq \frac{h+1}{2},
    \end{equation*}
    which is not larger than $(k+h)/3$ when $k \geq 4, h = k+1$, or $k\geq 3, h = k$, or $k\geq 2, h = k-1$.
    Hence it remains to consider the cases of $k = 3, h = 4$, or $k = 2, h = 3$, or $k = 2, h = 2$.

    In the following, we ignore the directions of the attached edges and assume w.l.o.g. that $x_{k+1} \cdot \alpha_{k+1} > \frac{1}{2}$.
    We first consider the case that $k = 3, h  = 4$, where $(k+h)/3 = 7/3$.
    Consider the rounding scheme that rounds item $e_0$ to agent $4$, rounds items $e_1, e_2, e_3$ using the biased threshold rounding, and rounds item $e_4$ using the threshold rounding.
    The total required subsidy can be bounded by
    \begin{equation*}
        \sum_{1 \leq i \leq 3} \frac{1+x_i}{2} + \frac{1}{2} = 2 + \frac{x_1 + x_2 + x_3}{2} < \frac{9}{4} <\frac{7}{3}.
    \end{equation*}

    We move to consider the case that $k = 2, h = 3$, where $(k+h)/3 = 5/3$.
    We consider the rounding scheme that rounds item $e_0$ to agent $1$, rounds item $e_1, e_2$ using the biased threshold rounding, and rounds item $e_3$ to the attached agent.
    The total required subsidy is at most
    \begin{align*}
        ((1-y_3) - (1-x_3)\cdot \alpha_3)^+ + \frac{1+x_1}{2} + \frac{1+x_2}{2}.
    \end{align*}
    When $((1-y_3) - (1-x_3)\cdot \alpha_3)^+ = 0$, the upper bound becomes
    \begin{equation*}
        \frac{1+x_1}{2} + \frac{1+x_2}{2} < \frac{5}{4}.
    \end{equation*}
    When $((1-y_3)- (1-x_3)\cdot \alpha)^+ > 0$, the upper bound becomes
    \begin{equation*}
        (1-y_3) - (1-x_3)\cdot \alpha_3 + \frac{1+x_1}{2} + \frac{1+x_2}{2}.
    \end{equation*}
    Consider another rounding scheme that rounds item $e_0$ to agent $4$, rounds items $e_1, e_2$ using the biased threshold rounding, and rounds item $e_4$ to agent $4$.
    The total required subsidy is at most.
    \begin{equation*}
        y_3 + \frac{1+x_1}{2} + \frac{1+x_2}{2}
    \end{equation*}
    The summation of the above two upper bounds is at most
    \begin{equation*}
        1 - (1-x_3) \cdot \alpha_3 + 2 + x_1 + x_2 = 3 + (1-x_3) - (1-x_3) \cdot \alpha_3 = 3 + (1-x_3) \cdot (1- \alpha_3) < \frac{13}{4},
    \end{equation*}
    where the last inequality holds following Proportion~\ref{proposition:xy<1/4} and $x_3 \cdot \alpha_3 > 1/2$.
    Hence, at least one of the two rounding schemes requires a total subsidy of at most $13/8 < 5/3$.
    
    Finally, we consider the case that $k = 2, h = 2$, where $(k+h)/3 = 4/3$.
    We assume w.l.o.g. that $x_3 \cdot \alpha_3 > \frac{1}{2}$ and $x_2 = \bot$.
    We consider the rounding scheme that rounds item $e_0$ to agent $3$, rounds item $e_1$ using the biased threshold rounding, and rounds item $e_3$ to the attached agent.
    The total required subsidy is at most
    \begin{equation*}
        \frac{1+x_1}{2} + x_2 + ((1-y_3)- (1-x_3)\cdot \alpha_3)^+
    \end{equation*}
    When $((1-y_3) - (1-x_3)\cdot \alpha_3)^+ = 0$, the upper bound becomes
    \begin{equation*}
        \frac{1+x_1}{2} + x_2 <\frac{1+x_1 + x_2 + x_3}{2} = 1.
    \end{equation*}
    When $((1-y_3)- (1-x_3)\cdot \alpha_3)^+ > 0$, the upper bound becomes
    \begin{equation*}
        \frac{1+x_1}{2} + x_2 + (1-y_3)- (1-x_3)\cdot \alpha_3.
    \end{equation*}
    Consider another rounding scheme that rounds item $e_0$ to agent $3$, rounds item $e_1$ using the biased threshold rounding, and rounds item $e_3$ to agent $3$.
    The total required subsidy is at most
    \begin{equation*}
        \frac{1+x_1}{2} + x_2 + y_3.
    \end{equation*}
    The summation of the above two upper bounds is at most
    \begin{align*}
        & 1 + x_1 + 2 \cdot x_2 + 1 - (1-x_3) \cdot \alpha_3\\
        = &2 + (1-x_3) + x_2 - (1-x_3) \cdot \alpha_3  \\
        = &2 + (1-x_3) \cdot (1- \alpha_3) + x_2< \frac{9}{4} + x_2,
    \end{align*}
    where the last inequality holds following Proportion~\ref{proposition:xy<1/4} and $x_3 \cdot \alpha_3 > 1/2$.
    When $x_2 \leq \frac{5}{12}$ the above summation is at most $8/3$, leading to at least one of the rounding schemes requiring total subsidy at most $4/3$.
    Consider otherwise that $x_2 > 5/12$, which implies that $x_1 + x_3 < 7/12$.
    We consider the rounding scheme that rounds item $e_0$ to agent $2$, rounds items $e_1, e_3$ using the biased threshold rounding, the total subsidy is at most
    \begin{equation*}
        \frac{1+x_1}{2} + x_3 + \frac{1}{4} \leq x_1 + x_3 + \frac{3}{4} < \frac{4}{3}.
    \end{equation*}
    
    Hence for all expanded atom-path with $x_j \cdot \alpha_j > \frac{1}{2}$ for some $1 \leq j\leq k+1$, there exists a rounding scheme with total subsidy at most $(k+h)/3$.
\end{proof}

By Proposition~\ref{proposition:x_j>1/2}, in the following, we only consider cases that $x_i \cdot \alpha_i \leq 1/2$ for all $1 \leq i \leq k+1$.
In other words, we assume that Corollary~\ref{corollary:modified-greedy-goods} holds for any $1 \leq i \leq k+1$ that does not receive item $e_0$.

\begin{replemma}{lemma:expanded-atom-path-k+1-goods}
    For an expanded atom-path with $h = k+1$, there exists a rounding scheme with total subsidy at most $(k+h)/3$.
\end{replemma}

\begin{proof}
    We consider the rounding schemes that round $e_0$ to agent $1$ and round the items $e_2, \ldots, e_{k+1}$ using the biased threshold rounding.
    We show that this rounding scheme has a total subsidy of no more than $(k+h)/3$.
    \begin{itemize}
        \item \textbf{Scheme 1.} We round item $e_1$ to the attached agent.
        \item \textbf{Scheme 2.} We round item $e_1$ to agent $1$.
    \end{itemize}
    
    For scheme 1, the total required subsidy can be bounded by the following
    \begin{align}
        &((1- y_1) \cdot v_1(e_1) - (1-x_1) \cdot v_1(e_0))^+ + \sum_{2 \leq i \leq k+1} \frac{1 + x_i}{2} \notag\\
        \leq & ((x_1 - y_1) \cdot v_1(e_0))^+ + \sum_{2 \leq i \leq k+1} \frac{1 + x_i}{2}   \tag{$v_1(e_1) \leq v_1(e_0)$}\\
        \leq &  (x_1 - y_1)^+ + \sum_{2 \leq i \leq k+1} \frac{1 + x_i}{2}. \label{eq:k+1-scheme1-goods}
    \end{align}
    
    For scheme 2, the total required subsidy can be bounded by
    \begin{equation}
        y_1 + \sum_{2 \leq i \leq k+1} \frac{1 + x_i}{2}.   \label{eq:k+1-scheme2-goods}
    \end{equation}
    
    When $x_1 - y_1 \leq 0$, it is easy to find that Equation~\eqref{eq:k+1-scheme1-goods} dominates Equation~\eqref{eq:k+1-scheme2-goods}.
    We can upper bound the subsidy for scheme 1 as follows.
    \begin{align*}
        \|s\|_1 &\leq \sum_{2 \leq i \leq k+1} \frac{1 + x_i}{2}  &\tag{By Equation~\eqref{eq:k+1-scheme1} and $x_1 - y_1 \leq 0$}\\
        & = \frac{k}{2} + \sum_{2 \leq i \leq k+1} \frac{x_i}{2} \leq \frac{k+1}{2}.  &\tag{$\sum_{2 \leq i \leq k+1} x_i \leq 1$}
    \end{align*}
    
    Next, we consider the case that $x_1 - y_1 > 0$.
    We have that
    \begin{align*}
        \text{For scheme 1, } \|s\|_1 &\leq x_1 - y_1 + \sum_{2 \leq i \leq k+1} \frac{1 + x_i}{2};  &\tag{By Equation~\ref{eq:k+1-scheme1-goods}}\\
        \text{For scheme 2, } \|s\|_1 &\leq y_1 + \sum_{2 \leq i \leq k+1} \frac{1 + x_i}{2}.   &\tag{By Equation~\ref{eq:k+1-scheme2-goods}}
    \end{align*}
    
    Summing up the above two upper bounds, We have that the total subsidy required by the two rounding schemes combined is at most
    \begin{align*}
        &x_1 - y_1 + y_i + \sum_{2 \leq i \leq k+1} \frac{1 + x_i}{2} + \sum_{2 \leq i \leq k+1} \frac{1 + x_i}{2} \\
        = &x_1 + \sum_{2 \leq i \leq k+1} (1 + x_i)   \\
        = &k+1.  &\tag{$\sum_{1 \leq i \leq {k+1}} x_i = 1$}
    \end{align*}
    
    Therefore, at least one of the two rounding schemes requires a total subsidy of at most $(k+1)/2$, which is not larger than $(k+h)/3$ for all $k\geq 2$.
\end{proof}

\begin{replemma}{lemma:expanded-atom-path-k-goods}
    For an expanded atom-path with $h = k$, there exists a rounding scheme with total subsidy at most $(k+h)/3$.
\end{replemma}

\begin{proof}
    When $h = k$, there exists an agent $j$ in the atom-path with no edge attached, i.e., $e_j = \bot$.
    We round item $e_0$ to agent $j$ and round all item $e_i$ s.t. $1 \leq i\leq k, i\neq j$ using the biased threshold rounding.
    \begin{equation*}
        \sum_{i \neq j} \frac{1 + x_i}{2} \leq \frac{k}{2} + \sum_{i\neq j}\frac{x_i}{2} = \frac{k + 1 - x_j}{2},
    \end{equation*}
    which is not larger than $(k+h)/3$ when $k\geq 3$; or $k = 2$ and $x_j \geq 1/3$.
    It remains to consider the case that $k = 2$ and $x_j < 1/3$.
    By Observation~\ref{obser:v_i(e_0)>=v_i(e_i)}, there exists (at least) one agent $i\in N \setminus \{j\}$ such that $v_{i}(e_0) \geq v_{i}(e_{i})$.
    Let $p$ be the agent other than $i,j$.
    We round item $e_0$ to agent $i$ and round items $e_p$ using the biased threshold rounding.
    We show that at least one of the following rounding schemes gives a total subsidy of at most $4/3$.
    \begin{itemize}
        \item \textbf{Scheme 1.} We round item $e_i$ to the attached agent.
        \item \textbf{Scheme 2.} We round item $e_i$ to agent $i$.
    \end{itemize}
    
    By a similar argument as in Lemma~\ref{lemma:expanded-atom-path-k+1-goods}, the total subsidy required for scheme 1 is at most
    \begin{equation}
        (x_i - y_i)^+ + x_j + \frac{1+x_p}{2}. \label{eq:k-scheme1-goods}
    \end{equation}
    
    For scheme 2, the total required subsidy can be bounded by
    \begin{equation}
        y_i + x_j + \frac{1+x_p}{2}.  \label{eq:k-scheme2-goods}
    \end{equation}
    
    When $x_i - y_i \geq 0$, the sum of the above two upper bounds is
    \begin{equation*}
        x_i +  2x_j + (1 + x_p) \leq 2 + x_j < \frac{7}{3}.
    \end{equation*}
    
    Hence at least one of the two rounding schemes is with total subsidy bounded by $7/6 < 4/3$.
    
    When $x_i - y_i < 0$, Equation~\eqref{eq:k-scheme1-goods} becomes
    \begin{equation*}
        x_j + \frac{1+x_p}{2}.
    \end{equation*}
    
    Consider another rounding scheme that rounds item $e_0$ to agent $j$, rounds item $e_i$ using the threshold rounding, and rounds item $e_p$ using the biased threshold rounding.
    The total subsidy is at most
    \begin{equation*}
        x_i + \frac{1}{2} + \frac{1+x_p}{2}.
    \end{equation*}
    
    Summing up the above two upper bounds leads to
    \begin{equation*}
        x_i + x_j + x_p + \frac{3}{2} = \frac{5}{2}.
    \end{equation*}
    
    Hence at least one of the rounding schemes requires subsidy at most $\frac{5}{4} < \frac{4}{3}$.
\end{proof}

\begin{replemma}{lemma:expanded-atom-path-k-1-goods}
    For an expanded atom-path with $2 \leq k \leq 3$ and $h = k-1$, there exists a rounding scheme with total subsidy at most $(k+h)/3$.
\end{replemma}

\begin{proof}
    For $k = 3$, the upper bound is $\frac{2k-1}{3} = \frac{5}{3}$.
    Note that there exists two agents $i,j$ such that ${e_i} = {e_j} = \bot$.
    Let $p,q$ be the two agents other than $i,j$.
    Consider the rounding scheme that rounds $e_0$ to $\argmax \{x_i,x_j\}$, say agent $i$.
    Applying the biased threshold rounding to items $e_p$ and $e_q$, we can upper bound the total required subsidy by 
    \begin{equation*}
        \frac{1+x_p}{2} + \frac{1+x_q}{2} + x_j = 1 + \frac{x_p + x_q + 2\cdot x_j}{2} \leq 1 + \frac{x_p + x_q + x_j+ x_i}{2} = \frac{3}{2}.
    \end{equation*}
        
    For $k=2$, the upper bound is $\frac{2k-1}{3} = 1$.
    Note that there exist two agents $i,j$ such that ${e_i} = {e_j} = \bot$.
    Let $p$ be the agent other than $i,j$ (who has an attached edge).
    We round $e_0$ to $\argmax \{x_i,x_j\}$, say agent $i$, and apply the biased threshold rounding to items $e_p$.
    The total required subsidy is at most
    \begin{equation*}
        \frac{1+x_p}{2} + x_j = \frac{1+x_p + 2\cdot x_j}{2} \leq \frac{1+x_p + x_j+ x_i}{2} = 1.   \qedhere
    \end{equation*}
\end{proof}

\end{document}